\newtheorem{theorem}{Theorem}[section]
\newtheorem{lemma}{Lemma}
\newtheorem{corollary}{Corollary}[theorem]
\newtheorem{remark}{Remark}[theorem]
\newtheorem{assumption}{Assumption}[section]
\renewcommand\epsilon{\varepsilon}
\numberwithin{equation}{section}
\numberwithin{lemma}{section}
\title{Spectral Targeting Estimation of $\lambda-$GARCH models}
\date{\today}
\author{Simon Hetland\thanks{I am grateful to Anders Rahbek, Rasmus S\o ndergaard Pedersen for their support, guidance and valuable comments. This paper also benefited from feedback from Patrick Th\"oni, Andreas Hetland and Anne Lundgaard Hansen. Part of this research was conducted while I visited Imperial College Business School, for this I would like to thank Paolo Zaffaroni for his hospitality. 
Address correspondence to Simon Hetland, Department of Economics, University of Copenhagen, \O ster Farigmagsgade 5, Building 26, 1353 Copenhagen, Denmark. Email: slh@econ.ku.dk }} 
\begin{document}
\maketitle

\abstract{
This paper presents a novel estimator of orthogonal GARCH models, which combines (eigenvalue and -vector) targeting estimation with stepwise (univariate) estimation. We denote this the spectral targeting estimator. This two-step estimator is consistent under finite second order moments, while asymptotic normality holds under finite fourth order moments. The estimator is especially well suited for modelling larger portfolios: we compare the empirical performance of the spectral targeting estimator to that of the quasi maximum likelihood estimator for five portfolios of 25 assets. The spectral targeting estimator dominates in terms of computational complexity, being up to 57 times faster in estimation, while both estimators produce similar out-of-sample forecasts, indicating that the spectral targeting estimator is well suited for high-dimensional empirical applications.
}
%
%
%
\\  \\
\noindent 
\textbf{Keywords:} Asymptotic theory, Multivariate GARCH, Variance targeting, Two-step estimation.
\\
\textbf{JEL classifications:} C32, C58.

\section{Introduction}
\label{sec:introduction}
Multivariate conditionally heteroskedastic (MGARCH) models are a popular tool for risk management and dynamic portfolio allocation, where forecasts of conditional covariance matrices play an important role. 
As well known, MGARCH models suffer from the ``curse of dimensionality'', making them difficult and time consuming to estimate for larger portfolios using quasi maximum likelihood (QML) techniques. Many practitioners and academics alike have therefore preferred using alternative estimation methods: Two popular choices are the variance targeting (VT) estimator and the equation-by-equation (EbE) estimator, see e.g. \citet{bauwens2006}.

In the context of orthogonal GARCH models, such as the conditional eigenvalue GARCH ($\lambda-$GARCH) model of \citet{Hetland2019}, we can combine the idea behind the two methods in what we denote the spectral targeting estimator (STE): By estimating the unconditional eigenvalues and -vectors using a sample moment estimator, the remainder of the parameters of the GARCH model may be estimated univariately in a stepwise manner, in which we target the unconditional eigenvalues and -vectors. This estimation procedure dramatically reduces the computational complexity of the optimization problem and speeds up numerical estimation compared to the QML estimator. 

In this paper, we derive the large-sample properties of this two step estimator. Numerical illustrations show that the estimator is superior to the QML estimator in cross-sections larger than 10 financial assets, being up to 57 times faster in estimation, while the out-of-sample forecasts from the QML and ST estimator are similar in portfolios of 25 assets.

In general, asymptotic theory of QML estimation in MGARCH models is well-understood (see e.g. \citet{francq2019} (chapter 11) for a review of existing theory), whereas less attention has been paid to alternative estimation methods. Large sample properties of the two step VT estimator are considered in \citet{Pedersen2014} and \citet{francq2014} for the BEKK model \citep{engle1995} and (extended) CCC model (\citet{bollerslev1990} and \citet{jeantheau1998}) respectively, while \citet{Francq2016} consider the two step EbE estimator for various MGARCH specifications. Both the VT and EbE estimators are two-step estimators, which are quite common in econometrics, see e.g. \citet{Newey1994}. The EbEE and VTE both aim at making high(er) dimensional estimation feasible, and do so in two distinct ways: The EbEE estimates univariate volatility models in a first step, and subsequently a (conditional) correlation dynamic in a second step, whereas the VTE estimates the unconditional covariance matrix using a moment estimator, followed by a joint (profiled) estimation of the volatility and covariance dynamics. The ST estimator is related to both, as we recover sample eigenvalues and -vectors from the unconditional covariance matrix, and estimate univariate dynamics for ``rotated'' (orthogonalized) returns in a second step. The resulting estimator is well-behaved and easily implemented: Because the $\lambda-$GARCH model is specified using the spectral decomposition, the (profiled) log-likelihood, conditional on the initial estimator, can be rewritten as a sum of orthogonal univariate log-likelihood functions, making stepwise estimation feasible. This also means that the ST estimator is, in terms of asymptotic theory, equivalent to the VT estimator of the $\lambda-$GARCH. Furthermore, by recovering the (constant conditional) eigenvectors we avoid having to parameterize the eigenvectors under the restriction of orthonormality.

Consistency of the ST estimator follows under mild conditions (finite second order moments), while asymptotic normality requires finite fourth order moments, which may be violated empirically in financial data. Hence, the estimator is useful in the sense that it helps circumvent numerical issues often associated with (estimation of) MGARCH processes, and will produce consistent estimates under mild assumptions. However, it may not be suitable for inference on the parameters of the MGARCH process because of the moment requirement. Both of these moment conditions for consistency and asymptotic normality stem from the first step estimator of the unconditional eigenvalues and eigenvectors, which uses the sample covariance estimator. This is in contrast to the joint QML estimator of the $\lambda-$GARCH, which requires fractional moments for consistency and finite $2+\delta$ moments, for $\delta>0$, for asymptotic normality \citep{Hetland2019}.
\\ \\ 
The remainder of the paper proceeds as follows: Section \ref{sec:OGARCH} introduces the $\lambda-$GARCH model and spectral targeting. Section \ref{sec:estimation} presents the two-step estimator and Section \ref{sec:asymptotics} presents novel asymptotic results and discuss practical considerations for implementation. Section \ref{sec:empirical_illustration} investigates the empirical properties of the estimator compared to the QML estimator. Finally, Section \ref{sec:conclusion} concludes. All proofs are relegated to the appendices. 

\subsection{Notation}
\label{sec:notation}
Some notation used throughout the paper. $\mathbb{R}$ denotes the real numbers, $\mathbb{N}$ the natural numbers, and $\mathbb{Z}$ the positive natural numbers. The absolute value of $a\in\mathbb{R}$ is denoted $|a|$. For $p,n\in \mathbb{Z}$, $I_p$ denotes the $(p\times p)$ identity matrix and $0_{n\times p}$ denotes a $n\times p$ matrix of zeros.
The vector $\text{vec}(A)$ stacks the columns of the matrix A. We use the ``diag'' operator in two ways: If $W$ is a $p\times 1$ vector, $\text{diag}(W)$ returns a $p\times p$ diagonal matrix with $W$ on the diagonal, and if $A$ is a $p\times p$ matrix, $\text{diag}(A)$ returns the diagonal of $A$ as a $p\times 1$ vector. The trace of a square matrix is denoted $\text{tr}(A)$, and the determinant $\det(A)$. Furthermore, denote by $\rho(A)$ the spectral radius of any square matrix $A$, i.e. $\rho(A)=\max\{|\tilde\lambda_i|: \ \tilde\lambda_i \text{ is an eigenvalue of A}\}.$ We use $||\cdot||$ as a matrix norm. Let $\odot$ denote the Hadamard product, with $A^{\odot2}= A\odot A$, and $A\otimes B$ denotes the Kronecker product between A and B, and note that $A^{\otimes2}=A\otimes A$. Elements of matrices or vectors are denoted by lower case letters, e.g. $a_{ij}$ is the $(i,j)'th$ element of the matrix $A$. We use three kinds of convergence of random variables, $\overset{a.s.}{\rightarrow}$ denotes almost sure convergence, $\overset{p}{\rightarrow}$ denotes convergence in probability and $\overset{D}{\rightarrow}$ denotes convergence in distribution.

\section{The $\lambda-$GARCH model}
\label{sec:OGARCH}
As in \citet{Hetland2019}, we focus on the class of O-GARCH models originally introduced by \citet{alexander1997}. The presented model has more general dynamics than the O-GARCH, allowing for eigenvalue-spillovers, and we denote this version of the model the Eigenvalue GARCH, or $\lambda-$GARCH for short. 

Let $X_t$ be a $p\times1$ vector of asset returns,
\begin{align}
	X_t& =H_t^{1/2}Z_t, 
	\label{eq:Xt}
\end{align}
where $t=1,\hdots, T$ and $Z_t$ is an $iid(0,I_p)$ sequence of random variables. $H_t^{1/2}=V\Lambda_t^{1/2}$ is the (asymmetric) matrix square root of the conditional covariance matrix, $H_t$ (following the literature on MGARCH models, see e.g. \citet{van2002} and \citet{lanne2007}).
, which is decomposed using the spectral theorem,
\begin{align}
	H_t & = V\Lambda_tV'.
\end{align}
$V=\begin{pmatrix}
	V_1 & V_2 & \hdots & V_p
\end{pmatrix}$ is an orthonormal matrix of eigenvectors, $VV'=I_p$, and $\Lambda_t$ is a diagonal matrix with time-varying eigenvalues, $\lambda_t$, on the diagonal,
\begin{align}
	\Lambda_t&=\text{diag}(\lambda_t).
\end{align}
The $p\times1$ vector of dynamic eigenvalues are assumed to follow a GARCH dynamic,
\begin{align}
	\lambda_t &= W+AY_{t-1}^{\odot2}+B\lambda_{t-1}, \label{eq:lambda} \ \ \ \ 
\end{align}
where $Y_t=V'X_t$ are ``rotated'' (or orthogonalized) returns: The orthonormal matrix $V$ rotates the returns $X_t$ to be linearly independent with conditional covariance $\Lambda_t$. To ensure that the covariance matrix is positive definite for all $t\in \mathbb{Z}$, we restrict $w_i>0$, $a_{ij}\geq0$, and $b_{ij}\geq0$ for $i,j=1,\hdots p$. Furthermore, to facilitate stepwise estimation, we restrict $B$ to be a diagonal matrix, letting the $i$'th lagged eigenvalue enter in equation $i$. 

By Lemma \ref{lemma:stationarity} and \ref{lemma:moments} in Appendix \ref{appendix:stat_erg_mom}, the stochastic process $\{X_t\}_{t\in\mathbb{Z}}$ can be initiated from the invariant distribution such that it is covariance stationary if and only if $\rho(A+B)<1$. If this is the case, the unconditional covariance matrix, $H = V(X_t) = E[X_tX_t']$, exists almost surely and is given by,
\begin{align}
	H & = V\text{diag}(\lambda) V', \\
	\lambda & = (I_p-A-B)^{-1}W,
	\label{eq:repar_w}
\end{align}
where $\lambda=E[\lambda_t]$ is the vector of unconditional eigenvalues. 

To obtain the covariance targeting (alternatively eigenvalue targeting) $\lambda-$GARCH, we re-parameterize the model by substituting \eqref{eq:repar_w} into \eqref{eq:lambda},
\begin{align}
	\lambda_t = (I_p-A-B)\lambda +A Y_{t-1}^{\odot2}+B\lambda_{t-1}.
\end{align}
This implies that the $i$th rotated return is driven by an augmented GARCH(1,1) with spill-overs from the other squared rotated returns,
\begin{align}
	y_{i,t}&=\lambda_{i,t}^{1/2}z_{i,t},	\label{eq:yi}\\
	\lambda_{i,t} & = w_i+\sum_{j=1}^pa_{ij}y_{j,t-1}^2+b_i\lambda_{i,t-1}, \label{eq:lambda_i}
\end{align}
where $w_i = (1-b_i)\lambda_i-\sum_{j=1}^pa_{ij}\lambda_j$ for $i=1,\hdots,p$, and $y_{i,t}=V_i'X_t$. This specification is motivated by generality: it seems restrictive to assume that the conditional variance of a component is not influenced by the past of other components, and allowing for spill-overs between assets may improve the model fit and out-of-sample performance.

\section{Spectral targeting estimation}
\label{sec:estimation}
While theory for classical joint QMLE of O-GARCH type models have been considered in \citet{Hetland2019}, we consider spectral targeting estimation (STE). The stepwise estimation procedure examined in this paper makes estimation and inference for the $\lambda-$GARCH feasible, even in large systems, as long as the time series dimension dominates the cross-sectional dimension (\citet{ledoit2004,ledoit2012}). 

Define $\upsilon = \text{vec}(V)$, i.e. the vector of stacked eigenvectors, such that
\begin{align}
\gamma = [\lambda', \upsilon']' \text{ and } \ 	\kappa^{(i)} = [a_{i1},\hdots,a_{ip},b_i]',
	\label{eq:kappa_i}
\end{align}
where $\gamma$ contain the eigenvalues and -vectors of the unconditional covariance matrix, $H$. Hence, $\gamma$ denote ``static'' and $\kappa^{(i)}$ the ``dynamic'' parameters of equation $i$, such that $\theta^{(i)}=[\gamma', \kappa^{(i)\prime}]'$ is the vector of parameters associated with the $i$th rotated return, $i=1,\hdots,p$, of size $e=p^2+2p+1$.  Likewise, define the parameter space $\Theta^{(i)} := \mathcal{L} \times \mathcal{V} \times \mathcal{K}^{(i)} \subset \mathbb{R}^{p}_{++}\times\mathbb{R}^{p^2}\times \mathbb{R}^{p+1}$ which is restricted such that $\rho(A+B)<1$, $W$ and $\lambda$ are element-wise strictly positive and eigenvectors are orthonormal, $VV'=I_p$, such that $H$ is positive definite and symmetric. 
The vector of all the parameters in the model is 
\begin{align*}
	\theta=[\gamma',\kappa^{(1)\prime},\hdots, \kappa^{(p)\prime}]',
\end{align*} 
which has $p(p+1)/2+p^2+p$ elements. To emphasize the dependence on the parameters in $\theta^{(i)}$, we restate the model for the $i'$th rotated return as,
\begin{align*}
	y_{i,t}(\gamma) & = \lambda_{i,t}(\gamma, \kappa^{(i)})z_{i,t} \\
	\lambda_{i,t}(\gamma,\kappa^{(i)}) & =  w_i+\sum_{j=1}^pa_{ij}y_{j,t-1}^2(\gamma)+b_i\lambda_{i,t-1}(\gamma,\kappa^{(i)}),
\end{align*}
which also explicitly states that the conditional eigenvalues are a non-linear function of the eigenvectors in $\gamma$, and linear in the dynamic parameters in $\kappa^{(i)}$. Furthermore, $H_t(\theta)=V\Lambda_t(\theta)V'$, such that the (constant conditional) eigenvectors only depend on $\gamma$, whereas the diagonal matrix of conditional eigenvalues depend on the full vector of parameters, $\theta$.

The STE consists of two steps: In the first step, we estimate $\gamma$ using a sample estimator. In the second step, the dynamic parameters of the model are estimated by univariate QMLE for each equation in \eqref{eq:yi}-\eqref{eq:lambda_i} for $i=1,\hdots,p$. This procedure yields the STE for equation $i$, denoted $\theta^{(i)}$, and based on the joint vector of parameters, $\theta$, the sequence of filtrated conditional covariance matrices, $H_t(\theta)$, can be recovered for $t=1,\hdots,T$.

\subsection{The moment estimator}
The first step of the STE utilizes the (strong) law of large numbers for strictly stationary and ergodic processes, and we estimate $H$ by the sample covariance matrix,
\begin{align}
	\text{vec}(\hat H) = \text{vec}\left(\frac{1}{T}\sum_{t=1}^TX_tX_t'\right).
	\label{eq:gamma}
\end{align}
If $X_t$ is covariance stationary and ergodic, $\hat H$ is a strongly consistent estimator for $H$ by the ergodic theorem. From $\hat H$ it is possible to recover the estimated eigenvalues, $\hat\lambda$, and estimated eigenvectors, $\hat V$, by solving the two equations, 
\begin{align}
& \left|H-\lambda I_p\right|=0,	 \\
& HV_i=\lambda_i V_i, \  \ \ \ \ \ \ \ \ \ \ i=1,\hdots,p,
\end{align} 
and under Assumption \ref{assumption1} below $\hat \lambda$ and $\hat \upsilon$ are strongly consistent estimators of $\lambda$ and $\upsilon$ respectively by the continuous mapping theorem. 

In applications, these two equations are solved using iterative procedures and for $H$ symmetric and positive definite, all eigenvalues are almost surely strictly positive. Notice however, that the eigenvalue decomposition is not unique: the spectrum of $H$ is unique only up to the ordering, and while the eigenspace of $H$ is unique the eigenvectors are not. Furthermore, eigenvalues may not be unique. We discuss this further in remark \ref{remark:identification_mom}.

\begin{remark}[Alternative first step estimator] \label{remark:alt_1_step}
Instead of estimating the eigenvalues and -vectors implicitly using the moment estimator of $H$, we can estimate them directly using an approach similar to that proposed by \citet{fan2008} and \citet{boswijk2011}, wherein $V$ is specified using rotation matrices,
\begin{align*}
	V(\phi)=\prod_{1\leq i< j \leq p}U_{ij}(\phi_{ij}),
\end{align*}
with $U_{ij}(\phi_{ij})$ a $p$-dimensional identity matrix apart from four elements: $(i,i)$ and $(j,j)$ are $\cos(\phi_{ij})$, $(i,j)$ and $(j,i)$ are $\sin(\phi_{ij})$ and $-\sin(\phi_{ij})$ respectively. $\phi$ is a $p(p-1)/2$ vector containing the rotation parameters, $\phi_{ij}$. This parameterization ensures that $V(\phi)V'(\phi)=I_p$.
The eigenvectors and eigenvalues can then be estimated by numerically solving the minimization problem,
\begin{align*}
	\underset{[\phi',\lambda']'\in\mathcal{C}}{\arg\min} \ \ C_T(\phi, \lambda)
\end{align*}
where $\mathcal{C}$ is an appropriate parameter space and $C_T(\phi,\lambda)$ is a cost function, e.g. the Gaussian log-likelihood,
\begin{align*}
	C_T(\phi,\lambda) = \frac{1}{T} \sum_{t=1}^T \left (\log\det(\Lambda)+X_t'V(\phi)\Lambda^{-1}V'(\phi)X_t\right).
\end{align*}
The asymptotic theory for this estimator can be derived with relative ease, see e.g. \citet{Hetland2019} who parameterize the joint QMLE of the $\lambda-$GARCH in a similar fashion. One should however, keep in mind that the rotation parameters in $\phi$ are not uniquely identified unless we impose restrictions on the parameter space. A sufficient condition is $\phi_{ij}\in(0,\pi/2)$.
\end{remark}
The alternative first step estimator outlined in remark \ref{remark:alt_1_step} requires numerical optimization of a cost function, and may therefore run into numerical problems as $p$ increase, such as failure of a Newton-type optimization procedure to converge, or the possibility of ending up in a local maximum -- problems similar to those of the joint QML estimator. We therefore choose to work with the sample moment estimator as it has a closed form solution and is the preferred first step estimator in the variance targeting literature.

\subsection{The profiled maximum likelihood estimator}
In the second step of the STE, we consider the profiled quasi log-likelihood function based on the multivariate Gaussian distribution. The joint Gaussian log-likelihood of the model, conditional on a fixed $X_0$ and $H_0$, is,
\begin{align}
    L_T(\theta) & = \frac{1}{T}\sum_{t=1}^T \log\det(H_t(\theta))+X_t'H_t^{-1}(\theta)X_t \notag\\
    			& = \frac{1}{T}\sum_{t=1}^T\sum_{i=1}^p\left(\log(\lambda_{i,t}(\gamma,\kappa^{(i)}))+\frac{y_{i,t}^2(\gamma)}{\lambda_{i,t}(\gamma,\kappa^{(i)})}\right) \notag\\
    			& = \sum_{i=1}^p L_T^{(i)}(\gamma,\kappa^{(i)}), \label{eq:joint_loglik}
\end{align}
using $H_t^{(-1)}(\theta)=V\Lambda_t^{-1}(\theta)V'$, $\log\det(H_t(\theta))=\sum_{i=1}^p\log(\lambda_{i,t}(\gamma,\kappa^{(i)}))$. and $Y_t(\gamma)=V'X_t$.
That is, because the rotated returns are orthogonal, the log-likelihood function can be decomposed as the sum of $p$ univariate log-likelihood functions, each of which depend on $\theta^{(i)}=[\gamma', \kappa^{(i)'}]'$,
\begin{align}
	L_T^{(i)}(\gamma,\kappa^{(i)}) & = \frac{1}{T}\sum_{t=1}^T l_t^{(i)}(\gamma,\kappa^{(i)}), \\
	l_t^{(i)}(\gamma,\kappa^{(i)}) & = \log(\lambda_{i,t}(\gamma,\kappa^{(i)}))+\frac{y_{i,t}^2(\gamma)}{\lambda_{i,t}(\gamma,\kappa^{(i)})},	
\end{align}
where $y_{i,t}(\gamma)= V_i'X_t$ and $\lambda_{i,t}(\gamma,\kappa^{(i)})$ is given in \eqref{eq:lambda_i}. Conditional on $\gamma$, each of the $i$ univariate log-likelihood functions are orthogonal and do not depend on $\kappa^{(j)}$ for $j\neq i$. The parameters of the model can therefore be estimated sequentially, and we define the STE of $\kappa^{(i)}$ as,
\begin{align}
	\hat\kappa^{(i)}=\underset{\kappa^{(i)}\in \mathcal{K}^{(i)}}{\arg\min \ }L_{T}^{(i)}(\hat\gamma, \kappa^{(i)}),
	\label{eq:QMLE}
\end{align}
and the two-step procedure yields the STE of $\theta$,
\begin{align*}
	{\hat\theta} = [\hat\gamma',\hat\kappa^{(1)\prime},\hdots, \hat\kappa^{(p)\prime}]'.
\end{align*}

Similar to (quasi) maximum likelihood estimation of multivariate GARCH models, we use the Gaussian log-likelihood function, but we do not assume that the vector of innovations $Z_t$ are Gaussian, only that they are centered with unit variance: Even if the innovations are drawn from a different distribution, the results in Theorem \ref{theorem:consistency} and \ref{theorem:asympnorm} below still hold, as long as the assumptions are satisfied. 

Compared to joint QMLE, which estimates all $\frac{3}{2}(p^2+p)$ parameters jointly, the STE procedure vastly reduces the number of parameters estimated in each step: In the first step $p(p+1)/2$ parameters are estimated by method of moments and in the second step $p+1$ parameters are estimated for each rotated return, making the estimation procedure suitable in high-dimensional systems and less vulnerable to numerical problems.

\section{Large-sample properties of sequential variance targeting estimation}
\label{sec:asymptotics}
In this section we establish consistency and asymptotic normality of the STE and discuss practical considerations for implementation. A novelty of the asymptotic theory presented here is that we parameterize the moment estimator in terms of the unconditional eigenvalues and vectors, rather than the vectorized covariance matrix. In doing so, we apply the mean-value theorem on the eigenvectors, which otherwise do not have a closed form solution as a function of the unconditional covariance matrix. This, in conjunction with the continuous mapping theorem, allows us to study the asymptotic behavior of both the first step estimator, $\gamma$, and the joint parameter vector of the $i$'th rotated return, $\theta^{(i)}$. 

The two-step estimator is consistent under finite second order moments, and it has a limiting Gaussian distribution under the assumption of finite fourth order moments. Both of these moment conditions stem from the first step moment estimator, and are more strict that the moment conditions for the joint QML estimator (for which we need $E||X_t||^{2+\delta}<\infty$, $\delta>0$, see Theorem 3.3 in \citet{Hetland2019}). These results are novel and extend the existing literature on targeting and stepwise estimation, see e.g. \citet{francq2014}, \citet{Pedersen2014} and \citet{Francq2016}. All proofs are relegated to Appendix \ref{appendix:proofs}. 

Before discussing the asymptotic properties in detail, we make the following assumptions. First, we assume that the process is covariance stationary and ergodic.
\begin{assumption}
	The process $\{X_t\}_{t\in \mathbb{Z}}$ is strictly stationary, ergodic and has finite second order moments. 
	\label{assumption1} 
\end{assumption}
Furthermore, we need the following assumption on the algebraic multiplicity of the eigenvalues.
\begin{assumption}
    The characteristic polynomial of the unconditional covariance matrix, $H$, has an algebraic multiplicity of 1.
	\label{assumption1.5new} 
\end{assumption}
We also assume that the dynamic parameters of the model are identified and that the true parameter vector is a subset of the parameter space.
\begin{assumption}
	The true parameter vector $\theta_0^{(i)}\in\Theta^{(i)}$, with $\Theta^{(i)}$ compact.  \label{assumption2}
\end{assumption}
\begin{assumption}
	For $\kappa^{(i)}\in \mathcal{K}^{(i)}$, if $\kappa^{(i)}\neq\kappa_0^{(i)}$, then 
$\lambda_{i,t}(\gamma_0,\kappa^{(i)})\neq \lambda_{i,t}(\gamma_0,\kappa_0^{(i)})$. \label{assumption3} 
\end{assumption}

These assumptions lead us to the following theorem on strong consistency of the ST estimator.
\begin{theorem}
	Under Assumptions \ref{assumption1}-\ref{assumption3}, as $T\rightarrow\infty$, the ST estimator is consistent,
\begin{align*}
	\hat\theta^{(i)}\overset{a.s}{\rightarrow}\theta_0^{(i)}. 
\end{align*}\label{theorem:consistency}
\end{theorem}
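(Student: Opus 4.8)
The plan is to prove the result in two stages mirroring the two-step structure of the estimator. First I would establish consistency of the moment-based first step, $\hat\gamma\overset{a.s.}{\rightarrow}\gamma_0$, and then consistency of the univariate quasi-likelihood second step, $\hat\kappa^{(i)}\overset{a.s.}{\rightarrow}\kappa_0^{(i)}$, treating the estimated $\hat\gamma$ as a plug-in. Since $\hat\theta^{(i)}=[\hat\gamma',\hat\kappa^{(i)\prime}]'$, combining the two convergences yields the claim.

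For the first step, Assumption \ref{assumption1} guarantees that $\{X_tX_t'\}$ is strictly stationary and ergodic with finite expectation, so the ergodic theorem gives $\hat H\overset{a.s.}{\rightarrow}H=E[X_tX_t']$, as already noted in the construction of \eqref{eq:gamma}. Assumption \ref{assumption1.5new} ensures the eigenvalues of $H$ are distinct, so on a neighbourhood of $H$ the ordered eigenvalues and the associated (sign-normalised) eigenvectors are continuous functions of the matrix entries; this is precisely what makes the ordering and sign ambiguities discussed after \eqref{eq:gamma} locally resolvable. The continuous mapping theorem then delivers $\hat\lambda\overset{a.s.}{\rightarrow}\lambda$ and $\hat\upsilon\overset{a.s.}{\rightarrow}\upsilon$, hence $\hat\gamma\overset{a.s.}{\rightarrow}\gamma_0$.

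For the second step I would follow the classical Wald-type consistency argument for extremum estimators. First, I would replace $L_T^{(i)}(\gamma,\kappa^{(i)})$, computed from fixed initial values $(X_0,H_0)$, by its stationary counterpart initialised from the invariant distribution, and show their difference vanishes uniformly over $\Theta^{(i)}$; this exploits the geometric decay induced by $b_i<1$ together with the lower bound $\lambda_{i,t}\geq w_i>0$ available on the compact space of Assumption \ref{assumption2}. Second, I would establish a uniform law of large numbers, $L_T^{(i)}(\gamma,\kappa^{(i)})\overset{a.s.}{\rightarrow}L^{(i)}(\gamma,\kappa^{(i)}):=E[l_t^{(i)}(\gamma,\kappa^{(i)})]$, uniformly on $\Theta^{(i)}$, using stationarity and ergodicity, continuity of $l_t^{(i)}$, and an integrable envelope: the bound $\lambda_{i,t}\geq w_i>0$ gives $y_{i,t}^2(\gamma)/\lambda_{i,t}\leq y_{i,t}^2(\gamma)/\underline w$, and since the eigenvectors have unit norm, $\sup_\gamma y_{i,t}^2(\gamma)\leq\|X_t\|^2$, which is integrable by Assumption \ref{assumption1}; the same moment condition controls the $\log\lambda_{i,t}$ term. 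Third, I would verify identification: the Gaussian quasi-criterion $E[\log\sigma^2+y^2/\sigma^2]$ is uniquely minimised at the correct conditional variance, so Assumption \ref{assumption3} makes $\kappa_0^{(i)}$ the unique minimiser of $\kappa^{(i)}\mapsto L^{(i)}(\gamma_0,\kappa^{(i)})$.

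The main obstacle is that the second-step criterion is profiled through the random first-step estimator $\hat\gamma$, so a uniform law over $\kappa^{(i)}$ at fixed $\gamma_0$ is not sufficient. The key is to prove uniform convergence of $L_T^{(i)}$ jointly over the full vector $(\gamma,\kappa^{(i)})\in\Theta^{(i)}$, together with continuity of the limit $L^{(i)}$ in $\gamma$. Because both the rotated returns $y_{i,t}(\gamma)=V_i'X_t$ and the implied intercept $w_i=(1-b_i)\lambda_i-\sum_{j=1}^p a_{ij}\lambda_j$ depend on $\gamma$, I must show this dependence is uniformly well-behaved on $\Theta^{(i)}$; distinctness of the eigenvalues (Assumption \ref{assumption1.5new}) keeps the eigenvector map continuous and bounded throughout. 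Once joint uniform convergence and continuity are in place, the convergence $\hat\gamma\overset{a.s.}{\rightarrow}\gamma_0$ combines with the standard argmin-continuity result to give $\hat\kappa^{(i)}\overset{a.s.}{\rightarrow}\kappa_0^{(i)}$, which together with the first step completes the proof of Theorem \ref{theorem:consistency}.
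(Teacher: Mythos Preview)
Your proposal is correct and follows essentially the same route as the paper: first-step consistency of $\hat\gamma$ via the ergodic theorem plus the continuous mapping theorem (using simplicity of the eigenvalues), and second-step consistency via (i) asymptotic irrelevance of the fixed initial values, (ii) a uniform law of large numbers for $l_t^{(i)}$ over the compact $\Theta^{(i)}$ with the integrable envelope $K(1+\|X_t\|^2)$, and (iii) unique identification of $\kappa_0^{(i)}$ through the Gaussian quasi-criterion. The only organisational difference is how the plug-in $\hat\gamma$ is absorbed: you propose joint uniform convergence over $(\gamma,\kappa^{(i)})$ together with continuity of the limit and an argmin-continuity argument, whereas the paper isolates this step in a separate lemma (its Lemma~\ref{lemma:con4}) that bounds $\sup_{\kappa^{(i)}}|L_T^{(i)}(\gamma_0,\kappa^{(i)})-L_{T,h}^{(i)}(\hat\gamma,\kappa^{(i)})|$ directly via a mean-value expansion in $\gamma$, and then runs a Pedersen--Rahbek style chain of $\epsilon/5$ inequalities to conclude $E[l_t^{(i)}(\gamma_0,\hat\kappa^{(i)})]<E[l_t^{(i)}(\gamma_0,\kappa_0^{(i)})]+\epsilon$. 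Your route is slightly more general in that it does not rely on differentiability in $\gamma$ at the consistency stage; the paper's route makes the dependence on $\hat\gamma-\gamma_0$ explicit, which dovetails with the normality proof. Substantively the two arguments coincide.
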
\vspace{-0.6cm}

Assumption \ref{assumption1} is in line with the literature for variance-targeting estimation, both in the univariate and multivariate case, see e.g. \citet{Pedersen2014} or \citet{francq2011,francq2014}, and is needed to ensure that the moment estimator converge to a well-defined unconditional covariance matrix for $T\rightarrow \infty$. 

Assumption \ref{assumption1.5new} is novel in the (variance) targeting literature and is needed for the first step estimator: We assume that all the unconditional eigenvalues are simple, i.e. that the characteristic polynomial of the unconditional covariance matrix has an algebraic multiplicity of one. 
This is needed for two reasons: First, in the case of repeated eigenvalues, the associated eigenvectors are not uniquely determined and the parameters in the first step estimator are not uniquely identified, and hence the first step estimator is not consistent. Second, it is a requirement for $\lambda$ and $\upsilon$ to be continuously differentiable (Theorem 1, \citet{magnus1985}), which is needed to apply the mean-value theorem when considering the asymptotic distribution of the estimator. 

Assumptions \ref{assumption2}-\ref{assumption3} are standard for multivariate GARCH models, see e.g. \citet{COMTE200361} or \citet{Hafner2009a}. 
Moreover, the normalization imposed on the first step estimator ensures that the eigenvalues and -vectors of the first step estimator are uniquely identified. Primitive conditions for the identification of the second step estimator can be found in e.g. \citet{francq2019} (chapter 10) and are also treated in \citet{Hetland2019}.

\begin{remark}[Identification of the first step estimator]\label{remark:identification_mom}
The eigenvalue decomposition used in the first step estimator is not uniquely defined: the ordering of the eigenvalues is not fixed and the sign of the eigenvectors is unidentified. We can, however, without a loss of generality, sort the eigenvalues in non-decreasing order and normalize the eigenvectors such that the first non-zero element of each eigenvector is positive. These two normalizations, along with Assumption \ref{assumption1.5new}, ensure that the eigenvalue decomposition is unique. Note, however, that any equivalent normalizations also suffice.
\end{remark}

Next, we show that the estimator is asymptotically normal. To do so, we need two additional assumptions on existence of moments and the true parameter vector.
\begin{assumption}
	The process $\{X_t\}_{t\in \mathbb{Z}}$ has finite fourth order moments, $E||X_t||^4<\infty$.  \label{assumption4}	
\end{assumption}
\begin{assumption}
	$\theta_0^{(i)}$ is in the interior of $\Theta^{(i)}$.  \label{assumption5} 
\end{assumption}

This leads us to the next theorem on asymptotic normality of the estimator for the $i$'th rotated return,
\begin{theorem} 
	Under Assumptions \ref{assumption1}-\ref{assumption5}, for $T\rightarrow\infty$
\begin{align*}
\sqrt{T}\left(\hat\theta^{(i)}-\theta_0^{(i)}\right)\overset{D}{\rightarrow}N\left(0,\Sigma_0^{(i)}\right),
\end{align*}
where $\Sigma^{(i)}$ is the asymptotic covariance matrix, given by,
\begin{align}
	\underset{e\times e}{\Sigma_0^{(i)}} & = 	
	\begin{pmatrix}
		I_{p(p+1)} & 0_{p(p+1)\times p+1} \\
		-(J_0^{(i)})^{-1}K_0^{(i)} & -(J_0^{(i)})^{-1}
	\end{pmatrix}
	\Omega_0^{(i)}
	\begin{pmatrix}
		I_{p(p+1)} & -(J_0^{(i)})^{-1}(K_0^{(i)})' \\
		0_{p+1\times p(p+1)} & -(J_0^{(i)})^{-1}
	\end{pmatrix}.
	\label{eq:asymp_var_maintext}
\end{align}
where ${J_0^{(i)}}$ and ${K_0^{(i)}}$  are defined in \eqref{eq:appendix_asympvar} and ${\Omega_0^{(i)}}$ is given in \eqref{eq:asymp_var}.
\label{theorem:asympnorm}
\end{theorem}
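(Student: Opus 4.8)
The plan is to treat $\hat\theta^{(i)}=[\hat\gamma',\hat\kappa^{(i)\prime}]'$ as a two-step M-estimator and to derive the limit law from a mean-value expansion of the second-step first-order conditions together with a joint central limit theorem for a stacked score. For the first step I would begin with $\sqrt T(\mathrm{vec}(\hat H)-\mathrm{vec}(H))=\tfrac{1}{\sqrt T}\sum_{t=1}^T(\mathrm{vec}(X_tX_t')-\mathrm{vec}(H))$, whose summands form a strictly stationary, ergodic, mean-zero sequence with finite variance under Assumption \ref{assumption4} ($E\|X_t\|^4<\infty$); a CLT for stationary ergodic sequences then gives asymptotic normality of $\sqrt T(\mathrm{vec}(\hat H)-\mathrm{vec}(H))$. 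Since $\hat\gamma$ is recovered from $\hat H$ through the eigenvalue/eigenvector map, which is continuously differentiable at $H$ precisely because the unconditional eigenvalues are simple (Assumption \ref{assumption1.5new}, via Theorem 1 of \citet{magnus1985}), I would apply the mean-value theorem componentwise to write $\hat\gamma-\gamma_0=G(\bar H)(\mathrm{vec}(\hat H)-\mathrm{vec}(H))$ for an intermediate matrix $\bar H$ on the segment joining $\hat H$ and $H$, with $G=\partial\gamma/\partial\mathrm{vec}(H)'$. Strong consistency of $\hat H$ forces $\bar H\to H$ a.s., and continuity of $G$ lets me replace $G(\bar H)$ by $G_0:=G(H)$ up to $o_p(1)$, yielding the asymptotically linear representation $\sqrt T(\hat\gamma-\gamma_0)=G_0\,\tfrac{1}{\sqrt T}\sum_t(\mathrm{vec}(X_tX_t')-\mathrm{vec}(H))+o_p(1)$ with stationary ergodic influence terms.

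For the second step I would expand the first-order condition $\partial_{\kappa^{(i)}}L_T^{(i)}(\hat\gamma,\hat\kappa^{(i)})=0$, valid since $\theta_0^{(i)}$ is interior (Assumption \ref{assumption5}), around $(\gamma_0,\kappa_0^{(i)})$ by the mean-value theorem,
\[
0=\partial_{\kappa}L_T^{(i)}(\gamma_0,\kappa_0^{(i)})+\widehat J_T\,(\hat\kappa^{(i)}-\kappa_0^{(i)})+\widehat K_T\,(\hat\gamma-\gamma_0),
\]
where $\widehat J_T$ and $\widehat K_T$ collect the second derivatives of $L_T^{(i)}$ in $\kappa^{(i)}$ and in $(\kappa^{(i)},\gamma)$ at mean-value points. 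By the ergodic theorem these converge a.s.\ to $J_0^{(i)}$ and $K_0^{(i)}$, and $J_0^{(i)}$ is invertible by the identification Assumption \ref{assumption3}. Solving gives
\[
\sqrt T(\hat\kappa^{(i)}-\kappa_0^{(i)})=-(J_0^{(i)})^{-1}\Big[\sqrt T\,\partial_{\kappa}L_T^{(i)}(\gamma_0,\kappa_0^{(i)})+K_0^{(i)}\sqrt T(\hat\gamma-\gamma_0)\Big]+o_p(1).
\]
Here I would also need the usual GARCH argument that replacing the filtered $\lambda_{i,t}$ (initialized at the fixed $X_0,H_0$) by its stationary version is negligible after $\sqrt T$-scaling, so that $\partial_\kappa L_T^{(i)}(\gamma_0,\kappa_0^{(i)})$ may be treated as $\tfrac1T\sum_t\partial_\kappa l_t^{(i)}$ evaluated on the stationary solution.

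Stacking the last display with the identity for $\sqrt T(\hat\gamma-\gamma_0)$ shows that $\sqrt T(\hat\theta^{(i)}-\theta_0^{(i)})$ equals the left factor matrix of \eqref{eq:asymp_var_maintext} applied to the stacked vector $\big(\sqrt T(\hat\gamma-\gamma_0)',\ \sqrt T\,\partial_{\kappa}L_T^{(i)}(\gamma_0,\kappa_0^{(i)})'\big)'$ plus $o_p(1)$. Both blocks of this vector are $\tfrac{1}{\sqrt T}$-normalized sums of measurable functions of the single stationary ergodic process $\{X_t\}$ --- the first through $G_0(\mathrm{vec}(X_tX_t')-\mathrm{vec}(H))$ and the second through the univariate scores $\partial_\kappa l_t^{(i)}(\gamma_0,\kappa_0^{(i)})$ --- so a joint CLT for stationary ergodic (martingale-difference) sequences delivers joint asymptotic normality with some covariance $\Omega_0^{(i)}$, whose off-diagonal blocks record the dependence between the two steps. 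Pre- and post-multiplying $\Omega_0^{(i)}$ by the factor matrix and its transpose then produces the sandwich $\Sigma_0^{(i)}$ in \eqref{eq:asymp_var_maintext}. I expect the main obstacle to be the first step: since the eigenvectors have no closed form in $H$, the linearization of $\hat\upsilon$ rests entirely on the simple-eigenvalue assumption and the Magnus differentiability result, and the delicate point is justifying that the intermediate-point Jacobian $G(\bar H)$ can be replaced by $G_0$ uniformly enough to preserve the $\sqrt T$-rate.
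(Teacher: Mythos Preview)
Your architecture is the paper's: mean-value expand the second-step score, stack with a linearized first-step estimator, apply a joint CLT, and sandwich. The use of \citet{magnus1985} for the eigenvector Jacobian, the initial-value negligibility, and the ULLN for $\widehat J_T,\widehat K_T$ all match. The gap is in the CLT for the first-step influence function.

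You take the first-block summands to be $G_0\bigl(\mathrm{vec}(X_tX_t')-\mathrm{vec}(H)\bigr)$ and appeal to ``a CLT for stationary ergodic sequences,'' later sharpened to ``stationary ergodic (martingale-difference) sequences.'' Neither works as stated. Stationarity, ergodicity, mean zero and finite variance are \emph{not} sufficient for a CLT; and $\mathrm{vec}(X_tX_t')-\mathrm{vec}(H)$ is not a martingale difference, since $E[\mathrm{vec}(X_tX_t')\mid\mathcal F_{t-1}]=\mathrm{vec}(H_{0,t})\neq\mathrm{vec}(H)$. Only the score block $\partial_\kappa l_t^{(i)}(\gamma_0,\kappa_0^{(i)})=-\lambda_{i,t}^{-1}(\partial_\kappa\lambda_{i,t})(z_{i,t}^2-1)$ is a genuine martingale difference. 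A joint martingale CLT therefore cannot be applied to your stacked vector, and you have not supplied an alternative (mixing, Gordin-type approximation, etc.). The paper resolves this by a step you are missing entirely: it recasts the eigenvalue recursion \eqref{eq:lambda} as a restricted BEKK$(p^2,1,1,1)$ for $\Lambda_t$, telescopes $\tfrac{1}{T}\sum_t\mathrm{vec}(\Lambda_{0,t})$ against $\tfrac{1}{T}\sum_t\mathrm{vec}(Y_{0,t}Y_{0,t}')$, and obtains
\[
\sqrt T\bigl(\mathrm{vec}(\hat H)-\mathrm{vec}(H_0)\bigr)=V_0^{\otimes2}(I_{p^2}-\tilde A_0-\tilde B_0)^{-1}(I_{p^2}-\tilde B_0)\,\frac{1}{\sqrt T}\sum_{t=1}^T(\Lambda_{0,t}^{1/2})^{\otimes2}\mathrm{vec}(Z_tZ_t'-I_p)+o_p(1),
\]
whose summands \emph{are} $\mathcal F_{t-1}$-martingale differences ($\Lambda_{0,t}$ is predictable and $E[Z_tZ_t'\mid\mathcal F_{t-1}]=I_p$). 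Only after this rewriting can Brown's martingale CLT be applied to the stacked vector, and only then does the explicit $\Omega_0^{(i)}$ of \eqref{eq:asymp_var} emerge. So the real obstacle in the first step is not the smoothness of the eigenvector map---which you handle correctly---but producing a martingale-difference representation for $\hat\gamma-\gamma_0$; without it your joint CLT step does not go through.
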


Assumption \ref{assumption4} is required to ensure that the first step estimator, $\sqrt{T}(\hat\gamma-\gamma_0)$, converges to a Gaussian distribution with a finite variance. This assumption is common in the variance targeting literature and is also needed when reparameterizing the moment estimator in terms of the spectral decomposition. In fact, the moment requirement is not needed in the probability analysis of the profiled log-likelihood function, but it simplifies the exposition. Lemma \ref{lemma:moments} in Appendix \ref{appendix:stat_erg_mom} can be used to check the moment condition in Assumption \ref{assumption4}. Based on the simulations included in Appendix \ref{appendix:simulation}, the moment conditions for consistency and asymptotic normality are sufficient and necessary. Assumption \ref{assumption5} is standard in the literature, and is a technical requirement to ensure that the mean-value theorem can be applied on the optimality condition for the profiled log-likelihood functions.

In the derivation of the asymptotic distribution of the first step (and consequently the second step) estimator, we restate the moment estimator as the average of the conditional eigenvalues. However, as $\text{vec}(\hat H)=V^{\otimes2}\text{vec}(\frac{1}{T}\sum_{t=1}^T\Lambda_{0,t}^{1/2}Z_tZ_t'\Lambda_{0,t}^{1/2})$ is in terms of $\Lambda_t$, and not the vectorized eigenvalues, $\lambda_t$, we restate the dynamics of the conditional eigenvalues in \eqref{eq:lambda} as a (restricted) BEKK$(p^2,1,1,1)$ model for $\Lambda_t$. This parametrization is present in $\Omega_{0}^{(i)}$ in \eqref{eq:asymp_var}. In doing so, $\hat\gamma-\gamma_0$ is a martingale difference, allowing us to use a central limit theorem on $\sqrt{T}\begin{pmatrix}
	\hat\gamma-\gamma_0 & \frac{\partial L_T^{(i)}(\theta_0^{(i)})}{\partial \kappa^{(i)}} 
\end{pmatrix}'$
jointly to show normality and find the expression for $\Omega^{(i)}_0$. The proof of joint normality of $\sqrt{T}(\hat\theta^{(i)}-\theta^{(i)}_0)$ applies the mean-value theorem on the optimality condition of the second step estimator, stacked with the moment estimator from step one, and lemmata \ref{lemma:norm1}-\ref{lemma:norm5} in Appendix \ref{appendix:lemmas} verify that the mean-value theorem can be applied. 

\begin{remark}[Fixed initial values]
	Assumption \ref{assumption1} assumes that the process is strictly stationary, implying that the process $\{X_t\}_{t\in\mathbb{Z}}$ is initiated in the invariant distribution or in the infinite past.  In practice the observed process is initiated in some fixed values, $X_0$ and $H_0$, which by definition makes the process non-stationary. However, the appendix verifies that the choice of initial values are asymptotically irrelevant for both consistency and asymptotic normality of the estimator, see lemmata \ref{lemma:con4} and \ref{lemma:norm5}.
\end{remark}
The model as presented in \eqref{eq:Xt}-\eqref{eq:lambda} restricts the $B$ matrix to facilitate stepwise estimation. However, in many applications, practitioners prefer the ``diagonal'' specification, in which both $A$ and $B$ are diagonal, this restriction is also feasible in terms of estimation, as discussed in the following remark. 

\begin{remark}[Diagonal model]
 	The estimator is still consistent and asymptotically normal in the case of a diagonal $A$ matrix, such that the vector of dynamic parameters is $\kappa^{(i)}=[a_i,b_i]'$. Hence, by Theorem \ref{theorem:asympnorm}, Assumption \ref{assumption4} and Lemma \ref{lemma:moments} implies that for $A$ and $B$ diagonal matrices, a sufficient and necessary condition for finite fourth order moments is, $\max(b_i^2+E[z_{i,t}^4]a_{i}+2a_ib_i)<1$ for $i=1,\hdots,p$.
\end{remark}

In applications, the asymptotic variance matrix for the stepwise estimator may be approximated using plug-in sample estimators. That is,
\begin{align}
	\underset{p+1\times p+1}{\hat J_T^{(i)}} & = \frac{1}{T}\sum_{t=1}^T\frac{\partial^2 l_t^{(i)}(\theta^{(i)})}{\partial\kappa^{(i)}\partial\kappa^{(i)'}}\Bigg\vert_{\theta^{(i)}=\hat\theta^{(i)}}, \ \ \ \ \ \ 
	\underset{p+1\times p^2+p}{\hat K_T^{(i)}}  = \frac{1}{T}\sum_{t=1}^T\frac{\partial^2 l_t^{(i)}(\theta^{(i)})}{\partial\kappa^{(i)}\partial\gamma'}\Bigg\vert_{\theta^{(i)}=\hat\theta^{(i)}}, \label{eq:num_approx1}\\
	\underset{e\times e}{\hat\Omega_T^{(i)}} & = \frac{1}{T}\sum_{t=1}^T \hat\omega_t\hat\omega_t'
	\ \ \ \text{ where } \ \ \ 
	\underset{e\times1}{\hat\omega_t} := 
	\begin{pmatrix}
		&D \hat V^{\otimes2}\left(\text{vec}(X_tX_t)-\text{vec}(\hat H)\right) \\
		&\hat V_1'\otimes (\hat\lambda_1 I_p-\hat H)^+\left(\text{vec}(X_tX_t)-\text{vec}(\hat H)\right) \\
		&\vdots \\
		&\hat V_p'\otimes (\hat\lambda_p I_p-\hat H)^+\left(\text{vec}(X_tX_t)-\text{vec}(\hat H)\right) \\
		&\frac{\partial l_t^{(i)}(\theta^{(i)})}{\partial \kappa^{(i)}}\Big\vert_{\theta^{(i)}=\hat\theta^{(i)}}
	\end{pmatrix}
	\label{eq:num_approx2}
\end{align}
where $(\hat\lambda_i I_p-\hat H)^+$ denotes the Penrose-Moore pseudo-inverse of $\hat\lambda_i I_p-\hat H$ for $i=1,\hdots,p$ and $D$ is defined in Lemma \ref{lemma:norm0}. The expressions in \eqref{eq:num_approx1}-\eqref{eq:num_approx2} converge almost surely to their population counterparts due to the ergodic theorem. Note that we may substitute the estimators by their true value, as we have established strong consistency of $\theta^{(i)}$. This makes estimation of the asymptotic covariance matrix only slightly more cumbersome than that of the well-known ``sandwich'' covariance matrix estimator known from joint QMLE.

Once we know the asymptotic distribution of $\hat\theta^{(i)}$, it is possible to derive the asymptotic distribution of the intercept term, $w_i$, in the original model in \eqref{eq:lambda_i} using the delta method.

\begin{corollary}[Limiting distribution of the intercept]
Under Assumption \ref{assumption1}-\ref{assumption5}, for $T\rightarrow \infty$,
\begin{align*}
	& \sqrt{T}\left(\hat w_i-w_{0,i}\right)\overset{D}{\rightarrow}N(0,\Phi_0\Sigma_0^{(i)}\Phi_0'),
\end{align*}
for $i=1,\hdots,p$ with $\Sigma_0^{(i)}$ given in \eqref{eq:asymp_var_maintext} and,
\begin{align}
	\underset{1\times e}{\Phi_0} = 
	\begin{pmatrix}
		1-b_{0,i}\mathbb{I}\{\lambda_{0,i}\}-\sum_{j=1}^p a_{0,ij}, &
		0_{p^2\times 1}, &
		-\lambda_0, &
		-\lambda_{0,i}
	\end{pmatrix},
\end{align}
where $\mathbb{I}\{\lambda_i\}$ is a $p\times 1$ vector of zeros, apart from a $1$ in the $i$'th row.
\label{corollary:dist_w_l}
\end{corollary}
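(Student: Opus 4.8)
The plan is to write the intercept as an explicit smooth function of the full parameter vector $\theta^{(i)}$ and then apply the delta method on top of the joint asymptotic normality already established in Theorem \ref{theorem:asympnorm}. From the reparameterization \eqref{eq:repar_w} and the definition of $w_i$ following \eqref{eq:lambda_i}, I would record the intercept of the $i$'th equation as
\begin{align*}
	w_i = g_i(\theta^{(i)}) := (1-b_i)\lambda_i - \sum_{j=1}^p a_{ij}\lambda_j,
\end{align*}
a function of $\theta^{(i)} = [\lambda', \upsilon', a_{i1},\dots,a_{ip}, b_i]'$ that depends only on the unconditional eigenvalues $\lambda$ and the dynamic parameters $\kappa^{(i)}$, and not on the eigenvectors $\upsilon$. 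Since the fitted intercept is recovered by substituting the STE into this same relation, $\hat w_i = g_i(\hat\theta^{(i)})$, so that $\hat w_i - w_{0,i} = g_i(\hat\theta^{(i)}) - g_i(\theta_0^{(i)})$.

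Next I would verify the ingredients of the delta method. The map $g_i$ is bilinear in $(\lambda,\kappa^{(i)})$, hence a polynomial that is continuously differentiable on all of $\mathbb{R}^e$, and in particular at the interior point $\theta_0^{(i)}$ (Assumption \ref{assumption5}). Computing the gradient $\Phi_0 = \partial g_i(\theta^{(i)})/\partial\theta^{(i)\prime}\big|_{\theta_0^{(i)}}$ block by block against the ordering $[\lambda', \upsilon', a_{i1},\dots,a_{ip}, b_i]'$ reproduces the stated $1\times e$ row: the $\lambda$-block equals $(1-b_{0,i})\mathbb{I}\{\lambda_{0,i}\}' - [a_{0,i1},\dots,a_{0,ip}]$; the $\upsilon$-block vanishes, giving the zero block, because $g_i$ is free of the eigenvectors; the block for $(a_{i1},\dots,a_{ip})$ equals $-\lambda_0'$ since $\partial g_i/\partial a_{ij} = -\lambda_j$; and the $b_i$-block equals $-\lambda_{0,i}$. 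Combining $\sqrt{T}(\hat w_i - w_{0,i}) = \sqrt{T}\big(g_i(\hat\theta^{(i)}) - g_i(\theta_0^{(i)})\big)$ with Theorem \ref{theorem:asympnorm} and the continuity of $\partial g_i/\partial\theta^{(i)\prime}$, via a first-order Taylor (mean-value) expansion whose remainder is controlled by the consistency in Theorem \ref{theorem:consistency}, then yields $\sqrt{T}(\hat w_i - w_{0,i}) \overset{D}{\rightarrow} N(0, \Phi_0\Sigma_0^{(i)}\Phi_0')$ with $\Sigma_0^{(i)}$ as in \eqref{eq:asymp_var_maintext}.

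I do not expect a genuine obstacle here: the statement is a textbook delta-method consequence of the preceding theorem, and the only real content is the correct assembly of $\Phi_0$ against the block structure of $\theta^{(i)}$, together with the observation that $g_i$ is smooth everywhere, so no boundary or non-differentiability issue can arise and Assumption \ref{assumption5} enters only through Theorem \ref{theorem:asympnorm}. The most error-prone step is therefore purely the bookkeeping of the partial derivatives and their placement within the $1\times e$ vector $\Phi_0$, after which the sandwich form of the limiting variance follows immediately.
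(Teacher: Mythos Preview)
Your proposal is correct and follows essentially the same route as the paper's own proof: both express $w_i$ as the smooth function $w_i=(1-b_i)\lambda_i-\sum_{j}a_{ij}\lambda_j$ of $\theta^{(i)}$, compute the gradient $\Phi_0=\partial w_i/\partial\theta^{(i)\prime}\big|_{\theta_0^{(i)}}$ block by block, and apply the delta method on top of Theorem \ref{theorem:asympnorm}. Your write-up is in fact slightly more explicit than the paper's about the smoothness and interior-point hypotheses and about the precise form of the $\lambda$-block, but the argument is the same.
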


Note that we present the asymptotic theory in terms of $\theta^{(i)}$ rather than, $\theta$, following \citet{Francq2016} and their notation for an equation-by-equation estimator for various MGARCH models. The theorems listed above could easily be restated in terms of $\theta$, as the asymptotic results hold simultaneously due to the orthogonality of the conditional univariate log-likelihood functions, but we refrain from doing so for two reasons: First, the present formulation is coherent with the step-wise approach of the estimator. Second, the present formulation makes it straightforward to parallelize estimation, computing the asymptotic variance matrix in each iteration, which also speeding up the estimation procedure. 

As already emphasized, the STE reduces the risk of numerical issues in estimation compared to the QML estimator, and in the context of the $\lambda-$GARCH model, the ST estimator is closely related to the variance targeting estimator: Because the profiled log-likelihood consists of $p$ orthogonal terms, the STE and VTE of the $\lambda-$GARCH are theoretically equivalent, and in practice is expected to produce similar estimates and standard errors. Note however, that we expect the STE to have a smaller computational burden, as it minimizes the log-likelihood function over a smaller parameter space.
\begin{remark}[Variance targeting estimator of the $\lambda-$GARCH]
	An alternative to the stepwise estimation of the $\lambda-$GARCH is the variance targeting (VT) estimator, in which all $\kappa=[\kappa^{(1)\prime},\hdots,\kappa^{(p)\prime}]'$ are estimated jointly. This estimator still relies on the first step estimator of $\hat\gamma$ in \eqref{eq:gamma}, and we denote the full VT estimator $\tilde\theta=[\hat\gamma',\tilde\kappa']'$. Because of the orthogonal structure of the log-likelihood function, consistency and asymptotic normality of the VT estimator of the $\lambda-$GARCH can be derived using similar techniques as in Appendix \ref{appendix:proofs} and \ref{appendix:lemmas}.
\end{remark}

\section{Empirical illustrations}
\label{sec:empirical_illustration}
In the following we compare the empirical performance of the ST estimator to that of the joint QML estimator. First, we consider the relative efficiency of the two estimators in a simulation setting for many different portfolio sizes. This exercise lets us compare (empirical) efficiency of the STE against the QMLE. Second, we consider the out-of-sample performance of the the two estimation methods in a recursive value-at-risk application for portfolios of $p=25$ assets. The empirical fit is assessed using the likelihood ratio tests of \citet{christoffersen1998}. In both these exercises, we also consider the computational complexity (i.e. time spent on estimating the model) of the two methods. Finally, we briefly summarize the results.

\subsection{Relative efficiency: STE vs. QMLE}
\label{sec:relative_efficiency}
We now compare the relative efficiency and the time complexity of the STE against the joint QMLE. This is done for the diagonal model of dimension $p$, where we simulate a data-generating process $N=399$ times with $A_0=0.05I_p$, $B_0=0.85I_p$, such that the process has finite fourth order moments. The unconditional eigenvalues are specified as $\lambda_{0,i}=(p+1-i)/10$ for $i=1,\hdots,p$ and the eigenvectors are constructed using rotation matrices with all rotation parameters $\phi_{0,i}=0.5$ for $i=1,\hdots,p(p-1)/2$ (see Remark \ref{remark:alt_1_step}). The innovations, $Z_t$, are drawn $iid$ from a standard normal distribution, and each path of the simulated process has $T=2000$ observations. The model has $p(p-1)/2+3p$ parameters, and the STE procedure estimates $p(p+1)/2$ parameters in the first step, and the remaining $2p$ parameters sequentially for each rotated return. The QMLE on the other hand estimates all $p(p-1)/2+3p$ parameters simultaneously. 

In comparing the two estimators, we employ the same methodology as \citet{Francq2016} who use the quadratic form $T(\hat\vartheta-\vartheta_0)'\mathcal{I}(\hat\vartheta-\vartheta_0)$ as a measure of accuracy of an estimator $\hat\vartheta$, where $\mathcal{I}$ is the (numerically) approximated information matrix and the parameter vector is constructed identically for both estimators with $\vartheta = [\text{vec}(H)',\text{diag}(A)',\text{diag}(B)']'$. Because $\mathcal{I}$ is computationally demanding to compute in higher dimensions, we instead use the simulated information matrix, which is obtained as $\mathcal{I}=var(\hat\vartheta^n_{QMLE}-\vartheta_0)\approx \frac{1}{N}\sum_{n=1}^N(\hat\vartheta^n_{QMLE}-\vartheta_0)(\hat\vartheta^n_{QMLE}-\vartheta_0)'$ for $n=1,\hdots,N$, where $\hat\vartheta^n_{QMLE}$ is the QMLE parameter vector for the $n$'th simulated path. The relative efficiency is then computed as,
\begin{align*}
	RE = \frac{(\bar\vartheta_{STE}-\vartheta_0)'\mathcal{I}(\bar\vartheta_{STE}-\vartheta_0)}{(\bar\vartheta_{QMLE}-\vartheta_0)'\mathcal{I}(\bar\vartheta_{QMLE}-\vartheta_0)},
\end{align*}
where $\bar\vartheta_{STE}=\frac{1}{N}\sum_{n=1}^N\hat\vartheta^n_{STE}$ with $\bar\vartheta_{QMLE}$ defined analogously. By this definition, if $RE<1$, the ST estimator is relatively more efficient than the QML estimator.

\begin{table}[!htb]
  \centering
  \caption{Relative efficiency: QMLE vs. STE}
    \begin{tabular}{ccccc}
    \hline
    \hline
    Dimension, $p$ & $\#$ parameters & Time (s), QMLE & Time (s), STE & RE \\
    \hline
       	2  	&    7  	&   12.82  		&   0.69  	 &  2.59  \\
   		4  	&   18  	&	52.93  		&   1.36  	 &  1.42  \\
	   	6  	&   33  	&   161.70  	&   1.92  	 &  2.05  \\
        8  	&   52  	&   358.16  	&   2.74  	 &  2.06  \\
        10  &   75  	&   505.75  	&   3.76  	 &  1.13  \\
        12  &  102  	&   546.40  	&   4.86  	 &  0.11  \\
        15  &  150  	&   617.31  	&   6.41  	 &  0.01  \\
        20  &  250  	&   765.87  	&   10.03  	 &  0.00  \\
        50 	&  1,375  	&	2,987.88  	&   33.00  	 &  0.03  \\
        100	&  5,250  	&   15,775.08  	&   141.27   &  0.04  \\
        200 &  20,500  	& 	N/A   		&   618.29   &  N/A   \\
        500 &  126,250	& 	N/A   		&   4,084.79 & 	N/A   \\
    \hline
    \hline
    \end{tabular}%
  \label{tab:RE}%
  \newline \footnotesize{For the case $p=200$ and $p=500$ the QMLE failed to converge. \\ The time complexity and RE is the average over $N=399$ simulations. \\ All simulations/estimations are done using a single core.}
\end{table}%

The (average) computation times and the relative efficiency for the two estimators are contained in table \ref{tab:RE}. For the larger systems, $p>20$, the computation time for QMLE is very big, on average 50 minutes for $p=50$ and 260 minutes for $p=100$, whereas STE remains feasible in all but the $p=500$ case, in which the computation time is roughly one hour. Considering the relative efficiency of the two estimators, the QMLE performs favourably for $p\leq10$, after which its performance deteriorate drastically compared to the STE. For portfolios larger than 10 assets, the STE is preferred. 

Here, estimations are initiated in $\vartheta_{init}=\vartheta_0-0.025$. However, one could argue that initiating the both estimation procedures in $\vartheta_{init}\neq\vartheta_{0}$ gives the joint QMLE a disadvantage, as it performs numerical optimization over a much larger parameter space. We therefore repeat the exercise, initiating in $\vartheta_{init}=\vartheta_0$. This yields almost identical results (available upon request) and leads to the same conclusion, namely that the STE is relatively more efficient than joint QMLE for systems larger than $p>10$ assets, and that it always has a lower computational complexity than joint QMLE.

\subsection{An application in risk management}
\label{sec:emp_ill_app}
We now turn our attention to the empirical performance of the STE of the $\lambda$-GARCH, and compare it to the joint QML estimator. The out-of-sample performance is assessed by considering the conditional $5\%$ Value-at-Risk (VaR) for five different medium-sized portfolios consisting of $p=25$ assets from the SP100 index.

\subsubsection{Methodology and data}
We consider the out-of-sample performance by considering the conditional $5\%$ value-at-risk at 1 and 5-day horizons for five different portfolios. The first of the five portfolios is equally weighted 
while the weights of the remaining portfolios are drawn randomly such that the second and third portfolios are long-only, with the third portfolio $50\%$ geared. The fourth and fifth portfolios are long-short portfolios. The constituents of the portfolios are drawn randomly from the SP100 index and can, along with their weighting, be found in Appendix \ref{appendix:empirical_ex}.

Each of the three estimators is fitted on a (rolling window) sample of $T=1200$ daily observations, with the initial sample starting on December $28$th 2010 and ending on December $29$th 2015. The out-of-sample consists of 3 years of data from December $30$th 2015 to December $31$st 2018, leading to $\tau=756$ out-of-sample observations for the $1$-day forecast and $\tau=189$ observations for the $5$-day (non-overlapping) forecasts. The out-of-sample forecasts are computed using a filtered historical simulation in which we draw innovations $iid$ with replacement from the standardized residuals, $\hat Z_t$, see e.g. \citet{christoffersen2009}.  

Recall that the conditional VaR at risk level $\alpha$ for the $h$-period return of portfolio $i$, denoted VaR$^i_{t,h}(\alpha)$ is defined as,
\begin{align}
	P_t(R^i_{t+h|t}<-\text{VaR}_{t,h}^i(\alpha)) = \alpha,
\end{align}
where $P_t$ is the conditional distribution of the ex ante $h$-period return of portfolio $i$, $R_{t+h|h}^i$
. Define the (unconditional) ``hit'' variable for portfolio $i$ as,
\begin{align}
	I^i_{t} = \textbf{1}\{R^i_{t+h|t}>-\text{VaR}^i_{t,h}(\alpha) \},
\end{align}
such that the unconditional coverage for portfolio $i$ is $\pi_i = \frac{1}{\tau}\sum_{n=1}^\tau I^i_{n}$. Similarly, we define the conditional hit variable as $I^i_{t|t-1} = \textbf{1}\{I_{t}^i=1 | I_{t-1}^i=1\}$, denoting two hits in a row.

When assessing the adequacy of the VaR forecasts we consider the three likelihood ratio (LR) tests proposed by \citet{christoffersen1998}. The first LR test examines the hypothesis that the unconditional coverage is correct, $E[I_t^i]=\alpha$, but fails to account for potential clustering in the VaR hits. This is rectified by the second test, in which $I_{t|t-1}^i$ follows a two-state Markov chain, and we test the hypothesis of independence between hits. However, this test does not test for correct coverage, and as a consequence, we also consider the third test of correct conditional coverage, which lets $I_{t|t-1}^i$ follow the two-state Markov chain, and tests it against the null of independence between hits and correct coverage. The tests are denoted $LR_{uc}$, $LR_{ind}$ and $LR_{cc}$ respectively.

\begin{table}[htbp]
  \centering
  \caption{Empirical exercise: $95\%$ VaR at 1 and 5-day forecast horizons}
    {
    \begin{tabular*}{\textwidth}{@{\extracolsep{\fill}}llccccc}
    \hline
    \hline
          &       & \multicolumn{2}{c}{1 day} &       & \multicolumn{2}{c}{5 day} \\
\cline{3-7}          &       & STE  & QMLE  &       & STE  & QMLE \\ \hline
    $P_1$ & $\hat\pi_1$ & 0.044 & 0.038 &       & 0.048 & 0.048 \\
          & $LR_{uc}$ & 0.413 & 0.126 &       & 0.880 & 0.880 \\
          & $LR_{ind}$ & 0.013 & 0.910 &       & 0.427 & 0.056 \\
          & $LR_{cc}$ & 0.033 & 0.309 &       & 0.721 & 0.160 \\
    \hline
    $P_2$ & $\hat\pi_2$ & 0.046 & 0.040 &       & 0.053 & 0.058 \\
          & $LR_{uc}$ & 0.636 & 0.178 &       & 0.856 & 0.614 \\
          & $LR_{ind}$ & 0.093 & 0.478 &       & 0.090 & 0.136 \\
          & $LR_{cc}$ & 0.218 & 0.313 &       & 0.234 & 0.290 \\
    \hline
    $P_3$ & $\hat\pi_3$ & 0.042 & 0.038 &       & 0.053 & 0.058 \\
          & $LR_{uc}$ & 0.321 & 0.126 &       & 0.856 & 0.614 \\
          & $LR_{ind}$ & 0.050 & 0.910 &       & 0.537 & 0.136 \\
          & $LR_{cc}$ & 0.089 & 0.309 &       & 0.813 & 0.290 \\
    \hline
    $P_4$ & $\hat\pi_4$ & 0.065 & 0.052 &       & 0.074 & 0.069 \\
          & $LR_{uc}$ & 0.073 & 0.842 &       & 0.155 & 0.261 \\
          & $LR_{ind}$ & 0.307 & 0.409 &       & 0.078 & 0.269 \\
          & $LR_{cc}$ & 0.119 & 0.697 &       & 0.077 & 0.288 \\
    \hline
    $P_5$ & $\hat\pi_5$ & 0.065 & 0.050 &       & 0.058 & 0.053 \\
          & $LR_{uc}$ & 0.073 & 0.973 &       & 0.614 & 0.856 \\
          & $LR_{ind}$ & 0.453 & 0.449 &       & 0.243 & 0.290 \\
          & $LR_{cc}$ & 0.152 & 0.751 &       & 0.446 & 0.562 \\
    \hline
    \hline
          &       &       &       &       &       &  \\
    \hline
    \hline
          &       & \multicolumn{2}{c}{STE} &       & \multicolumn{2}{c}{QMLE} \\
    \hline
    \multicolumn{2}{c}{Time complexity} & \multicolumn{2}{c}{9.1} &       & \multicolumn{2}{c}{526.5} \\
    \hline
    \hline
    \end{tabular*}%
    }
  \label{tab:VaR}%
  \vspace{0.15cm} \newline
	\footnotesize{$P_i$ refers to the $i$'th portfolio, with $\hat\pi_i$ being the unconditional hit ratio $i=1,...,5$. $LR_{uc}$, $LR_{ind}$ and $LR_{cc}$ are the asymptotic p-values for the LR test for unconditional coverage, independence and conditional coverage respectively. The time complexity is given in seconds and is computed using a single core for one out-of-sample iteration.} 
\end{table}%

\subsubsection{Out-of-sample results}
The results of the out-of-sample exercise is given in table \ref{tab:VaR}. Importantly, the STE procedure is roughly $57$ times faster than the QMLE. We note that the estimated $\lambda-$GARCH (on average) has finite second order moments but not fourth order moments. Intuitively, this means that both estimators are consistent, but only the QML estimator has a limiting Gaussian distribution. 

The two estimation methods have a similar performance based unconditional coverage and the LR-tests: In general, the unconditional coverage is slightly different from the hypothesized $5\%$ and most of the LR-tests do not reject. Similar results are found for the 1 and 5-day $1\%$ VaR (not reported here). The rejected LR-tests relate to the equally weighted  portfolio $P_1$. 

In general, the VaR estimates produced by the two estimation methods are similar, but not identical: Consider figure \ref{fig:VaR_pf4} which plots the estimated VaR for the two estimation methods along with the realized return of portfolio $P_4$. As shown, the VaR estimates are, for the majority of the sample, very similar, but the QML estimator sometimes produce more extreme VaR estimates than the STE. We note, however, that while the two VaR estimates at times differ, the unconditional and conditional hit sequences are almost identical, and based on the LR-test in table \ref{tab:VaR}, none of the estimation methods seem to dominate the other empirically. We therefore conclude that the estimation procedures seem to yield similar results, with the STE having the clear advantage that it is much faster in practice. 

\begin{figure}[h]
	\caption{1 day $95\%$ VaR for portfolio $P_4$}
	\centering
	\includegraphics[width=\textwidth]{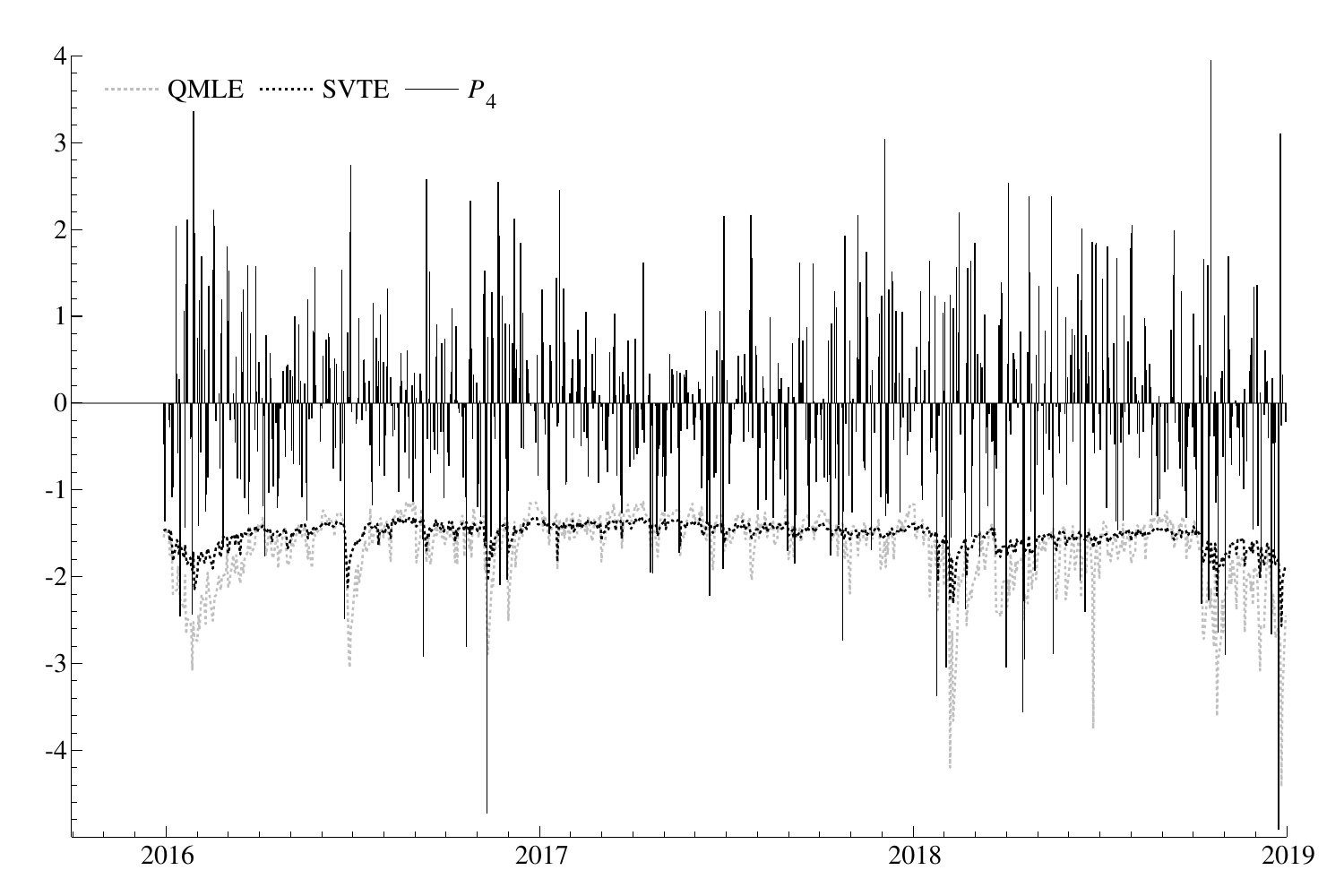}
	\\ \footnotesize{Note: STE and QMLE denote the estimated 1 day $5\%$ VaR, $P_4$ denotes the realized return.}
	\label{fig:VaR_pf4}
\end{figure}

\subsection{Brief summary of numerical exercises}
The simulation evidence in \ref{sec:relative_efficiency} indicates that not only is the STE relatively more efficient than QMLE in cross-sections of more than $p>10$ assets, it is also much more time efficient. This is verified in by the empirical study in Section \ref{sec:emp_ill_app}. One potential explanation is that the $\lambda-$GARCH is a non-linear function of the parameters in $\gamma$ through $Y_{t-1}$. By using a stepwise estimator, in which $\gamma$ is estimated using a closed form estimator, we mitigate the potential issues due to non-linearity, which seem to cause issues for large $p$ in the QML estimator. 

In regards to the asymptotic results in Theorem \ref{theorem:consistency}-\ref{theorem:asympnorm}, the simulation study in Appendix \ref{appendix:simulation} suggests that estimator is consistent in the case of finite second order moments of $X_t$. Furthermore, the simulations indicate that the asymptotic normality of the STE holds when $X_t$ has finite fourth moments. Hence, the moment requirements in Assumption \ref{assumption1} and \ref{assumption4} appear to be sufficient and necessary for consistency and asymptotic normality of the estimator. 

\section{Extensions and Concluding remarks}
\label{sec:conclusion}
We have derived asymptotic properties of the spectral targeting estimator (STE) for the $\lambda-$GARCH, an extended version of the  multivariate orthogonal GARCH (O-GARCH). The two-step estimator is consistent under finite second order moments, while it has a limiting Gaussian distribution when fourth order moments are finite. Simulations indicate that these moment conditions are sufficient and necessary. Moreover, we compare the empirical performance of the STE to that of the quasi maximum likelihood estimator (QMLE) for five portfolios of 25 assets. The STE dominates QMLE in terms of computational complexity, being up to 57 times faster in estimation, while both estimators produce similar out-of-sample forecasts. Finally, simulations indicate that in portfolios of more than 10 assets, the stepwise estimator is relatively more efficient than QMLE. The STE is therefore well suited for practitioners as it alleviates numerical problems and speeds up numerical optimization, while being easy to implement.

We note that while the STE delivered promising results in this exposition, the first step (sample) estimator may not be well-behaved when the ratio $p/T$ approaches one. This is discussed in e.g. \citet{ledoit2004,ledoit2012}, who derive shrinkage estimators for the sample covariance matrix, minimizing the estimation error. An extension could therefore consider the asymptotic analysis of a spectral targeting estimator where the first step estimator is based on shrinkage. Another extension would be to consider spectral targeting estimation with infinite fourth order moments, in a similar fashion to the exposition in \citet{Pedersen2014b} who consider the variance targeting estimator.

\printbibliography


\appendix

\section{Proofs}
\label{appendix:proofs}
Recall the log-likelihood function for the $i$'th equation,
\begin{align}
	L_T^{(i)}(\gamma,\kappa^{(i)}) & = \frac{1}{T}\sum_{t=1}^T l_t^{(i)}(\gamma,\kappa^{(i)}), \label{eq:loglik_app0} \\
	l_t^{(i)}(\gamma,\kappa^{(i)}) & = \log(\lambda_{i,t}(\gamma,\kappa^{(i)}))+\frac{y_{i,t}^2(\gamma)}{\lambda_{i,t}(\gamma,\kappa^{(i)})},
	\label{eq:loglik_app1}
\end{align}
which has first and second order derivatives,
\begin{align}
	\frac{\partial l_t^{(i)}(\theta^{(i)})}{\partial\theta^{(i)}_n} & = 
		\left(1- \frac{y_{i,t}^2(\theta^{(i)})}{\lambda_{i,t}(\theta^{(i)})}\right)\frac{1}{\lambda_{i,t}(\theta^{(i)})}\frac{\partial \lambda_{i,t}(\theta^{(i)})}{\partial \theta^{(i)}_n} + 2\frac{y_{i,t}(\theta^{(i)})}{\lambda_{i,t}(\theta^{(i)})}\frac{\partial y_{i,t}(\theta^{(i)})}{\partial \theta^{(i)}_n}, \label{eq:loglik_app_1der}
		\\
	\frac{\partial^2 l_t^{(i)}(\theta^{(i)})}{\partial\theta^{(i)}_n\partial\theta^{(i)}_m} & =
		\left(1- \frac{y_{i,t}^2(\theta^{(i)})}{\lambda_{i,t}(\theta^{(i)})}\right)\frac{1}{\lambda_{i,t}(\theta^{(i)})}\frac{\partial^2 \lambda_{i,t}(\theta^{(i)})}{\partial \theta^{(i)}_n\partial \theta^{(i)}_m}		\notag \\
		& + \left(2\frac{y_{i,t}^2(\theta^{(i)})}{\lambda_{i,t}(\theta^{(i)})}-1\right)\frac{1}{\lambda_{i,t}^2(\theta^{(i)})}\frac{\partial \lambda_{i,t}(\theta^{(i)})}{\partial \theta^{(i)}_n}\frac{\partial \lambda_{i,t}(\theta^{(i)})}{\partial \theta^{(i)}_m} \notag \\
		&+ 2\left( \frac{\partial^2 y_{i,t}(\theta^{(i)})}{\partial \theta^{(i)}_n \partial \theta^{(i)}_m}y_{i,t}(\theta^{(i)})	 + \frac{\partial y_{i,t}(\theta^{(i)})}{\partial \theta^{(i)}_n}\frac{\partial y_{i,t}(\theta^{(i)})}{ \partial \theta^{(i)}_m }
		\right)\frac{1}{\lambda_{i,t}(\theta^{(i)})} \notag \\
		& - 2\left(\frac{\partial \lambda_{i,t}(\theta^{(i)})}{\partial\theta^{(i)}_n}\frac{\partial y_{i,t}(\theta^{(i)})}{\partial\theta^{(i)}_m}+\frac{\partial \lambda_{i,t}(\theta^{(i)})}{\partial\theta^{(i)}_m}\frac{\partial y_{i,t}(\theta^{(i)})}{\partial\theta^{(i)}_n}
		\right)\frac{y_{i,t}(\theta^{(i)})}{\lambda_{i,t}^2(\theta^{(i)})}
		, \label{eq:loglik_app_2der}
\end{align}
for $n,m=1,\hdots, p(p+1)+1.$

Throughout the proofs, we let $L_{t}^{(i)}(\theta^{(i)})$ ($l_{t}^{(i)}(\theta^{(i)})$) denote the log-likelihood function (-contribution) initiated in the infinite past, and we let $L_{t,h}^{(i)}(\theta^{(i)})$ ($l_{t,h}^{(i)}(\theta^{(i)})$) denote the log-likelihood function (-contribution) initiated in a fixed $X_0$ and $H_0$ (with $H_0$ positive definite),
\begin{align}
	L_{T,h}^{(i)}(\gamma,\kappa^{(i)}) & = \frac{1}{T}\sum_{t=1}^T l_{t,h}^{(i)}(\gamma,\kappa^{(i)}), \label{eq:loglik_app2} \\
	l_{t,h}^{(i)}(\gamma,\kappa^{(i)}) & = \log(\lambda_{i,t,h}(\gamma,\kappa^{(i)}))+\frac{y_{i,t}^2(\gamma)}{\lambda_{i,t,h}(\gamma,\kappa^{(i)})},
	\label{eq:loglik_app3}
\end{align}
where $\lambda_{i,t}(\theta^{(i)})$ and $\lambda_{i,t,h}(\theta^{(i)})$ is defined analogously. Because $\lambda_{i,t}(\theta^{(i)})$ and $\lambda_{i,t,h}(\theta^{(i)})$ are defined for the same strictly stationary and ergodic sequence, $\{X_t\}_{t\in\mathbb{N}}$, we may write,
\begin{align}
		\lambda_{i,t}(\theta^{(i)})-\lambda_{i,t,h}(\theta^{(i)}) = b_i^t(\lambda_{i,0}(\theta^{(i)})-\lambda_{i,0,h}(\theta^{(i)})),
		\label{eq:lambda_app_diff}
\end{align}
for $t\geq1$

The structure of the main proofs and the accompanying lemmata follow that of \citet{Pedersen2014} (proof of Theorems 4.1-4.2 and Lemmata B.1-B.11). In order to make the proofs readable, most steps rely on lemmata stated and proved in Appendix \ref{appendix:lemmas}. In the following, we let the letters $K$ and $\phi$ denote generic constants, whose value can vary along the text, but always satisfy $K>0$ and $0<\phi<1$. Furthermore, let $H_{0,t}:=H_t(\theta_0^{(i)})$, $V_0:=V(\theta_0^{(i)})$ and $\Lambda_{0,t}:=\Lambda_{t}(\theta_0^{(i)})$. 

\subsection{Proof of consistency}
Initially, observe that by the ergodic theorem (Theorem 20.3 of \citet{jacod2012}), along with Assumption \ref{assumption1} the sample estimator is strongly consistent, 
	$\hat H\overset{a.s.}{\rightarrow} H_0$, 
 for $T\rightarrow \infty$. Since $\lambda_0$ is assumed to be simple, we may use the continuous mapping theorem (Theorem 17.5 of \citet{jacod2012}) to establish strong consistency of the first stage estimation,
\begin{align}
	\hat\lambda&\rightarrow \lambda_0 \ \ \ \ \ \ \text{a.s.}\label{eq:consistency_gamma0} \\
	\hat\upsilon&\rightarrow \upsilon_0 \ \ \ \ \ \ \text{a.s.} \label{eq:consistency_gamma}
\end{align}
We now show that $\hat\kappa^{(i)}$ is consistent. The proof follows that of Theorem 4.1 in \citet{Pedersen2014}. 

For any $\epsilon>0$, it holds almost surely for large T,
\begin{align*}
	 E[l_t^{(i)}(\gamma_0,\hat\kappa^{(i)})]&<L_T^{(i)}(\gamma_0,\hat\kappa^{(i)})+\frac{\epsilon}{5} \ \ \ \ \ \ \ \ \ \ \ \ \text{By Lemma \ref{lemma:con2}, } \\
	 L_T^{(i)}(\gamma_0,\hat\kappa^{(i)})&<L_{T,h}^{(i)}(\hat\gamma,\hat\kappa^{(i)})+\frac{\epsilon}{5} \ \ \ \ \ \ \ \ \ \ \ \ \text{By Lemma \ref{lemma:con4}, } \\ 
	 L_{T,h}^{(i)}(\hat\gamma,\hat\kappa^{(i)})&<L_{T,h}^{(i)}(\hat\gamma,\kappa_0^{(i)})+\frac{\epsilon}{5} \ \ \ \ \ \ \ \ \ \ \ \ \text{By \eqref{eq:QMLE}, } \\ 
	 L_{T,h}^{(i)}(\hat\gamma,\hat\kappa^{(i)})&<L_T^{(i)}(\gamma_0,\kappa_0^{(i)})+\frac{\epsilon}{5} \ \ \ \ \ \ \ \ \ \ \ \ \text{By Lemma \ref{lemma:con4}, } \\ 
	 L^{(i)}_T(\gamma_0,\kappa_0^{(i)})&<E[l_t^{(i)}(\gamma_0,\kappa_0^{(i)})]+\frac{\epsilon}{5} \ \ \ \ \ \ \ \ \ \text{By Lemma \ref{lemma:con2} }.
\end{align*}
That is, for any $\epsilon>0$,
\begin{align*}
	E[l_t^{(i)}(\gamma_0,\hat\kappa^{(i)})]<E[l_t^{(i)}(\gamma_0,\kappa_0^{(i)})]+\epsilon,
\end{align*}
and by Lemma \ref{lemma:con3} along with standard arguments for two-step estimators \citep{Newey1994}, it follows that for $T\rightarrow\infty$, $\hat\kappa^{(i)}\overset{a.s.}{\rightarrow}\kappa^{(i)}_0$. By Hence, the two-step estimator is strongly consistent, $\hat\theta^{(i)}\overset{a.s.}{\rightarrow}\theta^{(i)}$.
\qed
\subsection{Proof of asymptotic normality}
Compared to asymptotic theory for the joint QMLE of multivariate GARCH models additional difficulties arise from the fact that STE is a multi-step estimator. Conversely, the proof is simplified by the additional assumption of $E||X_t||^4<\infty$ and the fact that we treat individual $\{y_{i,t}\}_{t\in\mathbb{N}}$ separately.

The proof of asymptotic normality is based on an application of the mean-value theorem on the optimality condition of the score vector, $\theta^{(i)}=\theta_0^{(i)}$ along with Assumption \ref{assumption5} and \eqref{eq:QMLE},
\begin{align}
0_{p+1\times1} &  
     = \frac{\partial L_{T,h}^{(i)}(\theta^{(i)}_0)}{\partial \kappa^{(i)}} + \left(\frac{\partial^2L_{T,h}^{(i)}(\tilde\theta^{(i)})}{\partial\kappa^{(i)}\partial\theta^{(i)'}}\right)(\hat\theta^{(i)}-\theta^{(i)}_0) \notag \\
  & = \frac{\partial L_{T,h}^{(i)}(\theta_0^{(i)})}{\partial \kappa^{(i)}} + J^{(i)}_{T,h}(\tilde\theta^{(i)})(\hat\kappa^{(i)}-\kappa_0^{(i)})+K^{(i)}_{T,h}(\tilde\theta^{(i)})(\hat\gamma-\gamma_0),
\end{align}
where
\begin{align*}
	\frac{\partial L_{t,h}^{(i)}(\theta^{(i)}_0)}{\partial \kappa^{(i)}}=\frac{\partial L_{t,h}^{(i)}(\theta^{(i)})}{\partial \kappa^{(i)}}\Bigg|_{\theta^{(i)}=\theta^{(i)}_0},  
	\  
	J_{T,h}^{(i)}(\tilde\theta^{(i)}) = \frac{\partial^2 L_{T,h}^{(i)}(\theta^{(i)})}{\partial\kappa^{(i)}\partial\kappa^{(i)'}}\Bigg|_{\theta^{(i)}=\tilde\theta^{(i)}}, 
	\  
	K_{T,h}^{(i)}(\tilde\theta^{(i)}) =  \frac{\partial^2 L_{T,h}^{(i)}(\theta^{(i)})}{\partial\kappa^{(i)}\partial\gamma'}\Bigg|_{\theta^{(i)}=\tilde\theta^{(i)}}.
\end{align*}
Here $\kappa^{(i)}=[a_{i1},\hdots,a_{ip}, b_i]'$, $\gamma=[\lambda',\upsilon']'$, and $\tilde\theta^{(i)}$ is on the line between $\theta_0^{(i)}$ and $\hat\theta^{(i)}$.

$J^{(i)}_T$ is finite and invertible with probability approaching one (Lemma \ref{lemma:norm2} and \ref{lemma:norm4}) and $\hat\theta^{(i)}\overset{a.s.}{\rightarrow}\theta^{(i)}$ (Theorem \ref{theorem:consistency}). Hence, by Lemma \ref{lemma:norm5}, 
\begin{align}
	&\sqrt{T}\left(\hat\kappa^{(i)}-\kappa_0^{(i)}\right) = -\left(J^{(i)}_T(\tilde\theta^{(i)})\right)^{-1}\sqrt{T}\frac{\partial L_t^{(i)}(\theta_0^{(i)})}{\partial \kappa^{(i)}} -\left(J^{(i)}_T(\tilde\theta^{(i)})\right)^{-1}K^{(i)}_T(\tilde\theta^{(i)})\sqrt{T}\left(\hat\gamma-\gamma_0\right)+o_P(1),
\end{align}
which we rewrite for the (joint) parameter vector of equation $i$,
\begin{align}
	\sqrt{T}
	\begin{pmatrix}
		\hat\gamma-\gamma_0 \\
		\hat\kappa^{(i)}-\kappa^{(i)}_0
	\end{pmatrix}
	=
	\begin{pmatrix}
		I_{p(p+1)} & 0_{p(p+1)\times p+1} \\
		-(J_T^{(i)}(\tilde\theta^{(i)}))^{-1}K_T^{(i)}(\tilde\theta^{(i)}) & -(J_T^{(i)}(\tilde\theta^{(i)}))^{-1}
	\end{pmatrix}
	\begin{pmatrix}
		\sqrt{T}(\hat\gamma-\gamma_0) \\
		\sqrt{T}\frac{\partial L_t^{(i)}(\theta_0^{(i)})}{\partial\kappa^{(i)}}
	\end{pmatrix}
	+o_P(1).
\end{align}
The asymptotic normality then follows from Lemma \ref{lemma:norm1} together with Slutsky's theorem,
\begin{align}
	\sqrt{T}\left(\hat\theta^{(i)}-\theta^{(i)}_0\right)\overset{D}{\rightarrow}N(0,\Sigma_0^{(i)}),
\end{align}
with
\begin{align}
	\underset{e\times e}{\Sigma_0^{(i)}} & = 	
	\begin{pmatrix}
		I_{p(p+1)} & 0_{p(p+1)\times p+1} \\
		-(J_0^{(i)})^{-1}K_0^{(i)} & -(J_0^{(i)})^{-1}
	\end{pmatrix}
	\Omega_0^{(i)}
	\begin{pmatrix}
		I_{p(p+1)} & -(J_0^{(i)})^{-1}(K_0^{(i)})' \\
		0_{p+1\times p(p+1)} & -(J_0^{(i)})^{-1}
	\end{pmatrix},
\end{align}
where $\Omega_0^{(i)}$ is defined in Lemma \ref{lemma:norm1}, and $J_{T}^{(i)}(\tilde\theta^{(i)})\overset{a.s.}{\rightarrow} J_0^{(i)}$, $K_{T}^{(i)}(\tilde\theta^{(i)})\overset{a.s.}{\rightarrow} K_0^{(i)}$ by Lemma \ref{lemma:norm3}, with
\begin{align}
	\underset{p+1\times p+1}{J_0^{(i)}} = E\left[\frac{\partial^2 l_t^{(i)}(\theta^{(i)})}{\partial\kappa^{(i)}\partial\kappa^{(i)'}}\middle|_{\theta^{(i)}=\theta^{(i)}_0}\right], \ \ \ \ \ 	\underset{p+1\times p^2+p}{K_0^{(i)}} = E\left[\frac{\partial^2 l_t^{(i)}(\theta^{(i)})}{\partial\kappa^{(i)}\partial\gamma'}\middle|_{\theta^{(i)}=\theta^{(i)}_0}\right].
	\label{eq:appendix_asympvar}
\end{align}
\qed

\subsection{Proof of Corollary \ref{corollary:dist_w_l}}
The asymptotic distribution of the constant term in the $i$th conditional eigenvalue is found using the delta method, for which we need the partial derivative of $w_i$ with respect to the parameter vector $\theta^{(i)}$,
\begin{align}
	\underset{e\times1}{\Phi_0'} = \frac{\partial w_i}{\partial \theta^{(i)}}\Bigg\vert_{\theta^{(i)}=\theta^{(i)}_0} = 
	\begin{pmatrix}
		1-b_{0,i}\mathbb{I}\{\lambda_{0,i}\}-\sum_{j=1}^p a_{0,ij} \\
		0_{p^2\times 1} \\
		-\lambda_0 \\
		-\lambda_{0,i}
	\end{pmatrix}
\end{align}
where $\mathbb{I}\{\lambda_i\}$ is a vector of zeros, apart from a $1$ in the $i$'th row. Hence, the asymptotic distribution of $w_i$ is,
\begin{align}
	\sqrt{T}(\hat w_i-w_{0,i}) & \overset{D}{\rightarrow} N(0,\Phi_0 \Sigma^{(i)}_0 \Phi_0')
\end{align}
\qed

\section{Lemmata} 
\label{appendix:lemmas}
\subsection{Lemmata for the proof of consistency}
\begin{lemma}[Finite expectation of likelihood contributions]
	Under Assumptions \ref{assumption1}-\ref{assumption3}, 
	\begin{align*}
		E\left[\sup_{\theta^{(i)}\in\Theta^{(i)}}\left|l_t^{(i)}(\gamma,\kappa^{(i)})\right|\right]\leq K,
	\end{align*}
	where $l_t^{(i)}(\gamma,\kappa^{(i)})$ is defined in \eqref{eq:loglik_app1}.
	\label{lemma:con1}
\end{lemma}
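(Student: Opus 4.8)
The plan is to bound the two summands of $l_t^{(i)}=\log\lambda_{i,t}+y_{i,t}^2/\lambda_{i,t}$ separately, using compactness of $\Theta^{(i)}$ to extract uniform-in-$\theta^{(i)}$ bounds on the coefficients and, in particular, a uniform lower bound on $\lambda_{i,t}$. Since $\Theta^{(i)}$ is compact (Assumption \ref{assumption2}) and the intercept $w_i=(1-b_i)\lambda_i-\sum_{j=1}^p a_{ij}\lambda_j$ is strictly positive on $\Theta^{(i)}$, the infimum $\underline\omega:=\inf_{\theta^{(i)}\in\Theta^{(i)}}w_i$ is attained and strictly positive, while $\bar w:=\sup w_i$ and $\bar a:=\sup_{j}a_{ij}$ are finite; crucially $\phi:=\sup_{\theta^{(i)}\in\Theta^{(i)}}b_i<1$, because the stationarity restriction $\rho(A+B)<1$ forces $b_i<1$ throughout $\Theta^{(i)}$. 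From the recursion \eqref{eq:lambda_i} one then reads off the pointwise lower bound $\lambda_{i,t}(\theta^{(i)})\geq w_i\geq\underline\omega>0$, uniformly in $\theta^{(i)}$.

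For the ratio term I would exploit that the eigenvectors are orthonormal on $\mathcal{V}$, so $\|V_i\|=1$ and hence $y_{i,t}^2(\gamma)=(V_i'X_t)^2\leq\|X_t\|^2$ for every admissible $\gamma$. Combining this with the lower bound on the denominator gives the uniform envelope
\begin{align*}
\sup_{\theta^{(i)}\in\Theta^{(i)}}\frac{y_{i,t}^2(\gamma)}{\lambda_{i,t}(\theta^{(i)})}\leq\frac{\|X_t\|^2}{\underline\omega},
\end{align*}
whose expectation is finite by Assumption \ref{assumption1}. For the log term, the elementary inequality $|\log x|\leq|\log\underline\omega|+x$, valid for $x\geq\underline\omega$, gives $\sup_{\theta^{(i)}}|\log\lambda_{i,t}|\leq|\log\underline\omega|+\sup_{\theta^{(i)}}\lambda_{i,t}$, so it suffices to show $E[\sup_{\theta^{(i)}}\lambda_{i,t}]<\infty$. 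Working with the stationary (infinite-past) version of $\lambda_{i,t}$ and unfolding the recursion yields $\lambda_{i,t}=\tfrac{w_i}{1-b_i}+\sum_{k=0}^\infty b_i^k\sum_{j=1}^p a_{ij}y_{j,t-1-k}^2$; bounding the coefficients uniformly and applying $y_{j,t}^2\leq\|X_t\|^2$ produces
\begin{align*}
\sup_{\theta^{(i)}\in\Theta^{(i)}}\lambda_{i,t}\leq\frac{\bar w}{1-\phi}+\bar a\,p\sum_{k=0}^\infty\phi^k\|X_{t-1-k}\|^2.
\end{align*}
Taking expectations and using strict stationarity with $E\|X_t\|^2<\infty$, the geometric series in $\phi<1$ is summable, so $E[\sup_{\theta^{(i)}}\lambda_{i,t}]\leq\tfrac{\bar w}{1-\phi}+\tfrac{\bar a\,p\,E\|X_0\|^2}{1-\phi}<\infty$. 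Adding the two bounds delivers $E[\sup_{\theta^{(i)}}|l_t^{(i)}|]\leq K$.

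The main obstacle is the uniform-in-$\theta^{(i)}$ control of $E[\sup\lambda_{i,t}]$: everything hinges on the geometric decay surviving the supremum, i.e. on $\sup_{\Theta^{(i)}}b_i=\phi<1$, which is exactly what compactness of $\Theta^{(i)}$ secures, together with the spillover bound $y_{j,t}^2\leq\|X_t\|^2$ that collapses all $p$ rotated returns onto the single integrable envelope $\|X_t\|^2$. The remaining steps—the lower bound $\lambda_{i,t}\geq\underline\omega$ and the ratio term—are routine once these uniform constants are secured, and only the finite second-moment condition of Assumption \ref{assumption1} is needed throughout.
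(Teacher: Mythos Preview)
Your proof is correct and follows essentially the same route as the paper: both unfold $\lambda_{i,t}$ into its ARCH($\infty$) representation, use compactness to secure $\sup_{\Theta^{(i)}}b_i<1$ and $\inf_{\Theta^{(i)}}w_i>0$, bound $y_{j,t}^2\leq K\|X_t\|^2$ via orthonormality of $V$, and finish with the finite-second-moment condition of Assumption~\ref{assumption1}. Your treatment of the log term via $|\log x|\leq|\log\underline\omega|+x$ for $x\geq\underline\omega$ is slightly more careful than the paper's, which writes $E[\sup|\log\lambda_{i,t}|]\leq E[\sup|\lambda_{i,t}|]$ directly.
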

\begin{proof}
	Notice that the $i'$th conditional eigenvalue may be rewritten as an ARCH$(\infty)$ process,
	\begin{align*}
		\lambda_{i,t}(\theta^{(i)}) & = w_i+\sum_{j=1}^pa_{ij}y_{j,t-1}^2(\theta^{(i)})+b_i\lambda_{i,t-1}(\theta^{(i)}) 
									  = \sum_{l=0}^\infty b_i^l\left(w_i+\sum_{j=1}^pa_{ij}y_{j,t-l-1}^2(\theta^{(i)})  \right).
	\end{align*}
	Using $\rho(B)<1$ (By Assumption \ref{assumption1} and Lemma 4.1 of \citet{Ling2003}), along with Theorem 9.2 of \citet{jacod2012},
	\begin{align*}
		E\left[\sup_{\theta^{(i)}\in\Theta^{(i)}}\left|\log(\lambda_{i,t}(\theta^{(i)}))\right|\right] \leq E\left[\sup_{\theta^{(i)}\in\Theta^{(i)}}\left|\lambda_{i,t}(\theta^{(i)})\right|\right]\leq K \sum_{t=1}^\infty \phi^t(1+E||X_{t}||^2)\leq K.
	\end{align*} 
	Furthermore,
	\begin{align}
	E\left[\sup_{\theta^{(i)}\in\Theta^{(i)}}\left|\frac{1}{\lambda_{i,t}(\theta^{(i)})}\right|\right] &\leq\sup_{\theta^{(i)}\in\Theta^{(i)}}\left|\frac{1}{w_i}\right|\leq K,
	\label{eq:l_invers}	 \\
	E\left[\sup_{\theta^{(i)}\in\Theta^{(i)}}\left|y_{i,t}^2\right|\right] & \leq E\left[\sup_{\theta^{(i)}\in\Theta^{(i)}}\left|(V_i(\theta^{(i)})'X_{t})^2\right|\right] \leq KE\left[||X_{t})||^2\right] \leq K.
	\end{align}
	This, along with the triangle inequality, means that the log-likelihood contribution for the $i'$th rotated return is then bounded by a constant by Assumption \ref{assumption1},
	\begin{align*}
		E\left[\sup_{\theta^{(i)}\in\Theta^{(i)}}\left|l_t^{(i)}(\gamma,\kappa^{(i)})\right|\right] 
		& \leq
		E\left[\sup_{\theta^{(i)}\in\Theta^{(i)}}\left|
		\lambda_{i,t}(\theta^{(i)})\right|\right]+E\left[\sup_{\theta^{(i)}\in\Theta^{(i)}}\left|\frac{y_{i,t}^2(\theta^{(i)})}{\lambda_{i,t}(\theta^{(i)})}
		\right|\right]
		\leq K.
	\end{align*}
\end{proof}

\begin{lemma}[Uniform convergence of likelihood function]
	Under Assumptions \ref{assumption1}-\ref{assumption3}, 
	\begin{align*}
		\sup_{\theta^{(i)}\in\Theta^{(i)}}\left|
		L_T^{(i)}(\gamma,\kappa^{(i)})-E[l_t^{(i)}(\gamma,\kappa^{(i)})]
		\right|\overset{a.s.}{\rightarrow}0,
	\end{align*}
	where $L_T^{(i)}(\theta^{(i)})$ is the log-likelihood function for the $i$th rotated return and $l_t^{(i)}(\theta^{(i)})$ is the log-likelihood contribution for the $i$th rotated return at time $t$, stated in \eqref{eq:loglik_app0} and \eqref{eq:loglik_app1}.
	\label{lemma:con2}
\end{lemma}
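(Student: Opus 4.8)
The target is Lemma \ref{lemma:con2}, a uniform strong law of large numbers (ULLN) for the per-equation log-likelihood. The standard approach is the classical argument for uniform convergence of averages of stationary ergodic sequences: combine the pointwise ergodic theorem with a compactness argument using a dominating, integrable envelope. I would invoke the generic uniform ergodic theorem (see e.g. Theorem 2.7 of Rahbek or the version in \citet{francq2019}), whose hypotheses are precisely (i) strict stationarity and ergodicity of $\{X_t\}$, (ii) compactness of $\Theta^{(i)}$, (iii) almost-sure continuity of $\theta^{(i)}\mapsto l_t^{(i)}(\gamma,\kappa^{(i)})$, and (iv) an integrable uniform bound. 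The plan is to verify each of these inputs and then let the theorem deliver the conclusion.

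\textbf{Plan.} First I would record that Assumption \ref{assumption1} supplies strict stationarity and ergodicity of $\{X_t\}$, and note that $l_t^{(i)}$ is a fixed measurable function of the stationary ergodic input sequence, so $\{l_t^{(i)}(\gamma,\kappa^{(i)})\}_t$ is itself stationary and ergodic for each fixed $\theta^{(i)}$. Second, I would establish continuity: $\lambda_{i,t}(\gamma,\kappa^{(i)})$ is, by its ARCH$(\infty)$ representation derived in the proof of Lemma \ref{lemma:con1}, a continuous function of $\theta^{(i)}$ on the compact set $\Theta^{(i)}$ (the eigenvectors enter continuously through $y_{j,t}(\gamma)=V_j(\gamma)'X_t$, and $w_i$, $a_{ij}$, $b_i$ enter continuously with $w_i>0$ bounded away from zero), and since $\lambda_{i,t}>0$, both $\log\lambda_{i,t}$ and $y_{i,t}^2/\lambda_{i,t}$ are continuous, hence $l_t^{(i)}$ is continuous in $\theta^{(i)}$ almost surely. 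Third—and this is the crucial input—I would cite Lemma \ref{lemma:con1}, which gives exactly the integrable envelope $E[\sup_{\theta^{(i)}\in\Theta^{(i)}}|l_t^{(i)}(\gamma,\kappa^{(i)})|]\leq K<\infty$. With stationarity/ergodicity, continuity, compactness, and an integrable dominating function in hand, the generic uniform ergodic theorem yields the stated uniform almost-sure convergence directly.

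If one prefers a self-contained argument rather than citing a packaged ULLN, the plan would be the standard compactness covering scheme. Fix $\epsilon>0$. For each $\theta^{(i)}_*\in\Theta^{(i)}$ and each open ball $B$ around it, consider $\sup_{\theta^{(i)}\in B}l_t^{(i)}$ and $\inf_{\theta^{(i)}\in B}l_t^{(i)}$; by the envelope of Lemma \ref{lemma:con1} these are integrable, and by continuity and dominated convergence their expectations converge to $E[l_t^{(i)}(\theta^{(i)}_*)]$ as the ball shrinks. The pointwise ergodic theorem (Theorem 20.3 of \citet{jacod2012}, already used for the first-step estimator) applied to these sup/inf-over-ball functions bounds the oscillation of $L_T^{(i)}$ uniformly on each ball; covering the compact $\Theta^{(i)}$ by finitely many such balls and taking the maximum over the finite cover controls the uniform deviation, giving $\limsup_T \sup_{\theta^{(i)}}|L_T^{(i)}-E[l_t^{(i)}]|\leq\epsilon$ almost surely, and letting $\epsilon\downarrow0$ along a countable sequence completes the proof.

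\textbf{Main obstacle.} The only genuine work is the integrable envelope, and that has already been discharged in Lemma \ref{lemma:con1}; continuity in the eigenvectors $\gamma$ is mild given Assumption \ref{assumption1.5new} (simple eigenvalues, so $V$ is locally a continuous function), but one should be slightly careful that the parameterization of $\gamma$ keeps $l_t^{(i)}$ jointly continuous and that $w_i$ stays bounded below so $1/\lambda_{i,t}$ does not blow up—both are guaranteed by the restrictions defining $\Theta^{(i)}$. Hence I expect the proof to be short: the substantive estimate is inherited from Lemma \ref{lemma:con1}, and the remainder is the routine invocation of the uniform ergodic theorem.
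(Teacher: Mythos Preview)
Your proposal is correct and matches the paper's approach exactly: the paper's proof consists of a single sentence invoking Lemma \ref{lemma:con1} for the integrable envelope together with a packaged uniform law of large numbers (Theorem A.2.2 of \citet{White1994}). Your verification of continuity, compactness, and stationarity/ergodicity spells out precisely the hypotheses of that theorem, and your optional covering argument is just the standard proof of the same ULLN.
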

\begin{proof}
	Follows from Lemma \ref{lemma:con1} and the uniform law of large numbers (Theorem A.2.2. of \citet{White1994})
\end{proof}

\begin{lemma}[Likelihood uniquely minimized]
	Under Assumptions \ref{assumption1}-\ref{assumption3}, 
	\begin{align*}
		E|l_t^{(i)}(\gamma_0,\kappa_0^{(i)})|<\infty,
	\end{align*}
	and if $\kappa^{(i)}\neq\kappa_0^{(i)}$,
	\begin{align*}
		E[l_t^{(i)}(\gamma_0,\kappa^{(i)})]>E[l_t^{(i)}(\gamma_0,\kappa^{(i)}_0)],
	\end{align*}
	where $l_t^{(i)}(\gamma,\kappa^{(i)})$ is defined in \eqref{eq:loglik_app1}.
	\label{lemma:con3}
\end{lemma}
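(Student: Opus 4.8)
The plan is to prove the two claims separately, beginning with finiteness of the expectation at the true parameter and then establishing the strict inequality that identifies $\kappa_0^{(i)}$ as the unique minimizer. For the finiteness claim, $E|l_t^{(i)}(\gamma_0,\kappa_0^{(i)})|<\infty$, I would simply invoke Lemma \ref{lemma:con1}, which already establishes a uniform bound $E[\sup_{\theta^{(i)}}|l_t^{(i)}|]\leq K$; evaluating at the particular point $(\gamma_0,\kappa_0^{(i)})\in\Theta^{(i)}$ (which lies in the parameter space by Assumption \ref{assumption2}) gives the result immediately, since a pointwise value is bounded by the supremum.

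For the strict inequality, the key observation is that at the true parameter the model is correctly specified, so that $y_{i,t}(\gamma_0)=\lambda_{i,t}^{1/2}(\gamma_0,\kappa_0^{(i)})z_{i,t}$ with $E[z_{i,t}^2\mid\mathcal{F}_{t-1}]=1$. The plan is to form the difference $E[l_t^{(i)}(\gamma_0,\kappa^{(i)})]-E[l_t^{(i)}(\gamma_0,\kappa_0^{(i)})]$ and show it is strictly positive. Writing out the two log-likelihood contributions and substituting $y_{i,t}^2(\gamma_0)=\lambda_{i,t}(\gamma_0,\kappa_0^{(i)})z_{i,t}^2$, the difference of expectations takes the form
\begin{align*}
	E\left[\log\frac{\lambda_{i,t}(\gamma_0,\kappa^{(i)})}{\lambda_{i,t}(\gamma_0,\kappa_0^{(i)})} + \frac{\lambda_{i,t}(\gamma_0,\kappa_0^{(i)})}{\lambda_{i,t}(\gamma_0,\kappa^{(i)})}z_{i,t}^2 - z_{i,t}^2\right].
\end{align*}
Conditioning on $\mathcal{F}_{t-1}$ so that $E[z_{i,t}^2\mid\mathcal{F}_{t-1}]=1$ and setting $r_t:=\lambda_{i,t}(\gamma_0,\kappa_0^{(i)})/\lambda_{i,t}(\gamma_0,\kappa^{(i)})$, the integrand reduces to $E[r_t-1-\log r_t]$. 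Since the function $x\mapsto x-1-\log x$ is nonnegative on $(0,\infty)$ and equals zero only at $x=1$, this expectation is nonnegative and vanishes if and only if $r_t=1$ almost surely, i.e.\ $\lambda_{i,t}(\gamma_0,\kappa^{(i)})=\lambda_{i,t}(\gamma_0,\kappa_0^{(i)})$ almost surely.

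The main obstacle — and where Assumption \ref{assumption3} does its work — is converting this ``zero only on a null set'' conclusion into the genuine strict inequality for $\kappa^{(i)}\neq\kappa_0^{(i)}$. Assumption \ref{assumption3} states precisely that $\kappa^{(i)}\neq\kappa_0^{(i)}$ implies $\lambda_{i,t}(\gamma_0,\kappa^{(i)})\neq\lambda_{i,t}(\gamma_0,\kappa_0^{(i)})$ (on a non-null set), so $r_t\neq1$ with positive probability, forcing $E[r_t-1-\log r_t]>0$. I would also need to confirm that the expectation is well-defined (not $+\infty-\infty$) before applying the conditioning argument, which follows from the moment bounds in Lemma \ref{lemma:con1} together with the lower bound $\lambda_{i,t}\geq w_i$ used in \eqref{eq:l_invers}; these guarantee the $\log$ and ratio terms are integrable. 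The interchange of expectation and conditional expectation via the law of iterated expectations is routine given integrability, so the entire argument hinges on the elementary inequality for $x-1-\log x$ combined with the identification Assumption \ref{assumption3}.
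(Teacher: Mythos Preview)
Your proposal is correct and follows essentially the same route as the paper: both invoke Lemma~\ref{lemma:con1} for finiteness, then form the difference of expected log-likelihoods, substitute $y_{i,t}^2(\gamma_0)=\lambda_{i,t}(\gamma_0,\kappa_0^{(i)})z_{i,t}^2$, reduce to $E[r_t-1-\log r_t]$ with $r_t=\lambda_{i,t}(\gamma_0,\kappa_0^{(i)})/\lambda_{i,t}(\gamma_0,\kappa^{(i)})$, and conclude via the elementary inequality $x-1-\log x\geq 0$ together with the identification assumption. Your treatment is, if anything, slightly more careful in justifying the integrability and the conditioning step.
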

\begin{proof}
	The first statement follows directly from Lemma \ref{lemma:con1}. 
	For the second statement, consider
	 \begin{align*}
	 	E[l_t^{(i)}(\gamma_0,\kappa^{(i)})]- & E[l_t^{(i)}(\gamma_0,\kappa^{(i)}_0)]  \\ 
	 = & E\left[\log\left(\frac{\lambda_{i,t}(\gamma_0,\kappa^{(i)})}{\lambda_{i,t}(\gamma_0,\kappa_0^{(i)})}\right)+y_{i,t}^2(\gamma_0)\left(\frac{1}{\lambda_{i,t}(\gamma_0,\kappa^{(i)})}-\frac{1}{\lambda_{i,t}(\gamma_0,\kappa_0^{(i)})}\right) \right] 
	\\
	= &  
	E\left[\lambda_{i,t}(\gamma_0,\kappa^{(i)}_0)\left(\frac{1}{\lambda_{i,t}(\gamma_0,\kappa^{(i)})}-\frac{1}{\lambda_{i,t}(\gamma_0,\kappa_0^{(i)})}\right)
	-\log\left(\frac{\lambda_{i,t}(\gamma_0,\kappa_0^{(i)})}{\lambda_{i,t}(\gamma_0,\kappa^{(i)}))}\right)
	 \right] 
	 \\
	= & 
	E\left[\frac{\lambda_{i,t}(\gamma_0,\kappa^{(i)}_0)}{\lambda_{i,t}(\gamma_0,\kappa^{(i)})}-1
	-\log\left(\frac{\lambda_{i,t}(\gamma_0,\kappa_0^{(i)})}{\lambda_{i,t}(\gamma_0,\kappa^{(i)})}\right)
	 \right]\geq 0,
	 \end{align*}
	 which uses $y_{i,t}(\gamma_0)=\lambda_{i,t}^{1/2}(\gamma_0,\kappa_0^{(i)})z_{i,t}$, where $z_{i,t}$ is $iid$ with unit variance. Notice that $\log x\leq x-1$ for $x>0$, and that $\log x=x-1$ only if $x=1$. This inequality is strict unless $\frac{\lambda_{i,t}(\gamma_0,\kappa^{(i)}_0)}{\lambda_{i,t}(\gamma_0,\kappa^{(i)})}=1$, which is ruled out, as it violates assumption \ref{assumption2} on identification.
\end{proof}

\begin{lemma}[Asymptotic irrelevance of initial values]
	Under Assumptions \ref{assumption1}-\ref{assumption3}, 
	\begin{align*}
		\sup_{\kappa^{(i)}\in \mathcal{K}^{(i)}}\left| L_T^{(i)}(\gamma_0,\kappa^{(i)})-L_{T,h}^{(i)}(\hat\gamma,\kappa^{(i)})  \right|\overset{a.s.}{\rightarrow}0,
	\end{align*}
	where $L_t^{(i)}(\gamma,\kappa^{(i)})$ is defined in \eqref{eq:loglik_app0} and $L_{t,h}^{(i)}(\gamma,\kappa^{(i)})$ is defined in \eqref{eq:loglik_app2}.
	\label{lemma:con4}
\end{lemma}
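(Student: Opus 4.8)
The plan is to split the difference with the triangle inequality, inserting the intermediate term $L_{T,h}^{(i)}(\gamma_0,\kappa^{(i)})$ that keeps the static parameter at its true value but already uses the fixed initialisation, so that
\begin{align*}
\sup_{\kappa^{(i)}}\left|L_T^{(i)}(\gamma_0,\kappa^{(i)})-L_{T,h}^{(i)}(\hat\gamma,\kappa^{(i)})\right|
& \leq
\sup_{\kappa^{(i)}}\left|L_T^{(i)}(\gamma_0,\kappa^{(i)})-L_{T,h}^{(i)}(\gamma_0,\kappa^{(i)})\right| \\
& \quad +
\sup_{\kappa^{(i)}}\left|L_{T,h}^{(i)}(\gamma_0,\kappa^{(i)})-L_{T,h}^{(i)}(\hat\gamma,\kappa^{(i)})\right| .
\end{align*}
The first term isolates the effect of the initial values (at fixed $\gamma_0$), the second the effect of replacing $\gamma_0$ by the first-step estimator $\hat\gamma$ (at fixed initialisation). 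I would treat them separately.

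For the initialisation term, both likelihoods share the same $y_{i,t}(\gamma_0)=V_{0,i}'X_t$ and differ only through $\lambda_{i,t}$ versus $\lambda_{i,t,h}$. I would use the geometric-decay identity \eqref{eq:lambda_app_diff} together with the lower bound $\lambda_{i,t},\lambda_{i,t,h}\geq w_i>0$ to Lipschitz-bound both the $\log$ term and the reciprocal term, giving a pointwise bound of the form $|l_t^{(i)}-l_{t,h}^{(i)}|\leq K\phi^t(1+y_{i,t}^2)\,|\lambda_{i,0}-\lambda_{i,0,h}|$. Compactness of $\mathcal K^{(i)}$ (Assumption \ref{assumption2}) supplies a uniform $\phi=\sup b_i<1$ and a uniform bound on the initial gap, so the estimate is uniform in $\kappa^{(i)}$. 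Since $E[y_{i,t}^2]<\infty$ under Assumption \ref{assumption1}, the series $\sum_t\phi^t(1+y_{i,t}^2)$ has finite mean and is hence a.s.\ finite, so that $\tfrac1T\sum_{t=1}^T\phi^t(1+y_{i,t}^2)\to0$ a.s., killing the first term.

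For the first-step term I would exploit the strong consistency $\hat\gamma\to\gamma_0$ from \eqref{eq:consistency_gamma0}--\eqref{eq:consistency_gamma} together with a mean-value expansion in $\gamma$ about $\gamma_0$, bounding the term by $\big(\sup_{\gamma\in B,\,\kappa^{(i)}}\|\partial L_{T,h}^{(i)}/\partial\gamma\|\big)\,\|\hat\gamma-\gamma_0\|$ for a small neighbourhood $B$ of $\gamma_0$. The second factor vanishes a.s.; for the first I would replace the (non-stationary) $h$-initialised score by its stationary counterpart via \eqref{eq:lambda_app_diff} and apply a uniform law of large numbers, by an argument analogous to Lemmata \ref{lemma:con1}--\ref{lemma:con2}, so that it converges a.s.\ to a finite constant.

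The hard part will be the last step: showing $E[\sup\|\partial l_t^{(i)}/\partial\gamma\|]<\infty$ under only second moments. Differentiation with respect to the unconditional eigenvalues is harmless, since $\gamma$ enters $\lambda_{i,t}$ there only through the scalar intercept $w_i$. The delicate part is differentiation with respect to the eigenvectors $\upsilon$: this brings a factor $X_{t-1}$ out of $y_{j,t-1}(\gamma)=V_j'X_{t-1}$, and multiplied by the prefactor $1-y_{i,t}^2/\lambda_{i,t}$ (which carries $X_t$) it looks like a fourth-order object. I expect to tame it by writing $\lambda_{i,t}$ in the ARCH$(\infty)$ form from the proof of Lemma \ref{lemma:con1} and bounding $\lambda_{i,t}^{-1}\,\partial\lambda_{i,t}/\partial\upsilon$ by a constant through a weighted ratio inequality $\tfrac{\sum_l b_i^l c_l}{\sum_l b_i^l d_l}\le\max_l c_l/d_l$, so that only $E\|X_t\|^2<\infty$ is needed. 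Making this ratio bound uniform over $\kappa^{(i)}$ is the crux, as the $a_{ij}$ weights may approach the boundary of $\mathcal K^{(i)}$; this is exactly the technical point that must be carried out with care, paralleling the corresponding lemma in \citet{Pedersen2014}.
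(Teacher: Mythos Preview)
Your proposal is correct in outline but uses a \emph{different intermediate term} in the triangle inequality than the paper. You insert $L_{T,h}^{(i)}(\gamma_0,\kappa^{(i)})$ (true $\gamma$, fixed initialisation), whereas the paper inserts $L_T^{(i)}(\hat\gamma,\kappa^{(i)})$ (estimated $\gamma$, stationary initialisation). Concretely, the paper's split is
\[
\sup_{\kappa^{(i)}}\bigl|L_T^{(i)}(\gamma_0,\kappa^{(i)})-L_T^{(i)}(\hat\gamma,\kappa^{(i)})\bigr|
+\sup_{\kappa^{(i)}}\bigl|L_T^{(i)}(\hat\gamma,\kappa^{(i)})-L_{T,h}^{(i)}(\hat\gamma,\kappa^{(i)})\bigr|.
\]
This means the paper applies the mean-value expansion in $\gamma$ on the \emph{stationary} likelihood, so the ergodic theorem bounds the averaged score directly, with no need for your extra step of trading the $h$-initialised score for its stationary version. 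Conversely, the paper handles the initialisation gap at $\hat\gamma$ rather than at $\gamma_0$; this is harmless since $\hat\gamma\to\gamma_0$ a.s.\ and the bound $K\phi^t(1+\|X_t\|^2)$ is uniform on the relevant compact set. Your route works as well, but the paper's ordering saves one approximation layer.

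On the ``hard part'' you flag: you are right that bounding $\lambda_{i,t}^{-1}\,\partial\lambda_{i,t}/\partial\upsilon$ uniformly under only second moments is the crux, and your proposed ratio argument is essentially how the paper treats the analogous quantity in Lemma~\ref{lemma:norm2} via the inequality $z/(1+z)\le z^s$, $s\in(0,1)$. In the proof of the present lemma, however, the paper is rather brief on this point: it restricts to a compact $\tilde{\mathcal L}$ on which $w_i$ is bounded away from zero, asserts $E[\sup|\partial l_t^{(i)}/\partial\gamma_j|]<\infty$, and invokes the ergodic theorem. Your more careful discussion of the eigenvector derivatives is in fact a sharper articulation of what is needed than the paper provides here.
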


\begin{proof}
	We want to show that the initial values are asymptotically irrelevant. 
	As in the proof of Theorem 4.1 in \citet{Pedersen2014b}, we use the triangle inequality as follows,
	\begin{align}
		&\sup_{\kappa^{(i)}\in \mathcal{K}^{(i)}}\left| L_T^{(i)}(\gamma_0,\kappa^{(i)})-L_{T,h}^{(i)}(\hat\gamma,\kappa^{(i)})  \right| \leq \notag \\
		&
		\sup_{\kappa^{(i)}\in \mathcal{K}^{(i)}}\left| L_T^{(i)}(\gamma_0,\kappa^{(i)})-L_T^{(i)}(\hat\gamma,\kappa^{(i)})\right|+\sup_{\kappa^{(i)}\in \mathcal{K}^{(i)}}\left| L_T^{(i)}(\hat\gamma,\kappa^{(i)})-L_{T,h}^{(i)}(\hat\gamma,\kappa^{(i)})\right|.
		\label{eq:lemmaB4_1}
	\end{align}
	In line with the aforementioned proof, we apply the mean-value theorem to the first term of \eqref{eq:lemmaB4_1}, 
	\begin{align}
		\sup_{\kappa^{(i)}\in\mathcal{K}^{(i)}}|L_T^{(i)}(\gamma_0,\kappa^{(i)})-L_T^{(i)}(\hat\gamma,\kappa^{(i)})|\leq \sum_{j=1}^{p(p+1)}(\hat\gamma_j-\gamma_{0,j})\frac{1}{T}\sum_{t=1}^T\sup_{\theta^{(i)}\in\tilde{\mathcal{L}} \times\mathcal{V}\times\mathcal{K}^{(i)}}\left|\frac{\partial l_t(\gamma,\kappa^{(i)}}{\partial \gamma_j}\right|,
	\end{align}
	where $\tilde{\mathcal{L}}$ is chosen to be a compact subset of $(0,\infty)^{p}$ such that $(I_p-A-B)\lambda \in (0,\infty)^p$ and bounded away from zero on $\tilde{\mathcal{L}}\times\mathcal{V}$, and such that $\lambda_0$ lies in the interior of $\tilde{\mathcal{L}}$. An expression for $\partial l_t(\gamma,\kappa^{(i)})/\partial \theta^{(i)}$ can be found in \eqref{eq:loglik_app_1der} along with derivatives of $\lambda_{i,t}(\theta^{(i)})$ in \eqref{eq:lambda_der_g}-\eqref{eq:lambda_der_b} in Lemma \ref{lemma:norm2}. Notice also that $E[\sup_{\gamma\in\mathcal{L}\times\mathcal{V}}||\partial y_{i,t}^2(\gamma)/\partial\gamma||]\leq K E||X_t||^2$. By Assumption \ref{assumption1} and $\rho(B)<1$ (follows Assumption \ref{assumption1}, see Lemma 4.1 in \citet{Ling2003}) $\sup_{\theta^{(i)}\in\tilde{\mathcal{L}}\times\mathcal{V}\times\mathcal{K}^{(i)}}|\partial \lambda_{i,t}(\theta^{(i)}/\partial \gamma_j|<\infty$. This, along with $\sup_{\theta^{(i)}\in\tilde{\mathcal{L}}\times\mathcal{V}\times\mathcal{K}^{(i)}}|1/\lambda_{i,t}(\kappa^{(i)})|\leq K$, ensure that $E\left[\sup_{\theta^{(i)}\in\tilde{\mathcal{L}}\times\mathcal{V}\times\mathcal{K}^{(i)}}\partial l_t^{(i)}(\gamma,\kappa^{(i)})/\partial\gamma_j  |\right]\leq \infty$. By the ergodic theorem and \eqref{eq:consistency_gamma0}-\eqref{eq:consistency_gamma} we find that, $\sup_{\kappa^{(i)}\in \mathcal{K}^{(i)}}\left| L_T^{(i)}(\gamma_0,\kappa^{(i)})-L_T^{(i)}(\hat\gamma,\kappa^{(i)})\right|\overset{a.s.}{\rightarrow}0$.
	
	Next, consider the second term of \eqref{eq:lemmaB4_1},
	\begin{align}
			&\sup_{\kappa^{(i)}\in \mathcal{K}^{(i)}}\left| L_T^{(i)}(\hat\gamma,\kappa^{(i)})-L_{T,h}^{(i)}(\hat\gamma,\kappa^{(i)})  \right| \leq
			K\frac{1}{T}\sum_{t=1}^T\sup_{\kappa^{(i)}\in  \mathcal{K}^{(i)}}\left|b_i^t(\lambda_{i,0}(\hat\gamma,\kappa^{(i)})-\lambda_{i,0,h}(\hat\gamma,\kappa^{(i)}))\right|+ \notag \\
			&
			\frac{1}{T}\sum_{t=1}^TK\sup_{\kappa^{(i)}\in  \mathcal{K}^{(i)}}\left| ||X_t||^2\left[\frac{1}{\lambda_{i,t,h}(\hat\gamma,\kappa^{(i)})}\left(\lambda_{i,t,h}(\hat\gamma,\kappa^{(i)})-\lambda_{i,t}(\hat\gamma,\kappa^{(i)})\right)\frac{1}{\lambda_{i,t}(\hat\gamma,\kappa^{(i)})}\right] \right| \leq \notag \\
			&  K\frac{1}{T}\sum_{t=1}^T\phi^t(1+||X_t||^2)
			\label{eq:lemmaB4_2}
	\end{align}
	by $\log x\leq x-1$ for $x\geq1$ along with \eqref{eq:lambda_app_diff},  \eqref{eq:l_invers} and $\sup_{\gamma\in \mathcal{H}}|V_i'X_t|\leq K||X_t||$. Additionally, we use that for any $j\in \mathbb{Z}$, $\sup_{\kappa^{(i)}\in\mathcal{K}^{(i)}}|b_i^j|\leq K\phi^j$ (\citet{francq2011} (p.611) or \citet{francq2004} (p.616)), along with \ref{eq:consistency_gamma} and the compactness of $\mathcal{K}^{(i)}$ for $T\rightarrow \infty$,
	\begin{align}
	K\frac{1}{T}\sum_{t=1}^T\sup_{\kappa^{(i)}\in  \mathcal{K}^{(i)}}\left|b_i^t(\lambda_{i,0}(\hat\gamma,\kappa^{(i)})-\lambda_{i,0,h}(\hat\gamma,\kappa^{(i)}))\right| \leq K\frac{1}{T}\sum_{t=1}^T\phi^t
	\text{\ \ \ \ \ a.s.} \label{eq:lemma4_1}
	\end{align}
	Hence,
	\begin{align*}
		\sup_{\kappa^{(i)}\in  \mathcal{K}^{(i)}}\left| L_T^{(i)}(\hat\gamma,\kappa^{(i)})-L_{T,h}^{(i)}(\hat\gamma,\kappa^{(i)})  \right| \leq  K\frac{1}{T}\sum_{t=1}^T\phi^t(1+||X_t||^2)
		 \ \ \ \ \ \ \text{a.s.}
	\end{align*}
	For any $\epsilon>0$, we use Markov's inequality and Assumption \ref{assumption1},
	\begin{align*}
		\sum_{t=1}^\infty P(\phi^t||X_t||^2>\epsilon)\leq \sum_{t=1}^\infty\frac{\phi^t E||X_t||^2}{\epsilon}<\infty,
	\end{align*}
	Next, by the Borel-Cantelli lemma, $\phi^t||X_t||^2\overset{a.s.}{\rightarrow}0$, and finally, by Cesaro's mean theorem
	\begin{align*}
		\frac{1}{T}\sum_{t=1}^T\phi^t||X_t||^2\overset{a.s.}{\rightarrow}0,
	\end{align*}
	we conclude that the initial values are asymptotically irrelevant for consistency of the estimator.
\end{proof}

\subsection{Lemmata for the proof of asymptotic normality}

\begin{lemma}[Rewriting the two-step estimator in vector form]
	Under Assumptions \ref{assumption1}-\ref{assumption5}, for $T\rightarrow \infty$, the two-step estimator can be written jointly as,
	\begin{align}
	&\sqrt{T}
	\begin{pmatrix}
		\hat\gamma-\gamma_0 \\
		\frac{\partial L_T^{(i)}(\theta_0)}{\partial\kappa}
	\end{pmatrix} 
	=
	\notag 
	  \\
	&
	\frac{1}{\sqrt{T}}\sum_{t=1}^T
	\begin{pmatrix}
		D(I_{p^2}-\tilde A_0-\tilde B_0)^{-1}(I_{p^2}-\tilde B_0)(\Lambda_{0,t}^{1/2})^{\otimes2}\text{vec}(Z_tZ_t'-I_p) \\
		V_{0,1}'\otimes(\lambda_{0,1}I_p-H_0)^+V_0^{\otimes2}(I_{p^2}-\tilde A_0-\tilde B_0)^{-1}(I_{p^2}-\tilde B_0)(\Lambda_{0,t}^{1/2})^{\otimes2}\text{vec}(Z_tZ_t'-I_p) \\
		\vdots \\
		V_{0,p}'\otimes(\lambda_{0,p}I_p-H_0)^+V_0^{\otimes2}(I_{p^2}-\tilde A_0-\tilde B_0)^{-1}(I_{p^2}-\tilde B_0)(\Lambda_{0,t}^{1/2})^{\otimes2}\text{vec}(Z_tZ_t'-I_p) \\
		-\frac{1}{\lambda_{i,t}(\theta_0^{(i)})}\frac{\partial \lambda_{i,t}(\theta_0^{(i)})}{\partial \kappa^{(i)}}(z_{i,t}^2-1)
	\end{pmatrix}
	+o_P(1). \label{eq:est_rewritten}
	\end{align}
	where $(\cdot)^+$ is the Moore-Penrose inverse, $D$ is $p\times p^2$ with all elements zero, apart from one element on each row, which is $d_{j,j+(j-1)p}=1$ for $j=1,\hdots,p$ and $\tilde A$ and $\tilde B$ are given in \eqref{eq:vec-lambda_AB}.
	\label{lemma:norm0}
\end{lemma}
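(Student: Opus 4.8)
The plan is to establish the two blocks of \eqref{eq:est_rewritten} separately and then stack them. The score block is immediate. Since $y_{i,t}(\gamma)=V_i'X_t$ does not depend on the dynamic parameters, the term $\partial y_{i,t}/\partial\kappa^{(i)}$ in \eqref{eq:loglik_app_1der} vanishes, so $\partial l_t^{(i)}/\partial\kappa^{(i)}=(1-y_{i,t}^2/\lambda_{i,t})\lambda_{i,t}^{-1}(\partial\lambda_{i,t}/\partial\kappa^{(i)})$; evaluating at $\theta_0^{(i)}$ and using $y_{i,t}^2(\gamma_0)/\lambda_{i,t}(\theta_0^{(i)})=z_{i,t}^2$ yields exactly the last row of \eqref{eq:est_rewritten}, which is a martingale difference because $z_{i,t}$ is independent of the past. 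The work therefore lies entirely in the first-step block $\hat\gamma-\gamma_0$.

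For the first-step block I would start from $X_t=V_0\Lambda_{0,t}^{1/2}Z_t$, so that $\text{vec}(X_tX_t')=V_0^{\otimes2}(\Lambda_{0,t}^{1/2})^{\otimes2}\text{vec}(Z_tZ_t')$. Splitting $\text{vec}(Z_tZ_t')=\text{vec}(Z_tZ_t'-I_p)+\text{vec}(I_p)$ gives $\text{vec}(X_tX_t')=V_0^{\otimes2}\eta_t+\text{vec}(H_{0,t})$, where $\eta_t:=(\Lambda_{0,t}^{1/2})^{\otimes2}\text{vec}(Z_tZ_t'-I_p)$ is a martingale difference and $\text{vec}(H_{0,t})=V_0^{\otimes2}\text{vec}(\Lambda_{0,t})$. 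Subtracting $\text{vec}(H_0)$ and summing, $\sqrt T\,\text{vec}(\hat H-H_0)=\tfrac{1}{\sqrt T}\sum_t V_0^{\otimes2}\eta_t+\tfrac{1}{\sqrt T}\sum_t V_0^{\otimes2}(\text{vec}(\Lambda_{0,t})-\mu)$ with $\mu=E[\text{vec}(\Lambda_{0,t})]$. To treat the second, serially dependent, sum I would invoke the restricted BEKK representation \eqref{eq:vec-lambda_AB} of $\Lambda_t$, which yields $\text{vec}(\Lambda_{0,t})-\mu=(\tilde A_0+\tilde B_0)(\text{vec}(\Lambda_{0,t-1})-\mu)+\tilde A_0\eta_{t-1}$ and hence, since $\rho(\tilde A_0+\tilde B_0)=\rho(A_0+B_0)<1$, the convergent MA$(\infty)$ form $\text{vec}(\Lambda_{0,t})-\mu=\sum_{k\geq0}(\tilde A_0+\tilde B_0)^k\tilde A_0\eta_{t-1-k}$.

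With this representation I would carry out the targeting manipulation: interchanging the order of summation and replacing each shifted partial sum $\tfrac{1}{\sqrt T}\sum_{t=1}^T\eta_{t-1-k}$ by $\tfrac{1}{\sqrt T}\sum_{t=1}^T\eta_t$ at the cost of at most $2(k+1)$ boundary terms, whose total contribution is $O_P(T^{-1/2})\sum_k(k+1)\lVert(\tilde A_0+\tilde B_0)^k\tilde A_0\rVert=o_P(1)$ by geometric decay. This collapses the second sum to $V_0^{\otimes2}(I_{p^2}-\tilde A_0-\tilde B_0)^{-1}\tilde A_0\tfrac{1}{\sqrt T}\sum_t\eta_t+o_P(1)$, and adding the first sum with the identity $I_{p^2}+(I_{p^2}-\tilde A_0-\tilde B_0)^{-1}\tilde A_0=(I_{p^2}-\tilde A_0-\tilde B_0)^{-1}(I_{p^2}-\tilde B_0)$ gives the compact form $\sqrt T\,\text{vec}(\hat H-H_0)=V_0^{\otimes2}(I_{p^2}-\tilde A_0-\tilde B_0)^{-1}(I_{p^2}-\tilde B_0)\tfrac{1}{\sqrt T}\sum_t\eta_t+o_P(1)$.

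Finally I would pass from $\text{vec}(\hat H-H_0)$ to $\hat\gamma-\gamma_0=[(\hat\lambda-\lambda_0)',(\hat\upsilon-\upsilon_0)']'$ by the delta method. Under Assumption \ref{assumption1.5new} all unconditional eigenvalues are simple, so by Theorem 1 of \citet{magnus1985} the maps $H\mapsto\lambda$ and $H\mapsto\upsilon$ are continuously differentiable near $H_0$, with $\partial\lambda_i/\partial\text{vec}(H)'=V_{0,i}'\otimes V_{0,i}'$ and $\partial V_i/\partial\text{vec}(H)'=V_{0,i}'\otimes(\lambda_{0,i}I_p-H_0)^+$. A mean-value expansion around $H_0$, combined with $\hat H\overset{a.s.}{\rightarrow}H_0$, continuity of these derivatives, and $\sqrt T\,\text{vec}(\hat H-H_0)=O_P(1)$, lets me replace the derivative at the intermediate point by its value at $H_0$ up to $o_P(1)$. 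Left-multiplying the displayed expression by the eigenvalue rows and using the orthonormality identity $(V_{0,i}'\otimes V_{0,i}')V_0^{\otimes2}=e_i'\otimes e_i'$, which stacks into $D$, produces the eigenvalue rows of \eqref{eq:est_rewritten}, while the eigenvector rows retain the explicit factor $V_0^{\otimes2}$; stacking with the score block completes the proof. The main obstacle is the targeting manipulation of the preceding paragraph, namely the rigorous interchange of summation and control of the boundary terms at $\sqrt T$-scale, together with verifying that the delta-method remainder is negligible; both rely on $E\lVert X_t\rVert^4<\infty$ (Assumption \ref{assumption4}) to guarantee that $\eta_t$ has finite variance.
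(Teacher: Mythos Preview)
Your proposal is correct and shares the paper's overall architecture: derive a martingale-difference representation of $\sqrt{T}\,\text{vec}(\hat H-H_0)$ via the BEKK recursion for $\text{vec}(\Lambda_{0,t})$, pass to $\hat\gamma-\gamma_0$ using the \citet{magnus1985} derivatives, and observe that the score block is already a martingale difference. The one substantive difference is how you carry out the targeting manipulation. The paper (following \citet{Pedersen2014}) sums the BEKK recursion over $t=1,\dots,T$, shifts the lagged sums by one period, and solves the resulting linear equation for $\tfrac{1}{T}\sum_t\text{vec}(\Lambda_{0,t})$ in terms of $\text{vec}(\hat\Lambda)$; this leaves a \emph{single} pair of boundary terms at $t=0$ and $t=T$, shown to be $O_P(T^{-1})$ by Markov's inequality under second moments. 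Your MA$(\infty)$ route reaches the same algebraic identity $I_{p^2}+(I_{p^2}-\tilde A_0-\tilde B_0)^{-1}\tilde A_0=(I_{p^2}-\tilde A_0-\tilde B_0)^{-1}(I_{p^2}-\tilde B_0)$, but through an infinite sum of shifted partial sums whose boundary corrections you control by geometric decay; both are valid, the paper's finite-sum trick being somewhat more direct. Two minor points: your claimed equality $\rho(\tilde A_0+\tilde B_0)=\rho(A_0+B_0)$ need not hold exactly (the Kronecker structure of $\tilde B_0=B_0^{1/2}\otimes B_0^{1/2}$ contributes additional eigenvalues $\sqrt{b_ib_j}$), but the required inequality $\rho(\tilde A_0+\tilde B_0)<1$ does follow from $\rho(A_0+B_0)<1$ together with $\rho(B_0)<1$. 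And for the eigenvalue rows the paper writes $\hat\lambda-\lambda_0=D(V_0')^{\otimes2}(\text{vec}(\hat H)-\text{vec}(H_0))$ as though it were exact, whereas your explicit delta-method argument, mirroring the paper's own treatment of the eigenvector rows, is the more careful formulation.
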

\begin{proof}
	In rewriting the estimator in vector form, we partly follow \citet{Pedersen2014} (Lemma B.8) and rewrite $\text{vec}(\hat H)-\text{vec}(H_0)$ in terms of the GARCH parameters. The remainder of the proof is distinctly different, as we have to recast the vector of dynamic eigenvalues, $\lambda_t$, in a BEKK$(p^2,1,1,1)$ parameterization, and state the first step estimator in terms of $\hat\gamma-\gamma_0$ rather than $\text{vec}(\hat H)-\text{vec}(H_0)$.

	First, consider the moment estimator of the unconditional covariance matrix,
	\begin{align*}
	 	\hat H = \frac{1}{T}\sum_{t=1}^TX_tX_t' =  \frac{1}{T}\sum_{t=1}^T H_{0,t}^{1/2}Z_tZ_t'(H_{0,t}^{1/2})'= \frac{1}{T}\sum_{t=1}^T V_0\Lambda_{0,t}^{1/2}Z_tZ_t'\Lambda_{0,t}^{1/2}V_0'.
	\end{align*}
	Recall that $Y_{0,t}=V'_0X_t$ and define $\hat\Lambda = \frac{1}{T}\sum_{t=1}^TY_{0,t}Y_{0,t}'=\frac{1}{T}\sum_{t=1}^T\Lambda_{0,t}^{1/2}Z_tZ_t'\Lambda_{0,t}^{1/2}$, such that,
	\begin{align}
		\text{vec}(\hat\Lambda) = (V_0')^{\otimes2}\text{vec}\left(\hat H\right) & = 	\text{vec}\left(\frac{1}{T}\sum_{t=1}^T \Lambda_{0,t}^{1/2}Z_tZ_t'\Lambda_{0,t}^{1/2}\right) \notag\\
	 	& = \left( \frac{1}{T}\sum_{t=1}^T(\Lambda_{0,t}^{1/2})^{\otimes2}\text{vec}(Z_tZ_t'-I_p) + \text{vec}\left( \frac{1}{T}\sum_{t=1}^T\Lambda_{0,t}\right)\right).
	 	\label{eq:mom_rewrite}
	\end{align}
	Next, we need to rewrite the conditional eigenvalues in a ``vec''-reparameterization. That is, we first write the dynamics of $\Lambda_{0,t}$ to be nested in the BEKK-GARCH, and then we apply the vec-operator to obtain the vec-parameterization of the conditional eigenvalues. Hence,
	\begin{align}
		\Lambda_{0,t}= C +\sum_{i=1}^{p^2}A_{i,0}Y_{0,t-1}Y_{0,t-1}'A_{i,0}'+B_{1,0}\Lambda_{0,t-1}B_{1,0}',
		\label{eq:BEKK-lambda}
	\end{align}
	with $C=(I_p-\sum_{i=1}^{p^2}A_{i,0}\Lambda_0A_{i,0}'+B_{1,0}\Lambda_{0}B_{1,0}')$, and $A_i$ are restricted parameter matrices, e.g. for the bivariate case,
	\begin{align}
		A_1 = \begin{pmatrix}
			\sqrt{a_{11}} & 0 \\ 0 & 0 
		\end{pmatrix},  \ \ 
		A_2 = \begin{pmatrix}
			0 & \sqrt{a_{12}} \\ 0 & 0 
		\end{pmatrix}, \ \ 
		A_3 = \begin{pmatrix}
			0 & 0 \\ \sqrt{a_{21}} & 0 
		\end{pmatrix}, \ \ 
		A_4 = \begin{pmatrix}
			0 & 0 \\ 0 & \sqrt{a_{22}}
		\end{pmatrix}
		 \label{eq:BEKK-lambda_A}
	\end{align}
	and $B_1$ is
	\begin{align}
		B_1 = B^{1/2} \label{eq:BEKK-lambda_B}.
	\end{align}
	The vec-reparameterization is therefore,
	\begin{align}
		\text{vec}(\Lambda_{0,t} ) = ( I_{p^2}- \tilde A_0 - \tilde B_0)\text{vec}(\Lambda_0)+\tilde A_0 \text{vec}(Y_{0,t-1}Y_{0,t-1}') +\tilde B_0 \text{vec}(\Lambda_{0,t-1}),
	\end{align}
	where
	\begin{align}
		\tilde A = \sum_{i=1}^{p^2} A_i^{\otimes2}, \ \ \ \  \tilde B = B_1^{\otimes2}.
		\label{eq:vec-lambda_AB}
	\end{align}
	We now use this reparameterzation of the model to rewrite $\text{vec}\left( \frac{1}{T}\sum_{t=1}^T\Lambda_{0,t}\right)$ as follows,
	\begin{align*}
		\text{vec}\left( \frac{1}{T}\sum_{t=1}^T\Lambda_{0,t}\right) &  =\left(I_{p^2}-\tilde A_0-\tilde B_0\right)\Lambda_0 +\tilde A_0\text{vec}\left(\frac{1}{T}\sum_{t=1}^TY_{0,t-1}Y_{0,t-1}'\right)+\tilde B_{0}\text{vec}\left(\frac{1}{T}\sum_{t=1}^T\Lambda_{0,t-1}\right) \\ 
		& = \left(I_{p^2}-\tilde A_0-\tilde B_0\right)\Lambda_0 +\tilde A_0\text{vec}\left(\frac{1}{T}\sum_{t=1}^TY_{0,t}Y_{0,t}'\right)+\tilde B_{0}\text{vec}\left(\frac{1}{T}\sum_{t=1}^T\Lambda_{0,t}\right) \\
		& \ \ \ \  +\frac{1}{T}\left(\tilde A_0(\text{vec}(Y_{0,0}Y_{0,0}')-\text{vec}(Y_{0,T}Y_{0,T}')) +\tilde B_0 (\text{vec}(\Lambda_{0,0})-\text{vec}(\Lambda_{0,T}))\right).
	\end{align*}
	Collecting terms, and noting that $\hat\Lambda=\frac{1}{T}\sum_{t=1}^TY_{0,t}Y_{0,t}'$, yields,
	\begin{align*}
		\text{vec}\left( \frac{1}{T}\sum_{t=1}^T\Lambda_{0,t}\right) & = [(I_{p^2}-\tilde B_0)]^{-1}(I_{p^2}-\tilde A_0-\tilde B_0)\text{vec}(\Lambda_0) + [(I_{p^2}-\tilde B_0)]^{-1} \tilde A_0 \text{vec}(\hat\Lambda)\\
		& \ \ \ \  +\frac{1}{T}[(I_{p^2}-\tilde B_0)]^{-1}\left(\tilde A_0(\text{vec}(Y_{0,0}Y_{0,0}')-\text{vec}(Y_{0,T}Y_{0,T}') +\tilde B_0 (\text{vec}(\Lambda_{0,0})-\text{vec}(\Lambda_{0,T}))\right),
	\end{align*}
	where $(I_{p^2}-\tilde B_0)$ is invertible since $B$ is diagonal with $\rho(B_0) = \rho(\tilde B_0)<1$. Insert this into \eqref{eq:mom_rewrite} and rearrange,
	\begin{align*}
		&\text{vec}(\hat\Lambda) - \text{vec}(\Lambda_0) =  
		\left(I_{p^2}-\tilde A_0-\tilde B_0 \right)^{-1}\left(I_{p^2}-\tilde B_0\right) \frac{1}{T}\sum_{t=1}^T(\Lambda_{0,t}^{1/2})^{\otimes2}\text{vec}(Z_tZ_t'-I_p) \\
		& 
		+\frac{1}{T}\left(I_{p^2}-\tilde A_0-\tilde B_0 \right)^{-1}\left(\tilde A_0(\text{vec}(Y_{0,0}Y_{0,0}')-\text{vec}(Y_{0,T}Y_{0,T}')) +\tilde B_0 (\text{vec}(\Lambda_{0,0})-\text{vec}(\Lambda_{0,T}))\right)
	\end{align*}
	By Markov's inequality it holds that for $\epsilon>0$,
	\begin{align*}
		& P\left(\Big|\Big|\frac{1}{T}(I_{p^2}-\tilde A_0-\tilde B_0)^{-1}\left(\tilde A_0(\text{vec}(Y_{0,0}Y_{0,0}')-\text{vec}(Y_{0,T}Y_{0,T}')) +\tilde B_0 (\text{vec}(\Lambda_{0,0})-\text{vec}(\Lambda_{0,T}))\right)\Big|\Big|>\epsilon\right) \\
		& \leq \frac{KE||X_t||^2}{T\epsilon}\rightarrow 0,
	\end{align*}
	as $T\rightarrow \infty$. This yields,
	\begin{align*}
		\text{vec}(\hat\Lambda)-\text{vec}(\Lambda_0) = &
		(I_{p^2}-\tilde A_0-\tilde B_0)^{-1}(I_{p^2}-\tilde B_0) \frac{1}{T}\sum_{t=1}^T(\Lambda_{0,t}^{1/2})^{\otimes2}\text{vec}(Z_tZ_t'-I_p) +o_p(T^{-1}).
	\end{align*}
	Recall that $\text{vec}(\hat H)-\text{vec}(H_0)=V_0^{\otimes2}\left(\text{vec}(\hat\Lambda)-\text{vec}(\Lambda_0)\right)$, and we find that,
	\begin{align}
		&\sqrt{T}(\text{vec}(\hat H)-\text{vec}(H_0)) = \notag \\
		& V_0^{\otimes2}(I_{p^2}-\tilde A_0-\tilde B_0)^{-1}(I_{p^2}-\tilde B_0) \frac{1}{\sqrt{T}}\sum_{t=1}^T(\Lambda_{0,t}^{1/2})^{\otimes2}\text{vec}(Z_tZ_t'-I_p) +o_p(T^{-1/2}).
		\label{eq:mom_done}
	\end{align}	
	As the model is parameterized in terms of the eigenvalues and -vectors, we now restate \eqref{eq:mom_done} in terms of $\lambda$ and $\upsilon$. Notice that $\hat\lambda-\lambda_0=D(V_0')^{\otimes2}\left(\text{vec}(\hat H)-\text{vec}(H_0)\right)$, where $D$ is a $p\times p^2$ matrix of zeros, apart from $p$ elements, $d_{i,i+(i-1)p}=1$ for $i=1, \hdots, p$, such that $D\text{vec}(\Lambda)=\lambda$, and we find that,
	\begin{align}
		\sqrt{T}(\hat\lambda-\lambda_0) = D(I_{p^2}-\tilde A_0-\tilde B_0)^{-1}(I_{p^2}-\tilde B_0) \frac{1}{\sqrt{T}}\sum_{t=1}^T(\Lambda_{0,t}^{1/2})^{\otimes2}\text{vec}(Z_tZ_t'-I_p) +o_p(T^{-1/2}).
		\label{eq:mom_lambda_done}
	\end{align}
	Next, since $\upsilon$ does not have a closed form solution as a function of $H$, we apply the mean-value theorem, and use the following result from \citet{magnus1985} (Theorem 1)
	\begin{align}
	\frac{\partial V_j}{\partial \text{vec}(H)} = V_{j}^\prime\otimes(\lambda_j I_p-H)^+
	\label{eq:V_der}
	\end{align}
	where $(\lambda_j I_p-H)^+$ is the Moore-Penrose (pseudo-) inverse of $(\lambda_j I_p-H)$. From this, we can apply the mean-value theorem to the $j$'th eigenvector, 
	\begin{align}
		\sqrt{T}(\hat V_j-V_{0,j}) & = V_{0,j}^\prime\otimes(\lambda_{0,j} I_p-H_0)^+(\text{vec}\sqrt{T}(\hat H)-\text{vec}(H_0)) \notag \\
		 & + \underbrace{(V_{0,j}^\prime\otimes(\lambda_{0,j} I_p-H_0)^+ - \tilde V_{j}^\prime\otimes(\tilde \lambda_{j} I_p-\tilde H_0)^+)\sqrt{T}(\text{vec}(\hat H)-\text{vec}(H_0))}_{=o_p(1)} \notag \\
		 & = V_{0,j}^\prime\otimes(\lambda_{0,j} I_p-H_0)^+\sqrt{T}(\text{vec}(\hat H)-\text{vec}(H_0)) + o_p(1), \ \ \ \ \ \ j =1,\hdots, p,
		\label{eq:mom_V_i_done}
	\end{align}
	where $\tilde H = \tilde V \tilde \Lambda \tilde V'$ is on the line between $H_0$ and $\hat H$. Hence, by \eqref{eq:mom_done} and \eqref{eq:mom_V_i_done},
	\begin{align}
		& \sqrt{T}(\hat\upsilon-\upsilon_0) = \notag \\
		&
		\begin{pmatrix}
			V_{0,1}^\prime\otimes(\lambda_{0,1} I_p-H_0)^+V_0^{\otimes2}(I_{p^2}-\tilde A_0-\tilde B_0)^{-1}(I_{p^2}-\tilde B_0) \frac{1}{\sqrt{T}}\sum_{t=1}^T(\Lambda_{0,t}^{1/2})^{\otimes2}\text{vec}(Z_tZ_t'-I_p) \\
			\vdots \\
			V_{0,p}^\prime\otimes(\lambda_{0,p} I_p-H_0)^+V_0^{\otimes2}(I_{p^2}-\tilde A_0-\tilde B_0)^{-1}(I_{p^2}-\tilde B_0) \frac{1}{\sqrt{T}}\sum_{t=1}^T(\Lambda_{0,t}^{1/2})^{\otimes2}\text{vec}(Z_tZ_t'-I_p) 
		\end{pmatrix}
		 +o_p(1)
		\label{eq:mom_V_done}
	\end{align}
	Finally, note that 
	\begin{align}
	\frac{\partial l^{(i)}_{t}(\theta_0^{(i)})}{\partial\kappa^{(i)}} & = -\frac{1}{\lambda_{i,t}(\theta_0^{(i)})}\frac{\partial\lambda_{i,t}(\theta_0^{(i)})}{\partial\kappa^{(i)}}(z_{i,t}^2-1).
	\label{eq:foc_kappa0}
	\end{align}
	Hence, by \eqref{eq:mom_lambda_done}, \eqref{eq:mom_V_done} and \eqref{eq:foc_kappa0}, we conclude that \eqref{eq:est_rewritten} holds.
\end{proof}

\begin{lemma}[Joint normality of parameter vector]
	Under Assumptions \ref{assumption1}-\ref{assumption5}
	\begin{align}
	 \sqrt{T}
	\begin{pmatrix}
		\hat\gamma-\gamma_0 \\
		\frac{\partial L_T^{(i)}(\theta_0)}{\partial\kappa}
	\end{pmatrix} \overset{D}{\rightarrow} N(0,\Omega_0^{(i)}). \label{eq:norm_proof}
	\end{align} with $\Omega_0^{(i)}$ defined in \eqref{eq:asymp_var}.
	\label{lemma:norm1}
\end{lemma}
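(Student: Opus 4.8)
The plan is to build directly on Lemma \ref{lemma:norm0}, which already rewrites the left-hand side of \eqref{eq:norm_proof} as a normalized sum $\frac{1}{\sqrt{T}}\sum_{t=1}^T\omega_t+o_P(1)$, where $\omega_t$ is the vector-valued summand displayed in \eqref{eq:est_rewritten} (the population analogue of $\hat\omega_t$ in \eqref{eq:num_approx2}). The strategy is then to show that $\{\omega_t\}$ is a strictly stationary, ergodic, square-integrable martingale difference sequence, apply a central limit theorem for such sequences to obtain $\frac{1}{\sqrt{T}}\sum_{t=1}^T\omega_t\overset{D}{\rightarrow}N(0,\Omega_0^{(i)})$ with $\Omega_0^{(i)}=E[\omega_t\omega_t']$, and finally invoke Slutsky's theorem to discard the $o_P(1)$ remainder.

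First I would fix the filtration $\mathcal{F}_{t-1}=\sigma(Z_{t-1},Z_{t-2},\hdots)$ and verify the martingale difference property block by block. In each of the $\gamma$-blocks the random factor enters only through $(\Lambda_{0,t}^{1/2})^{\otimes2}\text{vec}(Z_tZ_t'-I_p)$, premultiplied by constant matrices built from $V_0$, $H_0$, $\tilde A_0$, $\tilde B_0$; since $\Lambda_{0,t}$ is $\mathcal{F}_{t-1}$-measurable while $Z_t$ is independent of $\mathcal{F}_{t-1}$ with $E[\text{vec}(Z_tZ_t'-I_p)\mid\mathcal{F}_{t-1}]=0$, each block has conditional mean zero. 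The $\kappa$-block $-\frac{1}{\lambda_{i,t}(\theta_0^{(i)})}\frac{\partial\lambda_{i,t}(\theta_0^{(i)})}{\partial\kappa^{(i)}}(z_{i,t}^2-1)$ is treated identically, using that $\lambda_{i,t}(\theta_0^{(i)})$ and its derivative are $\mathcal{F}_{t-1}$-measurable and $E[z_{i,t}^2-1\mid\mathcal{F}_{t-1}]=0$. Hence $E[\omega_t\mid\mathcal{F}_{t-1}]=0$. Strict stationarity and ergodicity of $\{\omega_t\}$ follow from Assumption \ref{assumption1}, as $\omega_t$ is a fixed measurable function of the strictly stationary ergodic sequence $\{Z_s\}_{s\leq t}$.

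The crucial step is square-integrability, $\Omega_0^{(i)}:=E[\omega_t\omega_t']<\infty$, and this is exactly where Assumption \ref{assumption4} enters. Conditioning on $\mathcal{F}_{t-1}$, the $\gamma$-blocks contribute second moments of the form $E[(\Lambda_{0,t}^{1/2})^{\otimes2}\,\Sigma_Z\,(\Lambda_{0,t}^{1/2})^{\otimes2}]$, where $\Sigma_Z=E[\text{vec}(Z_tZ_t'-I_p)\text{vec}(Z_tZ_t'-I_p)']$ collects the fourth moments of the innovations. Note that $E\|X_t\|^4<\infty$ implies $E\|Z_t\|^4<\infty$, since $\|X_t\|^2=Z_t'\Lambda_{0,t}Z_t\geq(\min_j w_j)\|Z_t\|^2$ gives $E\|Z_t\|^4\leq(\min_j w_j)^{-2}E\|X_t\|^4$, so $\Sigma_Z$ is finite; and the same assumption forces $E[\lambda_{0,j,t}\lambda_{0,k,t}]<\infty$, i.e. $E\|\Lambda_{0,t}\|^2<\infty$, which is verifiable through Lemma \ref{lemma:moments}. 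The $\kappa$-block is bounded as in the consistency proof, using $1/\lambda_{i,t}(\theta_0^{(i)})\leq 1/w_i$ together with the finite second moments of $\partial\lambda_{i,t}/\partial\kappa^{(i)}$ under Assumption \ref{assumption4}. The cross-block covariances, coupling the full innovation $\text{vec}(Z_tZ_t'-I_p)$ driving the $\gamma$-part with the scalar $z_{i,t}^2-1$ driving the $\kappa$-part, are handled the same way and are finite (though not of Gaussian form, since $Z_t$ need not be normal); collecting the three blocks gives the stated $\Omega_0^{(i)}$ of \eqref{eq:asymp_var}.

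With the martingale difference structure, stationarity, ergodicity, and finite second moment established, the central limit theorem for strictly stationary ergodic martingale difference sequences (as used for the analogous step in \citet{Pedersen2014} and \citet{Ling2003}) yields $\frac{1}{\sqrt{T}}\sum_{t=1}^T\omega_t\overset{D}{\rightarrow}N(0,\Omega_0^{(i)})$, and Slutsky's theorem removes the $o_P(1)$ term from Lemma \ref{lemma:norm0}, delivering \eqref{eq:norm_proof}. I expect the main obstacle to be the finiteness-and-identification of $\Omega_0^{(i)}$ rather than the CLT itself: specifically, bounding the $\gamma$-blocks, which entangle the conditional eigenvalue levels $\Lambda_{0,t}$ (requiring $E\|X_t\|^4<\infty$ via Lemma \ref{lemma:moments}) with the fourth moments of the innovations, and carefully bookkeeping the cross-covariances between the vector innovation $\text{vec}(Z_tZ_t'-I_p)$ and the scalar $z_{i,t}^2-1$ that generate the two parts of the stacked vector.
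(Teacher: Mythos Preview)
Your proposal is correct and follows essentially the same route as the paper: start from the representation in Lemma \ref{lemma:norm0}, verify that the summand is a strictly stationary ergodic square-integrable martingale difference sequence (with the square-integrability of the $\gamma$-blocks driven by Assumption \ref{assumption4} through $E\|\Lambda_{0,t}\|^2<\infty$ and $E\|Z_t\|^4<\infty$, and the cross-blocks handled by Cauchy--Schwarz), apply the martingale CLT, and absorb the $o_P(1)$ remainder via Slutsky. The only cosmetic difference is that for the $\kappa$-block the paper bounds the ratio $\lambda_{i,t}^{-1}\partial\lambda_{i,t}/\partial\kappa^{(i)}$ directly (as in Lemma \ref{lemma:norm2}) rather than separating $1/\lambda_{i,t}\leq 1/w_i$ from the derivative, but both arguments go through under Assumption \ref{assumption4}.
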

\begin{proof}
	Similar to the variance targeting literature (e.g. \citet{Pedersen2014} proof of Lemma B.8 and B.9), we use that \eqref{eq:est_rewritten} is a martingale difference sequence to show convergence in distribution. From \eqref{eq:est_rewritten}, define 
	\begin{align*}
		\underset{e\times e}{\mathcal{Y}_t}(\theta_0^{(i)})
		:=
		\begin{pmatrix}
		\mathcal{Y}_{1,t}(\theta_0^{(i)}) \\
		\mathcal{Y}_{2,t}(\theta_0^{(i)}) \\
		\mathcal{Y}_{3,t}(\theta_0^{(i)})
		\end{pmatrix},
	\end{align*}
	where each element is,
	\begin{align*}
		\underset{p\times1}{\mathcal{Y}_{1,t}(\theta_0^{(i)})} & = D(V_0')^{\otimes2} \Gamma_0(\Lambda_{0,t}^{1/2})^{\otimes2}\epsilon_t\\
		\underset{p^2\times1}{\mathcal{Y}_{2,t}(\theta_0^{(i)})}& = 
		\begin{pmatrix}
			\chi_{0,1}\Gamma_0(\Lambda_{0,t}^{1/2})^{\otimes2}\epsilon_t \\
			\vdots \\
			\chi_{0,p}\Gamma_0(\Lambda_{0,t}^{1/2})^{\otimes2}\epsilon_t \\
		\end{pmatrix}\\
		\underset{(p+1)\times1}{\mathcal{Y}_{3,t}(\theta_0^{(i)})}& = -\frac{1}{\lambda_{i,t}(\theta_0^{(i)})}\frac{\partial \lambda_{i,t}(\theta_0^{(i)})}{\partial \kappa^{(i)}}(z_{i,t}^2-1),
	\end{align*}
	where we use the following definitions,
	\begin{align*}
		\epsilon_t & := \text{vec}(Z_tZ_t'-I_p), \\
		\Gamma_0 & := (V_0)^{\otimes2}\left[(I_{p^2}-\tilde A_0-\tilde B_0)\right]^{-1}(I_{p^2}-\tilde B_0)\\
		\chi_{0,i} & : = V_{0,i}'\otimes(\lambda_{0,i}I_p-H_0)^+
	\end{align*}
	Notice that
	\begin{align}
	E\left[\mathcal{Y}(\theta_0^{(i)})
	\right]
	= 0_{p(p+2)+1\times1}
	\end{align}
	since $Z_t$ is $iid$ with $E[Z_tZ_t']=I_p$. 

	Next, consider the covariance matrix,
	\begin{align}
		\underset{e\times e}{\Omega_0}^{(i)}:=
		E
		\left[
		\mathcal{Y}_t(\theta_0^{(i)}) \mathcal{Y}_t(\theta_0^{(i)})'		
		\right] 
		=
		E
		\left[
		\begin{pmatrix}
			\Omega_{0,11}^{(i)} & \Omega_{0,12}^{(i)} & \Omega_{0,13}^{(i)}\\
			(\Omega_{0,12}^{(i)})' & \Omega_{0,22}^{(i)} & \Omega_{0,23}^{(i)} \\
			(\Omega_{0,13}^{(i)})' & (\Omega_{0,23}^{(i)})' & \Omega_{0,33}^{(i)}\\
		\end{pmatrix}		
		\right] 
		\label{eq:asymp_var}
	\end{align}
	with 
	\begin{align*}
		&\Omega_{0,11}^{(i)} =
		D(V_0')^{\otimes2}
		\Gamma\left(\Lambda_{0,t}^{1/2}\right)^{\otimes2}\epsilon_t\epsilon_t'\left(\Lambda_{0,t}^{1/2}\right)^{\otimes2}\Gamma'(V_0)^{\otimes2}D',
		  \\
		&\Omega_{0,22}^{(i)} = 
		\begin{pmatrix}
			\chi_{0,1}\Gamma_0(\Lambda_{0,t}^{1/2})^{\otimes2}\epsilon_t\epsilon_t'(\Lambda_{0,t}^{1/2})^{\otimes2}\Gamma_0'\chi_{0,1}'& \hdots & \chi_{0,1}\Gamma_0(\Lambda_{0,t}^{1/2})^{\otimes2}\epsilon_t\epsilon_t'(\Lambda_{0,t}^{1/2})^{\otimes2}\Gamma_0'\chi_{0,p}' \\
			\vdots & \ddots & \vdots \\
			\chi_{0,p}\Gamma_0(\Lambda_{0,t}^{1/2})^{\otimes2}\epsilon_t\epsilon_t'(\Lambda_{0,t}^{1/2})^{\otimes2}\Gamma_0'\chi_{0,1}'& \hdots & \chi_{0,p}\Gamma_0(\Lambda_{0,t}^{1/2})^{\otimes2}\epsilon_t\epsilon_t'(\Lambda_{0,t}^{1/2})^{\otimes2}\Gamma_0'\chi_{0,p}' \\
		\end{pmatrix}
		  \\
		&\Omega_{0,33}^{(i)} = \frac{1}{\lambda_{i,t}^2(\theta_0^{(i)})}\frac{\partial \lambda_{i,t}(\theta_0^{(i)})}{\partial \kappa^{(i)}}\frac{\partial \lambda_{i,t}(\theta_0^{(i)})}{\partial \kappa^{(i)'}}(z_{i,t}^2-1)^2, \\
		& \Omega_{0,12}^{(i)} = D(V_0')^{\otimes2}
		\Gamma\left(\Lambda_{0,t}^{1/2}\right)^{\otimes2}\epsilon_t		
		\begin{pmatrix}	
			\epsilon_t'(\Lambda_{0,t}^{1/2})^{\otimes2}\Gamma_0'\chi_{0,1}' &
			\vdots &
			\epsilon_t'(\Lambda_{0,t}^{1/2})^{\otimes2}\Gamma_0'\chi_{0,p}' 
		\end{pmatrix} \\
		& \Omega_{0,13}^{(i)} = D(V_0')^{\otimes2}\Gamma\left(\Lambda_{0,t}^{1/2}\right)^{\otimes2}\epsilon_t\frac{1}{\lambda_{i,t}(\theta_0^{(i)})}\frac{\partial \lambda_{i,t}(\theta_0^{(i)})}{\partial \kappa^{(i)'}}(1-z_{i,t}^2), \\
		& \Omega_{0,23}^{(i)} = 		
		\begin{pmatrix}
			\chi_{0,1}\Gamma_0(\Lambda_{0,t}^{1/2})^{\otimes2}\epsilon_t \\
			\hdots \\
			\chi_{0,p}\Gamma_0(\Lambda_{0,t}^{1/2})^{\otimes2}\epsilon_t 
		\end{pmatrix}
		\frac{1}{\lambda_{i,t}(\theta_0^{(i)})}\frac{\partial \lambda_{i,t}(\theta_0^{(i)})}{\partial \kappa^{(i)'}}(1-z_{i,t}^2),
	\end{align*}
	which are $p\times p$, $p^2\times p^2$, $p+1\times p+1$, $p\times p^2$, $p\times p+1$ and $p^2\times p+1$ respectively. 

	To show that $\mathcal{Y}_t(\theta_0^{(i)})$ is square integrable, we verify that all elements of $\Omega_0^{(i)}$ are finite. By independence of $Z_t$ and $\Lambda_{0,t}$,
	\begin{align*}
		&E\left[|| \Omega_{0,11} || \right] \leq K E\left[\Big|\Big|\left(\Lambda_{0,t}^{1/2}\right)^{\otimes2}\Big|\Big|^2\right]E\left[||\epsilon_t||^2\right], \\
		&E\left[|| \Omega_{0,22} || \right] \leq K E\left[\Big|\Big|\left(\Lambda_{0,t}^{1/2}\right)^{\otimes2}\Big|\Big|^2\right]E\left[||\epsilon_t||^2\right],
	\end{align*}
	and using the euclidean matrix norm, $||A||=\sqrt{\text{tr}(A'A)}$ along with $\text{tr}(A\otimes B)=\text{tr}(A)\text{tr}(B)$ (for $A$ and $B$ square), 
	\begin{align*}
		E\left[\Big|\Big|\left(\Lambda_{0,t}^{1/2}\right)^{\otimes2}\Big|\Big|^2\right] = E\left[\text{tr}\left(\Lambda_{0,t}\right)^2\right] = E\left[\left(\sum_{i=1}^p\lambda_{i,t}(\theta_0^{(i)})\right)^2\right]\leq K<\infty,
	\end{align*}
	by Assumption \ref{assumption4}. Moreover, 
	\begin{align*}
		E\left[||\epsilon_t||^2\right] \leq E\left[||Z_t||^4\right]+K<\infty,
	\end{align*}
	as $E\left[||Z_t||^4\right]\leq KE\left[||X_t||^4\right]$. Hence $E\left[||\Omega^{(i)}_{0,11}||\right]<\infty$ and $E\left[||\Omega^{(i)}_{0,22}||\right]<\infty$. Next,
	\begin{align*}
		E\left[||\Omega_{0,33}^{(i)}||\right]\leq K<\infty,
	\end{align*}
	by Assumption \ref{assumption4}. Finally, $E\left[||\Omega_{0,12}^{(i)}||\right]$, $E\left[||\Omega_{0,13}^{(i)}||\right]$ and $E\left[||\Omega_{0,23}^{(i)}||\right]$ are finite by the Cauchy-Schwarz inequality.

	Because $\mathcal{Y}_t(\theta_0^{(i)})$ is a square integrable ergodic martingale difference sequence, we can invoke the central limit theorem for martingale differences, see e.g. \citet{Brown1971}, implying that \eqref{eq:norm_proof} holds.
\end{proof}

\begin{lemma}[Finite expectation of second order derivative]
	Under Assumptions \ref{assumption1}-\ref{assumption5}
	\begin{align*}
		E\left[\sup_{\theta^{(i)}\in\Theta^{(i)}}\left| \frac{\partial^2 l_t^{(i)}(\theta^{(i)})}{\partial\theta^{(i)}\partial\theta^{(i)'}}\right|\right]<\infty.
	\end{align*}
	\label{lemma:norm2}
\end{lemma}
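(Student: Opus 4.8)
The plan is to dominate each of the four groups of terms appearing in the explicit second derivative \eqref{eq:loglik_app_2der} by an integrable function, uniformly over the compact set $\Theta^{(i)}$, and then take expectations. First I would record the first and second derivatives of $\lambda_{i,t}(\theta^{(i)})$ with respect to the three blocks of $\theta^{(i)}=[\lambda',\upsilon',\kappa^{(i)\prime}]'$ — the expressions referenced as \eqref{eq:lambda_der_g}--\eqref{eq:lambda_der_b} — obtained by differentiating the ARCH$(\infty)$ representation $\lambda_{i,t}(\theta^{(i)})=\sum_{l\geq0}b_i^l(w_i+\sum_j a_{ij}y_{j,t-1-l}^2(\gamma))$ used in the proof of Lemma \ref{lemma:con1}. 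A useful simplification is that $y_{i,t}(\gamma)=V_i'X_t$ is affine in $\upsilon$ and independent of $\lambda$ and $\kappa^{(i)}$, so $\partial^2 y_{i,t}/\partial\theta_n^{(i)}\partial\theta_m^{(i)}=0$ and each $\partial y_{i,t}/\partial\theta_n^{(i)}$ is either a single coordinate of $X_t$ or zero; this removes the $\partial^2 y_{i,t}$ contribution from \eqref{eq:loglik_app_2der} outright.

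Next I would bound the building blocks uniformly. From $\lambda_{i,t}\geq w_i/(1-b_i)\geq K>0$ on $\Theta^{(i)}$ one gets $1/\lambda_{i,t}\leq K$ as in \eqref{eq:l_invers}, together with $y_{i,t}^2/\lambda_{i,t}\leq K||X_t||^2$ and $y_{i,t}/\lambda_{i,t}\leq K||X_t||$. For the normalized derivatives $\frac{1}{\lambda_{i,t}}|\partial\lambda_{i,t}/\partial\theta_n^{(i)}|$ I would separate directions: along the eigenvalue and dynamic blocks the standard GARCH arguments (using $\rho(B)<1$, absorption into $\lambda_{i,t}$, and the device $l b_i^{l-1}\leq K\phi^l$ for the $b_i$-derivative) show these are dominated by random variables with finite moments of every order. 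The eigenvector entries are the delicate directions.

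The main obstacle is the eigenvector--eigenvector block. The eigenvector score $\partial\lambda_{i,t}/\partial\upsilon_n=\sum_l 2b_i^l a_{ij}y_{j,t-1-l}x_{k,t-1-l}$ is, naively, of order $\sum_l\phi^l||X_{t-1-l}||^2$ (degree two in $||X||$); inserting this crude bound into the second term of \eqref{eq:loglik_app_2der}, namely $(2y_{i,t}^2/\lambda_{i,t}-1)\frac{1}{\lambda_{i,t}^2}\frac{\partial\lambda_{i,t}}{\partial\theta_n^{(i)}}\frac{\partial\lambda_{i,t}}{\partial\theta_m^{(i)}}$, would produce a dominating function of degree six, which is not integrable under Assumption \ref{assumption4}. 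The resolution is a Cauchy--Schwarz step with weights $b_i^l a_{ij}$ that absorbs the quadratic factor into the normalizing denominator: since $\sum_l b_i^l a_{ij}y_{j,t-1-l}^2\leq\lambda_{i,t}$,
\[
\frac{1}{\lambda_{i,t}}\Big|\frac{\partial\lambda_{i,t}}{\partial\upsilon_n}\Big|\leq 2\lambda_{i,t}^{-1/2}\Big(\sum_l b_i^l a_{ij}x_{k,t-1-l}^2\Big)^{1/2}\leq K\Big(\sum_{l\geq0}\phi^l||X_{t-1-l}||^2\Big)^{1/2},
\]
i.e. only degree one. The same absorption keeps the mixed eigenvector--dynamic second derivatives at degree one, while the pure eigenvector second derivative $\frac{1}{\lambda_{i,t}}\partial^2\lambda_{i,t}/\partial\upsilon_n\partial\upsilon_m$ stays at degree two.

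With these bounds the four term-types in \eqref{eq:loglik_app_2der} have worst-case degrees $4$, $4$, $2$ and $3$ in $||X_t||$, each attained in the eigenvector directions. I would conclude by taking $\sup_{\theta^{(i)}\in\Theta^{(i)}}$ inside, summing over the finitely many pairs $(n,m)$ by Minkowski, and bounding each expectation by generalized H\"older — for instance exponents $(2,4,4)$ on the second term (the factor $2y_{i,t}^2/\lambda_{i,t}-1$ against two degree-one eigenvector scores) and $(2,2)$ on the first term. Finiteness of every factor then follows from Assumption \ref{assumption4}, $E||X_t||^4<\infty$, together with Minkowski applied to the geometric sums, $\big|\big|\sum_l\phi^l||X_{t-1-l}||^2\big|\big|_{L^2}\leq\sum_l\phi^l(E||X_t||^4)^{1/2}<\infty$. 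This yields $E[\sup_{\theta^{(i)}}|\partial^2 l_t^{(i)}/\partial\theta^{(i)}\partial\theta^{(i)\prime}|]<\infty$ and at the same time makes transparent why fourth moments are exactly what is needed.
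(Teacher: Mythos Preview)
Your argument is correct and self-contained, but the key device you use for the eigenvector directions differs from the paper's. You control $\frac{1}{\lambda_{i,t}}\big|\partial\lambda_{i,t}/\partial\upsilon_n\big|$ by a weighted Cauchy--Schwarz step that absorbs $\sum_l b_i^l a_{ij}y_{j,t-1-l}^2\leq\lambda_{i,t}$ into the denominator, reducing the degree to one; this, combined with your degree-counting and H\"older exponents $(2,4,4)$ and $(2,2)$, lands exactly on $E\|X_t\|^4<\infty$. The paper instead applies the Francq--Zako\"\i an inequality $z/(1+z)\leq z^s$, $s\in(0,1)$, to \emph{all} normalized first derivatives (its \eqref{eq:ratio_bounded1}--\eqref{eq:ratio_bounded3}), concluding that $\frac{1}{\lambda_{i,t}}\partial\lambda_{i,t}/\partial\theta^{(i)}_n$ has moments of every order under mere fractional moments of $X_t$; the product term then only needs $E\|X_t\|^{2+s}$, and the fourth-moment assumption is invoked separately for the remaining three groups. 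Your route is more explicit about where the fourth moment is actually consumed (the $\upsilon$--$\upsilon$ block of the outer-product term and the second $\upsilon$-derivative term), while the paper's $z^s$ device is sharper for the product term and is what underlies the remark that the profiled likelihood itself would only require $2+\delta$ moments. Either argument proves the lemma as stated.
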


\begin{proof}
	The second order derivative of $l_{i,t}(\theta^{(i)})$ is given in \eqref{eq:loglik_app_2der}, and we note that in order to show that $E\left[\sup_{\theta^{(i)}\in\Theta^{(i)}}\left| \frac{\partial^2 l_t^{(i)}(\theta^{(i)})}{\partial\theta^{(i)}\partial\theta^{(i)'}}\right|\right]<\infty$, if suffices to verify that,
	\begin{enumerate}
		\item 
		$E\left[\sup_{\theta^{(i)}\in\Theta^{(i)}}\left|
		\frac{y_{i,t}^2(\theta^{(i)})}{\lambda_{i,t}(\theta^{(i)})}\frac{1}{\lambda_{i,t}(\theta^{(i)})}\frac{\partial^2 \lambda_{i,t}(\theta^{(i)})}{\partial \theta^{(i)}\partial \theta^{(i)'}}
		\right|\right]<\infty$,
		
		\item 
		$E\left[\sup_{\theta^{(i)}\in\Theta^{(i)}}\left|
		\left( \frac{\partial^2 y_{i,t}(\theta^{(i)})}{\partial \theta^{(i)}_n \partial \theta^{(i)}_m}y_{i,t}(\theta^{(i)})	 + \frac{\partial y_{i,t}(\theta^{(i)})}{\partial \theta^{(i)}_n}\frac{\partial y_{i,t}(\theta^{(i)})}{ \partial \theta^{(i)}_m }
			\right)\frac{1}{\lambda_{i,t}(\theta^{(i)})}
		\right|\right]<\infty$,

		\item 
		$E\left[\sup_{\theta^{(i)}\in\Theta^{(i)}}\left|
		\left(\frac{\partial \lambda_{i,t}(\theta^{(i)})}{\partial\theta^{(i)}_n}\frac{\partial y_{i,t}(\theta^{(i)})}{\partial\theta^{(i)}_m}+\frac{\partial \lambda_{i,t}(\theta^{(i)})}{\partial\theta^{(i)}_m}\frac{\partial y_{i,t}(\theta^{(i)})}{\partial\theta^{(i)}_n}
		\right)\frac{y_{i,t}(\theta^{(i)})}{\lambda_{i,t}^2(\theta^{(i)})}
		\right|\right]<\infty$,

		\item 
		$E\left[\sup_{\theta^{(i)}\in\Theta^{(i)}}\left|
		\frac{y_{i,t}^2(\theta^{(i)})}{\lambda_{i,t}(\theta^{(i)})}\frac{1}{\lambda_{i,t}^2(\theta^{(i)})}\frac{\partial \lambda_{i,t}(\theta^{(i)})}{\partial \theta^{(i)}_n}\frac{\partial \lambda_{i,t}(\theta^{(i)})}{\partial \theta^{(i)}_m} 
		\right|\right]<\infty$.
	\end{enumerate}
	Inequalities 1.-3. are finite by Assumption \ref{assumption4}. The last inequality requires finite $2+s$ moments for $s>0$, as,
	\begin{align*}
		E\left[\sup_{\theta^{(i)}\in\Theta^{(i)}}\left| 
		\frac{\partial \lambda_{i,t}(\theta^{(i)})}{\partial \theta^{(i)}_n}
		\frac{1}{\lambda_{i,t}(\theta^{(i)})}
		\right|\right]\leq K.
	\end{align*}
	To see this, recall that the process for the eigenvalues can be written as,
	\begin{align*}
		\lambda_{i,t} = \sum_{l=0}^\infty b_i^l\big(w_i+\sum_{j=1}^pa_{ij}y_{j,t-l-1}^2(\theta^{(i)})\big) =\frac{w_i}{1-b_i}+\sum_{l=0}^\infty\sum_{j=1}^pb_i^la_{ij}y_{j,t-l-1}^2(\theta^{(i)})
	\end{align*}
	with $w_i = (1-b_i)\lambda_i-\sum_{j=1}^pa_{ij}\lambda_j$ and $\theta^{(i)}=[ \gamma',a_{i1}, \hdots, a_{ip}, b_i ]'$. The derivatives of the eigenvalues are,
	\begin{align}
		\frac{\partial \lambda_{i,t}(\theta^{(i)})}{\partial \gamma_k} & = \frac{\partial w_i(1-b_i)^{-1}}{\partial \gamma_k}+2\sum_{l=0}^\infty\sum_{j=1}^pb_i^la_{ij}y_{j,t-l-1}(\theta^{(i)})\frac{\partial V_i}{\partial \gamma_k}'X_t,  \label{eq:lambda_der_g}
		\\
		\frac{\partial \lambda_{i,t}(\theta^{(i)})}{\partial a_{ij}} & = 
		\frac{\partial w_i(1-b_i)^{-1}}{\partial a_{ij}}+\sum_{l=0}^\infty b_i^ly_{j,t-l-1}^2(\theta^{(i)}), \label{eq:lambda_der_a}
		\\
		\frac{\partial \lambda_{i,t}(\theta^{(i)})}{\partial b_{i}} & = 
		\frac{\partial w_i(1-b_i)^{-1}}{\partial b_{i}}+\sum_{l=1}^\infty\sum_{j=1}^p lb_i^{l-1}a_{ij}y_{j,t-l-1}^2(\theta^{(i)}). \label{eq:lambda_der_b}
	\end{align}
	By Assumption \ref{assumption5} along with the inequality $z/(1+z)\leq z^s$ for $s\in(0,1)$ for all $z\geq0$, for all interior $\theta^{(i)}\in\Theta^{(i)}$,
	\begin{align}
	\sup_{\theta^{(i)}\in\Theta^{(i)}}	\left|\frac{1}{\lambda_{i,t}(\theta^{(i)})}\frac{\partial \lambda_{i,t}(\theta^{(i)})}{\partial \gamma_j} \right| & \leq 
	\sup_{\theta^{(i)}\in\Theta^{(i)}} \left|K +K\sum_{l=0}^\infty\frac{\sum_{j=1}^pb_i^{l}a_{ij}y_{j,t-l-1}^{2}(\theta^{(i)})}{K+\sum_{j=1}^pb_i^{l}a_{ij}y_{j,t-l-1}^{2}(\theta^{(i)})} \right| 
	\notag \\ & \leq 
	\sup_{\theta^{(i)}\in\Theta^{(i)}}\left| K +K\sum_{l=0}^\infty\sum_{j=1}^pb_i^{ls}a_{ij}^sy_{j,t-l-1}^{2s}(\theta^{(i)})\ \right| \leq K, \label{eq:ratio_bounded1}
	\end{align}
	using $\sup_{\theta^{(i)}\in\Theta^{(i)}}|\frac{\partial y_{i,t-1}^2(\theta^{(i)})}{\partial \gamma_j}|=\sup_{\theta^{(i)}\in\Theta^{(i)}}|2V_i(\theta^{(i)})'X_{t-1}\frac{\partial V_i'(\theta^{(i)})}{\partial \gamma_j}X_{t-1}| \leq \sup_{\theta^{(i)}\in\Theta^{(i)}}|K y_{i,t-1}^2|$, due to the orthonormality of $V_i$. Similarly,
	\begin{align}
	\sup_{\theta^{(i)}\in\Theta^{(i)}}	\left|\frac{1}{\lambda_{i,t}(\theta^{(i)})}\frac{\partial \lambda_{i,t}(\theta^{(i)})}{\partial a_{ij}} \right| & \leq 
	\sup_{\theta^{(i)}\in\Theta^{(i)}}\left|K+K\frac{1}{a_{ij}}\sum_{l=0}^\infty\frac{b_i^l a_{ij}y^2_{j,t-l-1}(\theta^{(i)})}{K+b_i^la_{ij}y_{j,t-l-1}^2(\theta^{(i)})}\right | \notag \\ & \leq 
	\sup_{\theta^{(i)}\in\Theta^{(i)}} \left|K+K\sum_{l=0}^\infty b_i^{ls} a_{ij}^s y_{j,t-l-1}^{2s}(\theta^{(i)})\right|\leq K, \label{eq:ratio_bounded2}	\\
	\sup_{\theta^{(i)}\in\Theta^{(i)}}	\left|\frac{1}{\lambda_{i,t}(\theta^{(i)})}\frac{\partial \lambda_{i,t}(\theta^{(i)})}{\partial b_i} \right| & \leq 
	\sup_{\theta^{(i)}\in\Theta^{(i)}}\left|K+K\frac{1}{b_i}\sum_{l=1}^\infty l\frac{\sum_{j=1}^pb_i^{l}a_{ij}y^2_{j,t-l-1}(\theta^{(i)})}{K+\sum_{j=1}^pb_i^la_{ij}y_{j,t-l-1}^2(\theta^{(i)})}\right| \notag \\ &\leq 
	\sup_{\theta^{(i)}\in\Theta^{(i)}}\left|K+K\sum_{l=1}^\infty\sum_{j=1}^plb_i^{sl}a_{ij}^sy_{j,t-l-1}^{2s}(\theta^{(i)})\right|\leq K, \label{eq:ratio_bounded3}
	\end{align}
	such that \eqref{eq:ratio_bounded1}-\eqref{eq:ratio_bounded3} are finite when $\{X_t\}_{t\in\mathbb{N}}$ is stationary, ergodic and has finite fractional moments.
 	Hence the last inequality is shown to hold under Assumptions \ref{assumption1}-\ref{assumption5}.	
\end{proof}

\begin{lemma}[Uniform convergence of second order derivative]
	Under Assumptions \ref{assumption1}-\ref{assumption5}, for $T\rightarrow\infty$,
	\begin{align*}
		\sup_{\theta^{(i)}\in\Theta^{(i)}}\left| 
		\frac{\partial^2 L_t^{(i)}(\theta^{(i)})}{\partial\theta^{(i)}\partial\theta^{(i)'}}		-	E\left[\frac{\partial^2 l_t^{(i)}(\theta^{(i)})}{\partial\theta^{(i)}\partial\theta^{(i)'}}\right]
		\right|\overset{a.s.}{\rightarrow}0.
	\end{align*}
	\label{lemma:norm3}
\end{lemma}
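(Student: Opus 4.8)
The plan is to read the left-hand side as the deviation of a sample average over a strictly stationary and ergodic sequence from its expectation, and then to apply a uniform law of large numbers. Writing the (sample) log-likelihood as $L_T^{(i)}(\theta^{(i)})=\frac{1}{T}\sum_{t=1}^T l_t^{(i)}(\theta^{(i)})$ and differentiating twice gives
\[
\frac{\partial^2 L_T^{(i)}(\theta^{(i)})}{\partial\theta^{(i)}\partial\theta^{(i)'}}=\frac{1}{T}\sum_{t=1}^T\frac{\partial^2 l_t^{(i)}(\theta^{(i)})}{\partial\theta^{(i)}\partial\theta^{(i)'}},
\]
so the claim is precisely that this sample mean converges to its population counterpart, uniformly over the compact parameter space $\Theta^{(i)}$. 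Since $\{X_t\}$ is strictly stationary and ergodic by Assumption \ref{assumption1}, and each element of the Hessian in \eqref{eq:loglik_app_2der} is a measurable function of this input, the summand $\partial^2 l_t^{(i)}(\theta^{(i)})/\partial\theta^{(i)}\partial\theta^{(i)'}$ is itself a strictly stationary ergodic sequence for each fixed $\theta^{(i)}$.

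The tool is the uniform law of large numbers for stationary ergodic sequences, namely Theorem A.2.2 of \citet{White1994} already invoked in Lemma \ref{lemma:con2}. Its hypotheses are (i) compactness of the parameter space, which is Assumption \ref{assumption2}; (ii) almost-sure continuity of the map $\theta^{(i)}\mapsto \partial^2 l_t^{(i)}(\theta^{(i)})/\partial\theta^{(i)}\partial\theta^{(i)'}$ on $\Theta^{(i)}$; and (iii) the dominating moment condition $E[\sup_{\theta^{(i)}\in\Theta^{(i)}}|\partial^2 l_t^{(i)}(\theta^{(i)})/\partial\theta^{(i)}\partial\theta^{(i)'}|]<\infty$. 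Condition (iii) is exactly Lemma \ref{lemma:norm2}, and (i) is Assumption \ref{assumption2}, so the only substantive work is to verify the continuity in (ii).

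For (ii) I would argue that each building block of \eqref{eq:loglik_app_2der} is continuous in $\theta^{(i)}$. The conditional eigenvalue $\lambda_{i,t}(\theta^{(i)})$ admits the ARCH($\infty$) representation of Lemma \ref{lemma:con1}, whose terms depend continuously on $(\lambda,a_{ij},b_i)$ and enter the eigenvectors only through $y_{j,t}(\gamma)=V_j(\gamma)'X_t$; by Assumption \ref{assumption1.5new} the eigenvalues are simple, so $V_j$ (hence $y_{j,t}$ and its derivatives) are continuously differentiable functions of $H$ by Theorem 1 of \citet{magnus1985}. Moreover $\lambda_{i,t}(\theta^{(i)})\ge w_i>0$ on $\Theta^{(i)}$, so the rational expressions $1/\lambda_{i,t}$, $y_{i,t}^2/\lambda_{i,t}$ and their derivatives are continuous as well. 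Because \eqref{eq:loglik_app_2der} is a finite algebraic combination of these continuous quantities, the Hessian is continuous in $\theta^{(i)}$, giving (ii); with (i)--(iii) verified, Theorem A.2.2 of \citet{White1994} delivers the stated uniform almost-sure convergence.

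The step I expect to be the main obstacle is the continuity in (ii), because the Hessian contains the infinite-series derivatives \eqref{eq:lambda_der_g}--\eqref{eq:lambda_der_b}, so continuity of the summand is not purely algebraic but requires uniform convergence of these series on $\Theta^{(i)}$. This is handled by reusing the dominating argument of Lemma \ref{lemma:norm2}: on the compact set $\Theta^{(i)}$ one has $b_i\le\bar b<1$, so the series tails are bounded by a convergent geometric series in $\bar b$ with $X_t$-dependent coefficients of finite (fractional) expectation, which secures uniform convergence of the sums and hence their continuity. Once this is in place, the remainder is the routine verification that the three ULLN hypotheses hold and the conclusion follows directly.
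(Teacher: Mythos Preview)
Your proposal is correct and follows essentially the same approach as the paper: observe that the Hessian is a measurable function of the stationary ergodic sequence $(X_t,X_{t-1},\ldots)$, invoke Lemma \ref{lemma:norm2} for the dominating moment, and apply Theorem A.2.2 of \citet{White1994}. The paper's proof is terser and does not spell out the continuity verification you supply in step (ii); your additional detail on uniform convergence of the ARCH($\infty$) series is a welcome elaboration but not a different argument.
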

\begin{proof}
	Since $\frac{\partial^2l_{i,t}(\theta^{(i)})}{\partial\theta^{(i)}\partial\theta^{(i)'}}$ is a function of $(X_t, X_{t-1},\hdots)$ and $\theta^{(i)}$, it is strictly stationary and ergodic. The result then follows by Lemma \ref{lemma:norm2} and the uniform law of large numbers for stationary and ergodic processes, see Theorem A.2.2 of \citet{White1994}.
\end{proof}

\begin{lemma}[Non-singular $J_0^{(i)}$]
	Under Assumption \ref{assumption1}-\ref{assumption5}, $J_0^{(i)}$, given in \eqref{eq:appendix_asympvar}, is non-singular.\label{lemma:norm4}
\end{lemma}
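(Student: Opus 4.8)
The plan is to first simplify $J_0^{(i)}$ at the true parameter value and then reduce non-singularity to an identification-type argument on the score of the conditional eigenvalue. The crucial simplification is that the rotated return $y_{i,t}(\gamma)=V_i'X_t$ depends only on $\gamma$ and not on the dynamic parameters $\kappa^{(i)}$, so that $\partial y_{i,t}/\partial\kappa^{(i)}=0$ and $\partial^2 y_{i,t}/\partial\kappa^{(i)}_n\partial\kappa^{(i)}_m=0$. Restricting the general second derivative \eqref{eq:loglik_app_2der} to indices $n,m$ both belonging to $\kappa^{(i)}$ therefore annihilates the last two groups of terms, leaving only the term proportional to $\partial^2\lambda_{i,t}/\partial\kappa_n\partial\kappa_m$ and the term proportional to $(\partial\lambda_{i,t}/\partial\kappa_n)(\partial\lambda_{i,t}/\partial\kappa_m)$.

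Next I would take expectations at $\theta_0^{(i)}$, where $y_{i,t}^2(\gamma_0)/\lambda_{i,t}(\theta_0^{(i)})=z_{i,t}^2$. Since $z_{i,t}$ is $iid$ with unit variance and independent of the past (and hence of $\lambda_{i,t}$ and its derivatives, which are functions of $X_{t-1},X_{t-2},\dots$), iterated expectations give $E[(1-z_{i,t}^2)\,\cdot\,]=0$ for the Hessian-of-$\lambda$ term and $E[(2z_{i,t}^2-1)\,\cdot\,]=E[\,\cdot\,]$ for the outer-product term, exactly as in the score identity \eqref{eq:foc_kappa0}. This yields the clean expression
\begin{align*}
J_0^{(i)} = E\!\left[\frac{1}{\lambda_{i,t}^2(\theta_0^{(i)})}\frac{\partial\lambda_{i,t}(\theta_0^{(i)})}{\partial\kappa^{(i)}}\frac{\partial\lambda_{i,t}(\theta_0^{(i)})}{\partial\kappa^{(i)\prime}}\right],
\end{align*}
which is manifestly positive semi-definite and finite by Lemma \ref{lemma:norm2}. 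Consequently $J_0^{(i)}$ is non-singular if and only if there is no $c\neq0$ with $c'\,\partial\lambda_{i,t}(\theta_0^{(i)})/\partial\kappa^{(i)}=0$ almost surely, since $\lambda_{i,t}>0$.

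The remaining and hardest step is to rule out such a degeneracy. Writing $c=[c_1,\dots,c_p,c_{p+1}]'$ for the coefficients on $(a_{i1},\dots,a_{ip},b_i)$ and using the ARCH$(\infty)$ expansions \eqref{eq:lambda_der_a}--\eqref{eq:lambda_der_b}, the linear form $c'\partial\lambda_{i,t}/\partial\kappa^{(i)}$ is a constant plus an infinite linear combination of the lagged squared rotated returns $y_{j,t-l-1}^2$. I would match coefficients lag by lag: at lag one only the $a_{ij}$-derivatives contribute, the coefficient of $y_{j,t-1}^2$ being exactly $c_j$, so almost-sure vanishing forces $c_1=\dots=c_p=0$; the surviving term $c_{p+1}\sum_{j=1}^p a_{0,ij}y_{j,t-2}^2$ then forces $c_{p+1}=0$ provided row $i$ of $A_0$ is non-zero, which holds under the identification content of Assumption \ref{assumption3}. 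The genuine obstacle is justifying the coefficient-matching rigorously, as it presupposes that the lagged squared rotated returns (together with the constant) are not almost surely linearly dependent. I would secure this either directly, exploiting that $z_{i,t}^2$ is non-degenerate so that distinct lag structures cannot cancel in probability, or by appealing to the injectivity of $\kappa^{(i)}\mapsto\lambda_{i,t}(\gamma_0,\kappa^{(i)})$ guaranteed by Assumption \ref{assumption3}, which upgrades first-order identification to full rank of the derivative at $\theta_0^{(i)}$. Establishing this step completes the proof that $J_0^{(i)}$ is positive definite, and hence non-singular.
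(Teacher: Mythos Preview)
Your argument is correct. You first reduce $J_0^{(i)}$ to the outer-product form $E\big[\lambda_{i,t}^{-2}\,\partial\lambda_{i,t}/\partial\kappa^{(i)}\,(\partial\lambda_{i,t}/\partial\kappa^{(i)})'\big]$ by exploiting $\partial y_{i,t}/\partial\kappa^{(i)}=0$ and the law of iterated expectations, and then rule out a non-trivial null vector by matching coefficients in the ARCH$(\infty)$ expansion of $\lambda_{i,t}$; the lag-one step isolates $c_1,\dots,c_p$, the lag-two step isolates $c_{p+1}$, and the non-degeneracy of $z_{i,t}^2$ together with the identification Assumption \ref{assumption3} (which rules out a zero $i$th row of $A_0$) closes the argument.

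The paper takes a much shorter route: it simply observes that, for the $i$th rotated return, $J_0^{(i)}$ coincides with the Hessian of a univariate (extended) GARCH model and invokes Lemma~5.7 of Berkes, Horv\'ath and Kokoszka (2003). Your approach is thus a direct, self-contained unpacking of that citation, adapted to the present extended specification with cross-regressors $y_{j,t-1}^2$, $j\neq i$. This buys something: the Berkes--Horv\'ath--Kokoszka result is stated for a standard univariate GARCH$(p,q)$, so the paper's appeal to it leaves implicit the (straightforward but not literally covered) extension to the augmented recursion \eqref{eq:lambda_i}. Your explicit coefficient-matching handles exactly that extension and makes transparent where the identification assumption enters. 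Conversely, the paper's proof is more economical and avoids re-deriving a classical GARCH lemma.
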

\begin{proof}
	$J_0^{(i)}$ is identical to the Hessian of a univariate (extended) GARCH model (for the rotated returns). The non-singularity of $J_0^{(i)}$ therefore follows from \citet{berkes2003} Lemma 5.7. 

\end{proof}

\begin{lemma}[Asymptotic irrelevance of (fixed) initial values]
	Under Assumptions \ref{assumption1}-\ref{assumption5}, 
	\begin{align}
		&\left|\frac{1}{\sqrt{T}}\sum_{t=1}^T\left(\frac{\partial l_t^{(i)}(\theta_0^{(i)})}{\partial \kappa_n^{(i)}}- \frac{\partial l_{t,h}^{(i)}(\theta_0^{(i)})}{\partial \kappa_n^{(i)}}\right)\right|\overset{p}{\rightarrow}0 \label{eq:initial_val_1}
	\end{align}
	for $n = 1,\hdots, p+1$, and
	\begin{align}
		\sup_{\theta^{(i)}\in\Theta^{(i)}}&\left|\frac{1}{{T}}\sum_{t=1}^T\left(\frac{\partial^2 l_t^{(i)}(\theta^{(i)})}{\partial \theta_n^{(i)}\partial \theta_m^{(i)}}- \frac{\partial^2 l_{t,h}^{(i)}(\theta^{(i)})}{\partial \theta_n^{(i)}\partial \theta_m^{(i)}}\right)\right|\overset{p}{\rightarrow}0, \label{eq:initial_val_2}
	\end{align}
	for $n,m=1,\hdots, p(p+1)/2+p+1$.
	\label{lemma:norm5}
\end{lemma}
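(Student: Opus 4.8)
The plan is to exploit the geometric decay of the gap between the stationary filter $\lambda_{i,t}(\theta^{(i)})$ and the initialized filter $\lambda_{i,t,h}(\theta^{(i)})$ recorded in \eqref{eq:lambda_app_diff}, namely $\lambda_{i,t}(\theta^{(i)})-\lambda_{i,t,h}(\theta^{(i)})=b_i^t(\lambda_{i,0}(\theta^{(i)})-\lambda_{i,0,h}(\theta^{(i)}))$, together with the uniform bound $\sup_{\kappa^{(i)}\in\mathcal{K}^{(i)}}|b_i^t|\le K\phi^t$ already used in Lemma \ref{lemma:con4}. The key preliminary step is to establish the analogous geometric bounds for the first and second derivatives, i.e. that the differences of $\partial\lambda_{i,t}/\partial\theta^{(i)}$ and $\partial^2\lambda_{i,t}/\partial\theta^{(i)}\partial\theta^{(i)\prime}$ from their initialized counterparts are bounded, uniformly on $\Theta^{(i)}$, by $K\phi^t(1+||X_t||^2)$ up to a polynomial factor in $t$ coming from the $\partial/\partial b_i$ derivative in \eqref{eq:lambda_der_b}; since $t\phi^t$ remains summable this is harmless. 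These bounds follow by differentiating the ARCH$(\infty)$ representation term by term and using that $y_{i,t}(\gamma)=V_i'X_t$ carries no dependence on the initial values.

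For \eqref{eq:initial_val_1}, I would expand the difference of score contributions via \eqref{eq:loglik_app_1der}; because $y_{i,t}$ is static, only the factors $1/\lambda_{i,t}$, $y_{i,t}^2/\lambda_{i,t}$ and $\partial\lambda_{i,t}/\partial\kappa_n$ carry initialization. Adding and subtracting, bounding $1/\lambda_{i,t}\le K$ by \eqref{eq:l_invers} and the ratios $\lambda_{i,t}^{-1}\partial\lambda_{i,t}/\partial\kappa_n$ via Lemma \ref{lemma:norm2}, the per-term difference is dominated by $K\phi^t W_t$ with $W_t$ a stationary sequence satisfying $E[W_t]\le K$ under Assumptions \ref{assumption1}--\ref{assumption5}. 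Markov's inequality then gives, for any $\epsilon>0$,
\begin{align*}
P\!\left(\Big|\frac{1}{\sqrt{T}}\sum_{t=1}^T\Big(\tfrac{\partial l_t^{(i)}(\theta_0^{(i)})}{\partial\kappa_n^{(i)}}-\tfrac{\partial l_{t,h}^{(i)}(\theta_0^{(i)})}{\partial\kappa_n^{(i)}}\Big)\Big|>\epsilon\right)\le \frac{K}{\epsilon\sqrt{T}}\sum_{t=1}^\infty\phi^t\to 0,
\end{align*}
since $\sum_{t=1}^\infty\phi^t<\infty$, which proves the claim.

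For \eqref{eq:initial_val_2} the normalization is $1/T$ and the statement is uniform, so I would mirror the final part of Lemma \ref{lemma:con4}. Expanding \eqref{eq:loglik_app_2der} and using the derivative bounds above with $1/\lambda_{i,t}\le K$, the supremum over $\Theta^{(i)}$ of the per-term difference of second derivatives is dominated by $K\phi^t(1+||X_t||^2)$, whence
\begin{align*}
\sup_{\theta^{(i)}\in\Theta^{(i)}}\Big|\frac{1}{T}\sum_{t=1}^T\Big(\tfrac{\partial^2 l_t^{(i)}}{\partial\theta_n^{(i)}\partial\theta_m^{(i)}}-\tfrac{\partial^2 l_{t,h}^{(i)}}{\partial\theta_n^{(i)}\partial\theta_m^{(i)}}\Big)\Big|\le \frac{K}{T}\sum_{t=1}^T\phi^t(1+||X_t||^2)\quad\text{a.s.}
\end{align*}
Markov's inequality and Borel--Cantelli give $\phi^t||X_t||^2\overset{a.s.}{\rightarrow}0$ using $E||X_t||^2<\infty$ from Assumption \ref{assumption1}, and Cesàro's mean theorem then yields that the right-hand side converges to zero almost surely, hence in probability.

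The main obstacle will be the preliminary step of verifying the uniform geometric bounds on the derivative differences, and in particular keeping the $1/\lambda_{i,t}$ factors under control so that the dominating sequences have finite expectation; this is where Lemma \ref{lemma:norm2} and the fractional-moment inequality $z/(1+z)\le z^s$ do the real work. Once those bounds are in hand, both convergence statements reduce to the summability of $\sum\phi^t$ against the $\sqrt{T}$ (respectively $T$) normalization, exactly as in Lemma \ref{lemma:con4}.
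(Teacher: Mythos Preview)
Your strategy coincides with the paper's: exploit the geometric decay \eqref{eq:lambda_app_diff}, differentiate it to get the analogous bounds on $\partial\lambda_{i,t}/\partial\theta^{(i)}-\partial\lambda_{i,t,h}/\partial\theta^{(i)}$ and the second derivatives, then dominate the score and Hessian differences by $\phi^t$ times a stationary sequence and close with Markov. For \eqref{eq:initial_val_1} your argument matches the paper's essentially line by line.

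For \eqref{eq:initial_val_2}, however, the dominating bound you write down is too tight. You claim a uniform pointwise bound $K\phi^t(1+\|X_t\|^2)$ for the Hessian difference, but the expansion \eqref{eq:loglik_app_2der} contains the term
\[
\Big(2\frac{y_{i,t}^2}{\lambda_{i,t}}-1\Big)\frac{1}{\lambda_{i,t}^2}\frac{\partial\lambda_{i,t}}{\partial\theta_n^{(i)}}\frac{\partial\lambda_{i,t}}{\partial\theta_m^{(i)}},
\]
whose difference from the initialized version is (after add--subtract) of order $K\phi^t\,y_{i,t}^2\cdot\big|\lambda_{i,t}^{-1}\partial\lambda_{i,t}/\partial\theta_n^{(i)}\big|\cdot\big|\lambda_{i,t}^{-1}\partial\lambda_{i,t}/\partial\theta_m^{(i)}\big|$. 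The ratios $\lambda_{i,t}^{-1}\partial\lambda_{i,t}/\partial\theta_n^{(i)}$ are \emph{not} bounded by a deterministic constant (for $\theta_n^{(i)}=b_i$ the bound in \eqref{eq:ratio_bounded3} is $K\sum_{l}l\phi^{ls}y_{j,t-l-1}^{2s}$, which is random), so the product with $y_{i,t}^2\le K\|X_t\|^2$ cannot be reduced to $K\phi^t(1+\|X_t\|^2)$ pointwise. The paper accordingly bounds the Hessian difference by $K\phi^t(1+\|X_t\|^2+\|X_t\|^4)$ and closes via Markov using Assumption~\ref{assumption4} ($E\|X_t\|^4<\infty$); your appeal to only $E\|X_t\|^2<\infty$ from Assumption~\ref{assumption1} does not suffice here. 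Once you replace your bound by one whose expectation is finite under Assumption~\ref{assumption4}, your Borel--Cantelli/Ces\`aro route (which yields a.s.\ convergence, slightly stronger than the paper's Markov argument) goes through unchanged.
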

\begin{proof}
	Consider first \eqref{eq:initial_val_1} concerning the elements of the score vector. By \eqref{eq:loglik_app_1der}, the triangle inequality and $\sup_{\theta^{(i)}\in \Theta^{(i)}}|y_{i,t}|=\sup_{\gamma\in \mathcal{H}}|V_i'(\gamma)X_t|\leq K||X_t||$, 
	\begin{align*}
		&\left|\frac{1}{\sqrt{T}}\sum_{t=1}^T\left(\frac{\partial l_t^{(i)}(\theta_0^{(i)})}{\partial \kappa_n^{(i)}}- \frac{\partial l_{t,h}^{(i)}(\theta_0^{(i)})}{\partial \kappa_n^{(i)}}\right)\right| 
	\leq 
	 \\
	 &
	 \frac{1}{\sqrt{T}}\sum_{t=1}^T\Bigg|\frac{1}{\lambda_{i,t}(\theta_0^{(i)})}\frac{\partial\lambda_{i,t}(\theta_0^{(i)})}{\partial\kappa^{(i)}_n}-\frac{1}{\lambda_{i,t,h}(\theta_0^{(i)})}\frac{\partial\lambda_{i,t,h}(\theta_0^{(i)})}{\partial\kappa^{(i)}_n}\Bigg| +
	 \\
	 & K\frac{1}{\sqrt{T}}\sum_{t=1}^T||X_t||^2\Bigg|\frac{1}{\lambda_{i,t}(\theta_0^{(i)})}\frac{\partial\lambda_{i,t}(\theta_0^{(i)})}{\partial\kappa^{(i)}_n}-\frac{1}{\lambda_{i,t,h}(\theta_0^{(i)})}\frac{\partial\lambda_{i,t,h}(\theta_0^{(i)})}{\partial\kappa^{(i)}_n}
	 \Bigg|
	\end{align*}
	Following \citet{francq2011} (p. 649), 
	\begin{align}
		&\sup_{\theta^{(i)}\in\Theta^{(i)}}\left|\frac{\partial\lambda_{i,t}(\theta^{(i)})}{\partial\theta^{(i)}_n}-\frac{\partial\lambda_{i,t,h}(\theta^{(i)})}{\partial\theta^{(i)}_n}\right|= \notag \\ &\sup_{\theta^{(i)}\in\Theta^{(i)}}\left|\frac{\partial b_i^{t}}{\partial \theta^{(i)}_n}\left(\lambda_{i,0}(\theta^{(i)})-\lambda_{i,0,h}(\theta^{(i)})\right)+b_i^t\frac{\partial\left(\lambda_{i,0}(\theta^{(i)})-\lambda_{i,0,h}(\theta^{(i)})\right)}{\partial\theta^{(i)}_n}\right|\leq 
		K\phi^t. 
		\label{eq:lemma9_1}
	\end{align}
	as both $b_i^t\overset{ }{\rightarrow} 0$ and $\partial b_i^t/\partial b_i\overset{ }{\rightarrow} 0$ for $t\rightarrow\infty$. Furthermore, notice that
	\begin{align}
	 	\sup_{\theta^{(i)}\in\Theta^{(i)}}\left|\frac{1}{\lambda_{i,t}(\theta^{(i)})}-\frac{1}{\lambda_{i,t,h}(\theta^{(i)})}\right|\leq\sup_{\theta^{(i)}\in\Theta^{(i)}}\left|\frac{1}{w_i}-\frac{1}{w_i}\right|\leq K.\label{eq:lemma9_2}
	\end{align} 
	Using \eqref{eq:lemma9_1}-\eqref{eq:lemma9_2}, we find that \ref{eq:initial_val_1} can be bounded as,
	\begin{align*}
	& \Bigg|\frac{1}{\sqrt{T}}\sum_{t=1}^T\Bigg(\frac{\partial l_t^{(i)}(\theta_0^{(i)})}{\partial \kappa_n^{(i)}}- \frac{\partial l_{t,h}^{(i)}(\theta_0^{(i)})}{\partial \kappa_n^{(i)}}\Bigg)\Bigg| 
	 \leq K\frac{1}{\sqrt{T}}\sum_{t=1}^T\phi^t\left(1+||X_t||^2 \right). 
	\end{align*}
	As $\sum_{t=1}^T\phi^t\rightarrow(1-\phi)^{-1}$ it holds that $KT^{-1/2}\sum_{t=1}^T\phi_t\rightarrow 0$ for $T\rightarrow \infty$, and by the Markov inequality for $\epsilon>0$,
	\begin{align*}
		P\left(\frac{1}{\sqrt{T}}\sum_{t=1}^T\phi^t(1+||X_t||^2)>\epsilon\right)\leq \epsilon^{-1}(1+E\left|\left|X_t\right|\right|^2)\frac{1}{\sqrt{T}}\sum_{t=1}^T\phi^t\overset{p}{\rightarrow} 0,
	\end{align*}
	as $E||X_t||^2<\infty$ by Assumption \ref{assumption1}. This proves the first statement of the lemma.

	Next, we consider the requirement in \eqref{eq:initial_val_2}. 	The second order derivative of the log-likelihood function is given in \eqref{eq:loglik_app_2der}, and to show that the expression in \eqref{eq:initial_val_2} converges to zero in probability we need three additional results. First,
	\begin{align}
		&\sup_{\theta^{(i)}\in\Theta^{(i)}}\left|\frac{\partial^2\lambda_{i,t}(\theta^{(i)})}{\partial\theta^{(i)}_n\partial\theta^{(i)}_m}-\frac{\partial^2\lambda_{i,t,h}(\theta^{(i)})}{\partial\theta^{(i)}_n\partial\theta^{(i)}_m}\right|= \notag 
		\\ 
		&\sup_{\theta^{(i)}\in\Theta^{(i)}}\Bigg|\frac{\partial^2 b_i^{t}}{\partial \theta^{(i)}_n\partial \theta^{(i)}_m}\left(\lambda_{i,0}(\theta^{(i)})-\lambda_{i,0,h}(\theta^{(i)})\right)+b_i^t\frac{\partial^2\left(\lambda_{i,0}(\theta^{(i)})-\lambda_{i,0,h}(\theta^{(i)})\right)}{\partial\theta^{(i)}_n\partial\theta^{(i)}_m}
		\notag \\
		&
		\ \ \ \ \ \ \ \ \ \ \ \  +\frac{\partial b_i^t}{\partial\theta_n^{(i)}}\frac{\partial\left(\lambda_{i,0}(\theta^{(i)})-\lambda_{i,0,h}(\theta^{(i)})\right)}{\partial\theta^{(i)}_m}
		+\frac{\partial b_i^t}{\partial\theta_m^{(i)}}\frac{\partial\left(\lambda_{i,0}(\theta^{(i)})-\lambda_{i,0,h}(\theta^{(i)})\right)}{\partial\theta^{(i)}_n}
		\Bigg|
		\leq 
		K\phi^t. 
		\label{eq:lemma9_3}
	\end{align}
	Second, by \eqref{eq:ratio_bounded1}-\eqref{eq:ratio_bounded3},
	\begin{align}
		\sup_{\theta^{(i)}\in\Theta^{(i)}}\left|\frac{1}{\lambda_{i,t}(\theta^{(i)})}\frac{\partial\lambda_{i,t}(\theta^{(i)})}{\partial\theta^{(i)}_n}\right|\leq K, 
		& \ \ \ \  
		\sup_{\theta^{(i)}\in\Theta^{(i)}}\left|\frac{1}{\lambda_{i,t,h}(\theta^{(i)})}\frac{\partial\lambda_{i,t,h}(\theta^{(i)})}{\partial\theta^{(i)}_n}\right|\leq K,
		\label{eq:lemma9_4new}
	\end{align}
	Third,
	\begin{align}
		\sup_{\theta^{(i)}\in\Theta^{(i)}}\left|\frac{1}{\lambda_{i,t}(\theta^{(i)})}-\frac{1}{\lambda_{i,t,h}(\theta^{(i)})}\right|=
		\sup_{\theta^{(i)}\in\Theta^{(i)}}\left|\frac{1}{\lambda_{i,t}(\theta^{(i)})}(\lambda_{i,t}(\theta^{(i)})-\lambda_{i,t,h}(\theta^{(i)}))\frac{1}{\lambda_{i,t,h}(\theta^{(i)})}\right|\leq K\phi^t.
		\label{eq:lemma9_5new}
	\end{align}
	Hence, by \eqref{eq:lemma9_1}-\eqref{eq:lemma9_5new} along with the triangle inequality, \eqref{eq:initial_val_2} can be bounded as,
	\begin{align*}
		&\sup_{\theta^{(i)}\in\Theta^{(i)}}\left|\frac{1}{{T}}\sum_{t=1}^T\left(\frac{\partial^2 l_t^{(i)}(\theta^{(i)})}{\partial \theta_n^{(i)}\partial \theta_m^{(i)}}- \frac{\partial^2 l_{t,h}^{(i)}(\theta^{(i)})}{\partial \theta_n^{(i)}\partial \theta_m^{(i)}}\right)\right| \leq
		K\frac{1}{T}\sum_{t=1}^T\phi^t\left(1+||X_t||^2+||X_t||^{4}\right),
	\end{align*}
	which by Markov's inequality converges to zero in probability,
	\begin{align*}
		P\left(K\frac{1}{T}\sum_{t=1}^T\phi^t\left(1+||X_t||^2+||X_t||^4\right)>\epsilon\right)\leq\epsilon^{-1}\left(K(1+E[||X_t||^2]+E[||X_t||^4])\frac{1}{T}\sum_{t=1}^T\phi^t\right)\overset{p}{\rightarrow} 0,
	\end{align*}
	which concludes the proof, and we conclude that the (fixed) initial values used in the estimation do not matter for $T\rightarrow \infty$.

\end{proof}

\section{Stationarity, ergodicity and existence of moments}
\label{appendix:stat_erg_mom}
The following two lemmata provide sufficient and necessary conditions for strict stationarity, ergodicity, and finite moments for the multivariate GARCH model and are both stated without proofs. They were originally stated in the context of the extended CCC model but are also applicable in the present model, as the $\lambda$-GARCH, conditional on the eigenvectors, is an extended CCC model for the rotated returns.

Rewrite the process of the eigenvalues as a stochastic recurrence equation,
\begin{align*}
 	\lambda_t = W+\mathcal{A}_{t-1}\lambda_{t-1},
\end{align*} 
where $\mathcal{A}_{t-1}=A\text{ diag}\left(Z_{t-1}^{\odot2}\right)+B$ is an $iid$ $p\times p$ sequence for $t\in\mathbb{Z}$.

\begin{lemma}[\citet{francq2019} Theorem 10.6]
\label{lemma:stationarity}
	A necessary and sufficient condition for the existence of a unique, non-anticipative, strictly stationary and ergodic solution to the process $(X_t: t\in\mathbb{Z})$ is $\gamma<0$,  with $\gamma$ defined as the top Lyapunov coefficient, $\gamma = \underset{ t \rightarrow \infty}{\lim}\frac{1}{t}E\left[\log||\prod_{i=1}^t \mathcal{A}_i||\right]$
\end{lemma}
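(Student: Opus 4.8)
The plan is to treat the eigenvalue recursion $\lambda_t = W+\mathcal{A}_{t-1}\lambda_{t-1}$ as a first-order affine stochastic recurrence equation (SRE) driven by the $iid$ matrix sequence $\{\mathcal{A}_t\}_{t\in\mathbb{Z}}$, and to invoke the general theory for such equations due to Brandt (1986) and Bougerol and Picard (1992), exactly as done for the extended CCC model in \citet{francq2019}. First I would verify the integrability conditions required by that theory. Since $W$ is a fixed vector, $E[\log^+\|W\|]<\infty$ holds trivially; since $Z_t$ has unit variance and $A,B$ are finite nonnegative matrices, $E[\log^+\|\mathcal{A}_0\|]\leq E\|\mathcal{A}_0\|<\infty$. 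These log-moment conditions, together with the $iid$ structure, ensure by Kingman's subadditive ergodic theorem (Furstenberg--Kesten) that the limit defining $\gamma$ exists, is deterministic, and equals $\inf_{t}\tfrac{1}{t}E[\log\|\prod_{i=1}^t\mathcal{A}_i\|]$, so the statement is well posed.

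For the sufficiency direction, assuming $\gamma<0$, I would exhibit the candidate stationary solution as the almost surely convergent backward series
\begin{align*}
	\lambda_t = W + \sum_{k=1}^\infty \mathcal{A}_{t-1}\mathcal{A}_{t-2}\cdots\mathcal{A}_{t-k}\,W.
\end{align*}
Convergence follows from the Cauchy root test: by the definition of the top Lyapunov coefficient, $\limsup_{k\to\infty}\|\mathcal{A}_{t-1}\cdots\mathcal{A}_{t-k}\|^{1/k}=e^{\gamma}<1$ almost surely, so the summands decay geometrically. This series is a measurable function of $\{Z_{t-1},Z_{t-2},\dots\}$, hence non-anticipative; it is strictly stationary because the law of the summands is shift-invariant under the $iid$ driving sequence; and it is ergodic as a measurable image of an $iid$ (hence ergodic) sequence. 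Substituting back confirms that it solves the SRE, and a standard contraction argument shows that any other strictly stationary solution agrees with it almost surely, giving uniqueness.

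For the necessity direction, assuming a strictly stationary solution exists, I would argue by contradiction following Bougerol and Picard: stationarity forces the matrix products $\mathcal{A}_{t-1}\cdots\mathcal{A}_{t-k}\lambda_{t-k}$ to vanish in the appropriate sense as $k\to\infty$, which in turn forces $\gamma<0$; the entrywise nonnegativity of $A$, $B$ and $W$ (together with $W$ strictly positive) keeps the non-commuting products well controlled and makes this step clean. Hence $\gamma\geq0$ rules out the existence of a stationary solution.

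Finally I would lift the conclusion from $\{\lambda_t\}$ to $\{X_t\}$: once $\{\lambda_t\}$ is strictly stationary and ergodic, $H_t=V\Lambda_t V'$ and therefore $X_t=V\Lambda_t^{1/2}Z_t$ is a measurable function of $\{Z_t,Z_{t-1},\dots\}$, so $\{X_t\}$ inherits both properties. The hard part will be the necessity direction together with the almost-sure convergence of the random matrix-product series: passing rigorously from the deterministic Lyapunov exponent $\gamma$ to almost-sure geometric control of the non-commuting products $\mathcal{A}_{t-1}\cdots\mathcal{A}_{t-k}$ is the technically delicate step, and is precisely why this result is quoted from \citet{francq2019} rather than reproved here.
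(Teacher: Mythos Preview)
Your sketch is correct and follows the standard Brandt/Bougerol--Picard route for affine stochastic recurrence equations, which is precisely the argument used in the cited reference \citet{francq2019}. Note, however, that the paper itself gives \emph{no} proof of this lemma: Appendix~\ref{appendix:stat_erg_mom} explicitly states that Lemmata~\ref{lemma:stationarity} and~\ref{lemma:moments} ``are both stated without proofs'' and simply imports the result from \citet{francq2019} Theorem~10.6, so your proposal in fact goes further than the paper does.
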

Notice that Lemma \ref{lemma:stationarity} only ensures the existence of fractional moments, $E||X_t||^s<\infty$, $0<s<1$. We next restate a result from 
\citet{Pedersen2017b} (Proposition 2.1), which contain necessary and sufficient conditions for finite (non-fractional) moments.
\begin{lemma}[\citet{Pedersen2017b} Proposition 2.1]
\label{lemma:moments}
	Let $(X_t: t\in\mathbb{Z})$ denote a strictly stationary and ergodic process. Then $E\left[ ||X_t^{\odot2}||^k \right]<\infty$, $k\in\mathbb{Z}$ if and only if $\rho\left(E\left[\mathcal{A}_{t-1}^{\otimes k}\right]\right)<1$.
\end{lemma}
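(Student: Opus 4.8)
Since this lemma merely restates \citet{Pedersen2017b} (Proposition 2.1), and since the $\lambda$-GARCH conditional on the eigenvectors is an extended CCC model for the rotated returns, the plan is to reduce the claim about $X_t^{\odot2}$ to an equivalent statement about the conditional eigenvalue vector $\lambda_t$, and then analyse the $k$-th order moment of the stochastic recurrence equation $\lambda_t = W+\mathcal{A}_{t-1}\lambda_{t-1}$. By Lemma \ref{lemma:stationarity} the process admits a unique, non-anticipative, strictly stationary and ergodic solution, so $\lambda_{t-1}$ is a measurable function of $\{Z_{t-1-j}\}_{j\geq0}$ and is therefore independent of $\mathcal{A}_{t-1}$; this independence is precisely what makes the moment recursion tractable.

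First I would reduce moments of $X_t^{\odot2}$ to moments of $\lambda_t$. Up to the choice of norm, $||X_t^{\odot2}||$ is equivalent to $||X_t||^2$, and by orthonormality of $V$ the rotated returns satisfy $||X_t||^2=||Y_t||^2=\sum_{b=1}^p\lambda_{b,t}z_{b,t}^2$. Conditioning on the past and using that $Z_t$ is $iid$ and independent of $\mathcal{F}_{t-1}$, the quantity $E[||X_t||^{2k}\mid\mathcal{F}_{t-1}]$ is a nonnegative linear combination of the entries of $\lambda_t^{\otimes k}$, with coefficients given by even moments of $Z_t$ of order up to $2k$ (these are finite exactly when $E[\mathcal{A}_{t-1}^{\otimes k}]$ is, as both presuppose $E[z_{i,t}^{2k}]<\infty$). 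Since the eigenvalues are almost surely strictly positive, every such entry is nonnegative, so $E[||X_t^{\odot2}||^k]<\infty$ if and only if $E[\lambda_t^{\otimes k}]$ is finite entrywise. The claim thus becomes $E[\lambda_t^{\otimes k}]<\infty\iff\rho(E[\mathcal{A}_{t-1}^{\otimes k}])<1$.

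Next I would derive the moment recursion. Taking the $k$-fold Kronecker power of the recurrence and using the mixed-product rule $(\mathcal{A}_{t-1}\lambda_{t-1})^{\otimes k}=\mathcal{A}_{t-1}^{\otimes k}\lambda_{t-1}^{\otimes k}$, the multilinear expansion of $(W+\mathcal{A}_{t-1}\lambda_{t-1})^{\otimes k}$ has a leading term $\mathcal{A}_{t-1}^{\otimes k}\lambda_{t-1}^{\otimes k}$ together with $2^k-1$ lower-order terms mixing copies of $W$ and of $\mathcal{A}_{t-1}\lambda_{t-1}$. Taking expectations, invoking independence of $\mathcal{A}_{t-1}$ and $\lambda_{t-1}$, and using strict stationarity ($E[\lambda_t^{\otimes k}]=E[\lambda_{t-1}^{\otimes k}]=:m_k$), I obtain a linear fixed-point equation
\begin{align*}
\left(I-E[\mathcal{A}_{t-1}^{\otimes k}]\right)m_k = c_k,
\end{align*}
where $c_k$ is a nonnegative vector built from $E[\mathcal{A}_{t-1}^{\otimes j}]$ and the lower moments $m_1,\dots,m_{k-1}$, which are finite by an induction on $k$ whose base case $k=1$ reduces to $(I-A-B)m_1=W$ under $\rho(A+B)<1$. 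Because $E[\mathcal{A}_{t-1}^{\otimes k}]$ has nonnegative entries, $\rho(E[\mathcal{A}_{t-1}^{\otimes k}])<1$ is equivalent to invertibility of $I-E[\mathcal{A}_{t-1}^{\otimes k}]$ with Neumann series $\sum_{n\geq0}(E[\mathcal{A}_{t-1}^{\otimes k}])^n$ having nonnegative entries, which yields a finite nonnegative $m_k$ and hence the ``if'' direction.

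The main obstacle is the converse. Here I would iterate the recurrence downward to write $\lambda_t^{\otimes k}$ as an infinite nonnegative series; keeping only the ``diagonal'' terms and taking expectations gives the lower bound $E[\lambda_t^{\otimes k}]\geq\sum_{n\geq0}(E[\mathcal{A}_{t-1}^{\otimes k}])^nW^{\otimes k}$, using that $(\prod_{j}\mathcal{A}_{t-j})^{\otimes k}=\prod_j\mathcal{A}_{t-j}^{\otimes k}$ and the $iid$ property. If $\rho(E[\mathcal{A}_{t-1}^{\otimes k}])\geq1$, then by Perron--Frobenius applied to the nonnegative matrix $E[\mathcal{A}_{t-1}^{\otimes k}]$ together with $W^{\otimes k}>0$, this series diverges, forcing $E[\lambda_t^{\otimes k}]=\infty$. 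Making the lower bound and the boundary case $\rho=1$ fully rigorous is the delicate step; as the result is quoted verbatim, I would ultimately appeal to \citet{Pedersen2017b} (Proposition 2.1) for the complete argument rather than reproduce it.
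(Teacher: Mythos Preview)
The paper states this lemma without proof, simply attributing it to \citet{Pedersen2017b} (Proposition 2.1); your proposal is therefore aligned with the paper's treatment, since you too ultimately defer to that reference, and your added sketch is a welcome supplement that is essentially correct in outline.

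One point worth tightening in the sufficiency direction: you invoke strict stationarity to set $E[\lambda_t^{\otimes k}]=E[\lambda_{t-1}^{\otimes k}]=:m_k$ and then solve the fixed-point equation $(I-E[\mathcal{A}_{t-1}^{\otimes k}])m_k=c_k$, but this presupposes that $m_k$ is finite, which is precisely what you are trying to establish. The standard remedy is to iterate the recurrence finitely many times from a deterministic initial value (or, equivalently, to work with the truncated version of the infinite nonnegative series you already wrote down for the converse), take expectations using the $iid$ structure, and pass to the limit via monotone convergence together with the Neumann-series bound $\sum_{n\geq0}(E[\mathcal{A}_{t-1}^{\otimes k}])^n<\infty$. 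Since you already have the correct series representation in your necessity argument, applying it in both directions closes this gap cleanly; and as you note, the complete argument is in the cited reference.
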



\section{Simulation study}
\label{appendix:simulation}
This appendix illustrates the theoretical results through simulations: we simulate the large-sample distribution of the STE in three cases:  In the first, we illustrate the sufficiency of finite fourth order moments, and show that both steps of the STE are consistent and asymptotically normal when $E||X_t||^4<\infty$. The second case considers the distribution of the STE when the data-generating process (DGP) does not admit finite fourth order moments, but rather has finite second order moments, $E||X_t||^2<\infty$, indicating that the STE should be consistent, but have a non-normal limiting distribution. Finally, the third simulation considers the STE when the DGP only admits a finite mean, $E||X_t||<\infty$.

\subsection{Case 1: The DGP satisfies the sufficient condition for asymptotic normality}
\begin{figure}[ht]
	\centering
	\includegraphics[width=\textwidth]{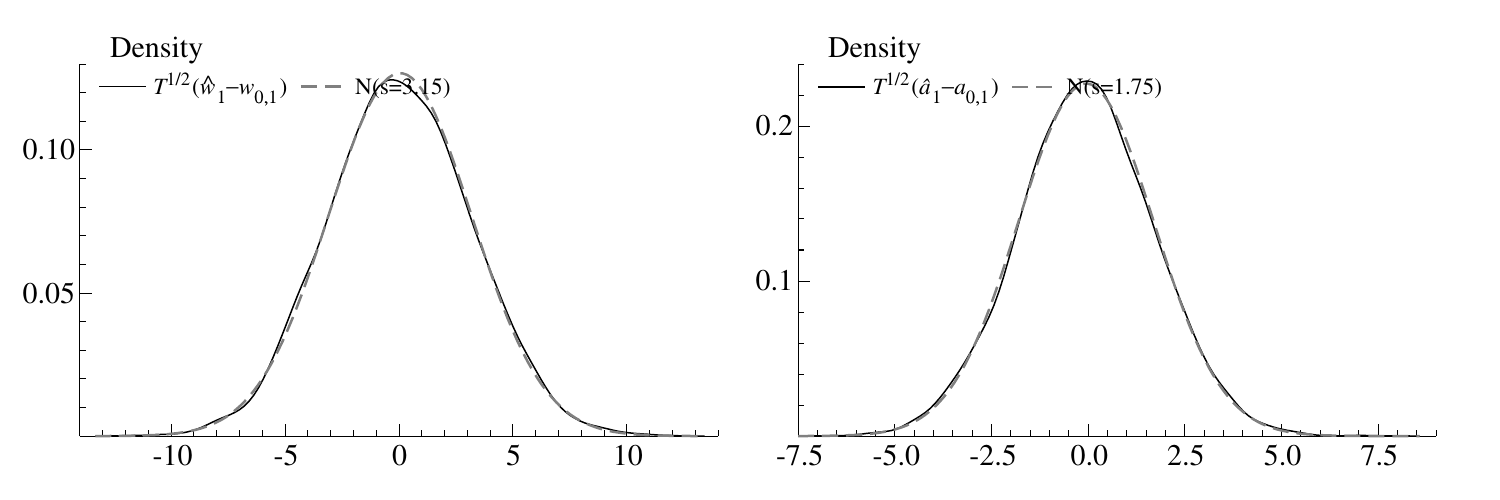}
	\caption{Densities of estimated parameters when $E||X_t||^4<\infty$. The solid line is the estimated density, and the grey dashed line is the normal distribution.	}
	\label{fig:sim1}
\end{figure}
Consider the bivariate $\lambda-$GARCH with Gaussian innovations,
\begin{align}
	X_t = V\Lambda_t^{1/2}Z_t, \ \ \eta_t \  iid \ N(0,I_2), \ \ \ \Lambda_t=\text{diag}(\lambda_t), \ \ \ \lambda_t=W+AY_{t-1}^{\odot2}+B\lambda_{t-1},
	\label{eq:sim1_1}
\end{align}
with parameters 
\begin{align}
	V_0 = 
	\begin{pmatrix}
		0.89 & 0.45 \\
       -0.45 & 0.89
	\end{pmatrix}, 	 \ \ \
	W_0 = \begin{pmatrix}
	1.5 \\
	0.46
	\end{pmatrix}, \ \ \ 
	A_0 =
	\begin{pmatrix}
		0.33 & 0 \\
		0 & 0.25
	\end{pmatrix}, \ \ \
	B_0=0_{2\times2},
	\label{eq:sim1_2}
\end{align}
such that $\rho(E[A\text{ diag}(Z_t^2)+B]^{\otimes2} )=\max \left(a^2_{ii}E[z_{i,t}^4]\right)<1$ for $i=1,2$. For $Z_t \ iid \ N(0,I_2)$ this corresponds to $\max(a_i)<1/\sqrt{3}$, and by Lemma \ref{lemma:moments} (with $k=2$), the stationary solution of the process has finite fourth order moments, and the moment restrictions of Theorem \ref{theorem:asympnorm} are satisfied. 

We simulate $N=10.000$ realizations of \eqref{eq:sim1_1}-\eqref{eq:sim1_2} with $T=10.000$ observations, and estimate $W$ and $A$ using STE. Figure \ref{fig:sim1} contains standardized densities 
of $w_1$ and $a_{11}$.  The figure suggests that the STE is indeed consistent and asymptotically normal, in line with the findings in Theorem \ref{theorem:asympnorm}. 

\subsection{Case 2: The DGP satisfies the sufficient condition for consistency}
Next, we consider the case where the DGP has finite second order moments, but does not admit finite fourth order moments. We consider \eqref{eq:sim1_1}, with parameters
\begin{align}
	V_0 = 
	\begin{pmatrix}
		0.89 & 0.45 \\
       -0.45 & 0.89
	\end{pmatrix}, 	 \ \ \
	W_0 = \begin{pmatrix}
	1.5 \\
	0.46
	\end{pmatrix}, \ \ \ 
	A_0 =
	\begin{pmatrix}
		0.60 & 0 \\
		0 & 0.55
	\end{pmatrix}, \ \ \
	B_0=0_{2\times2},
	\label{eq:sim2_2}
\end{align}
such that $\rho(E[A_0\text{ diag}(Z_t^2)+B_0])=\max \left(a^2_{0,ii}\right)<1$ for $i=1,2$. By Lemma \ref{lemma:moments} (with $k=1$), the stationary solution of the process admits finite second order moments, and the moment restrictions for asymptotic normality (Theorem \ref{theorem:asympnorm}) are not satisfied. However, by Theorem \ref{theorem:consistency}, the estimator should be consistent.

\begin{figure}[ht]
	\centering
	\includegraphics[width=\textwidth]{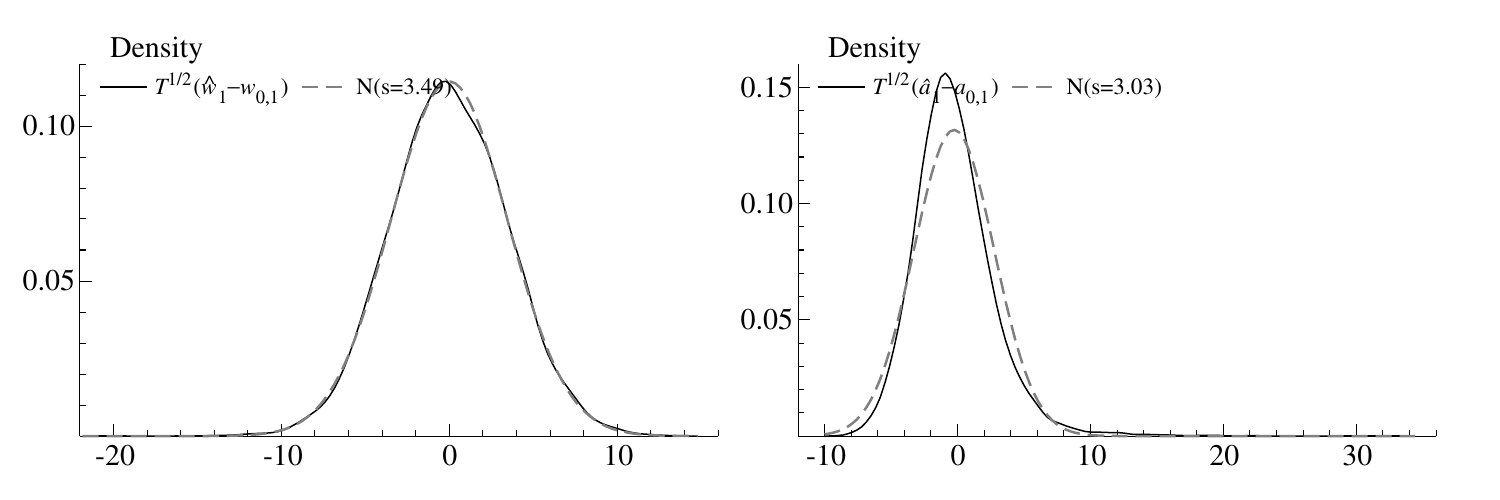}
	\caption{Densities of estimated parameters when $E||X_t||^4<\infty$. The solid line is the estimated density, and the grey dashed line is the normal distribution.	}
	\label{fig:sim2}
\end{figure}

We simulate $N=10.000$ realizations of \eqref{eq:sim1_1} and \eqref{eq:sim2_2} with $T=10.000$ observations, and estimate $W$ and $A$ using STE. Figure \ref{fig:sim2} contains standardized densities 
of $w_1$ and $a_{11}$.  The figure suggests that in this case, the estimator is indeed consistent, but not quite asymptotically normal. Surprisingly, the density of $w_1$ seem to behave almost like a normal distribution, albeit with a heavy left tail, whereas that of $a_{11}$ is clearly non-normal. This is similar to the findings of \citet{Pedersen2014}, who consider variance-targeting in the BEKK-model, and we conclude that $E||X_t||^4<\infty$ is a necessary condition for the joint normality of the ST estimator.

\subsection{Case 3: The DGP does not satisfy the sufficient condition for consistency}
Finally, we consider the case where the DGP only admits a finite mean, $E||X_t||<\infty$. Here we set parameter matrices to
\begin{align}
	V_0 = 
	\begin{pmatrix}
		0.89 & 0.45 \\
       -0.45 & 0.89
	\end{pmatrix}, 	 \ \ \
	W_0 = \begin{pmatrix}
	1.5 \\
	0.46
	\end{pmatrix}, \ \ \ 
	A_0 =
	\begin{pmatrix}
		1.01 & 0 \\
		0 & 0.90
	\end{pmatrix}, \ \ \
	B_0=0_{2\times2}.
	\label{eq:sim3_2}
\end{align}
As $\max(a_{0,ii})<\pi/2$ for $i=1,2$ the DGP is strictly stationary, ergodic and has a finite mean, but does not admit any higher order moments (by Lemma \ref{lemma:moments}). As before, we simulate $N=10.000$ realizations of \eqref{eq:sim1_1} and \eqref{eq:sim3_2} with $T=10.000$ observations, and estimate $W$ and $A$ using STE. Figure \ref{fig:sim3} contains standardized densities 
of $w_1$ and $a_{11}$.  Clearly, the estimator is neither consistent nor asymptotically normal when $E||X_t||^2=\infty$, and we conclude that the moment condition in Theorem \ref{theorem:consistency} is necessary for consistency of the estimator.

\begin{figure}[ht]
	\centering
	\includegraphics[width=\textwidth]{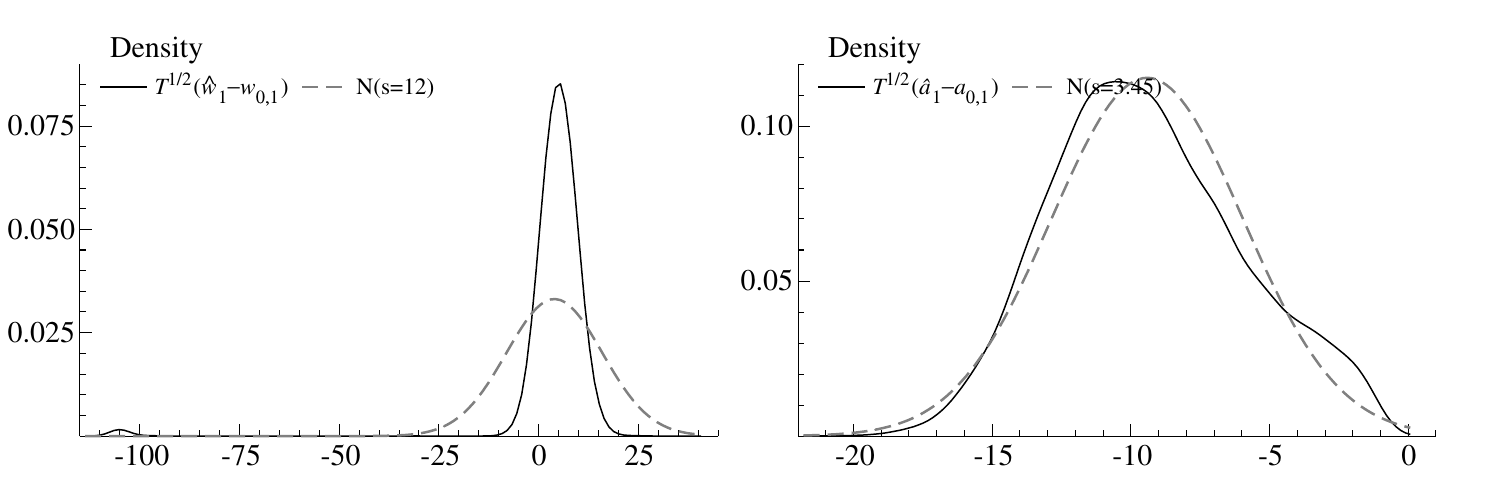}
	\caption{Densities of estimated parameters when $E||X_t||^4<\infty$. The solid line is the estimated density, and the grey dashed line is the normal distribution.	}
	\label{fig:sim3}
\end{figure}


\section{Empirical exercise: Portfolio constituents and weights}
\label{appendix:empirical_ex}
\begin{table}[!htb]
  \centering
  \caption{Portfolio constituents and weights}
    \footnotesize{
    \begin{tabular}{l|l|rrrrr}
    \hline
    \hline
    Bloomberg ticker & Company name & \multicolumn{5}{c}{Portfolio weights} \\
     &  & \multicolumn{1}{c}{$P_1$} & \multicolumn{1}{c}{$P_2$} & \multicolumn{1}{c}{$P_3$} & \multicolumn{1}{c}{$P_4$} & \multicolumn{1}{c}{$P_5$} \\
    \hline
    DIS US Equity & Walt Disney Co. &         0.040  &         0.016  &         0.010  &         0.068  & -      0.347  \\
    HD US Equity & Home Depot &         0.040  &         0.002  &         0.085  & -      0.082  & -      0.099  \\
    ABT US Equity & Abbott &         0.040  &         0.042  &         0.071  & -      0.195  & -      1.099  \\
    CVX US Equity & CV Sciences &         0.040  &         0.066  &         0.007  & -      0.038  & -      1.243  \\
    EXC US Equity & Exelon &         0.040  &         0.038  &         0.057  &         0.137  &         0.671  \\
    MCD US Equity & McDonalds &         0.040  &         0.044  &         0.048  & -      0.257  & -      0.122  \\
    MMM US Equity & 3M    &         0.040  &         0.001  &         0.101  & -      0.282  &         0.318  \\
    AAPL US Equity & Apple &         0.040  &         0.045  &         0.086  &         0.159  &         0.621  \\
    UNH US Equity & United Health Group &         0.040  &         0.086  &         0.072  & -      0.118  & -      0.210  \\
    TXN US Equity & Texas Instruments &         0.040  &         0.078  &         0.066  &         0.118  & -      0.342  \\
    JPM US Equity & JPMorgan Chase &         0.040  &         0.067  &         0.113  &         0.138  & -      0.856  \\
    IBM US Equity & IBM   &         0.040  &         0.010  &         0.118  &         0.255  &         0.760  \\
    DVN US Equity & Devon Energy &         0.040  &         0.082  &         0.040  & -      0.059  &         0.861  \\
    GD US Equity & General Dynamics &         0.040  &         0.078  &         0.029  & -      0.083  &         0.606  \\
    CPB US Equity & Campbell Soup Company &         0.040  &         0.015  &         0.095  &         0.180  &         0.970  \\
    PEP US Equity & PepsiCo &         0.040  &         0.058  &         0.008  &         0.156  &         0.383  \\
    MRK US Equity & Merck \& Co. &         0.040  &         0.002  &         0.043  &         0.090  &         0.507  \\
    NKE US Equity & NantKwest &         0.040  &         0.011  &         0.008  &         0.183  & -      0.794  \\
    COST US Equity & Costco &         0.040  &         0.029  &         0.038  &         0.084  &         0.692  \\
    T US Equity & AT\&T &         0.040  &         0.014  &         0.008  &         0.269  & -      0.429  \\
    RF US Equity & Regions Financial Corporation &         0.040  &         0.069  &         0.035  & -      0.098  & -      0.160  \\
    SLB US Equity & Schlumberger &         0.040  &         0.075  &         0.094  &         0.140  & -      0.468  \\
    PG US Equity & Procter \& Gamble &         0.040  &         0.032  &         0.108  &         0.265  &         0.177  \\
    HON US Equity & Honeywell &         0.040  &         0.029  &         0.094  & -      0.097  &         0.410  \\
    HPQ US Equity & Hewlett-Packard &         0.040  &         0.010  &         0.067  &         0.066  &         0.195  \\
    \hline
    \hline
    \end{tabular}%
    }
  \label{tab:pf_w}%
\end{table}%

\end{document}